\newtheorem{assumption}{Assumption}
\newtheorem{definition}{Definition}
\newtheorem{lemma}{Lemma}
\newtheorem{theorem}{Theorem}
\newtheorem{remark}{Remark}
\newcommand{\bm}{\boldsymbol}
\newcommand{\cm}[1]{\mbox{\boldmath$\mathscr{#1}$}}
\newcommand{\Fr}{{\mathrm{F}}}
\newcommand{\op}{{\mathrm{op}}}
\definecolor{turquoise}{rgb}{0.03, 0.7, 0.87}
\newcommand{\defeq}{\vcentcolon=}
\newcommand{\norm}[1]{\left\lVert#1\right\rVert}
\DeclareMathOperator*{\vectorize}{vec}
\DeclareMathOperator*{\argmin}{arg\,min}
\DeclareMathAlphabet\mathrsfso{U}{rsfso}{m}{n}
\newcommand{\vertiii}[1]{{\left\vert\kern-0.25ex\left\vert\kern-0.25ex\left\vert #1 
		\right\vert\kern-0.25ex\right\vert\kern-0.25ex\right\vert}}
\newcommand*{\addFileDependency}[1]{
\typeout{(#1)}
\@addtofilelist{#1}
\IfFileExists{#1}{}{\typeout{No file #1.}}
}
\newcommand*{\myexternaldocument}[1]{%
\externaldocument{#1}%
\addFileDependency{#1.tex}%
\addFileDependency{#1.aux}%
}
\numberwithin{equation}{section}
\title{An Efficient and Interpretable Autoregressive Model for High-Dimensional Tensor-Valued Time Series}
\author{
\centering
Yuxi Cai, Lan Li, Yize Wang, and Guodong Li \\
\textit{Department of Statistics and Actuarial Science, University of Hong Kong}
}
\begin{document}

\setlength{\droptitle}{-4.5em}
\maketitle
\begin{abstract}
	In autoregressive modeling for tensor-valued time series, Tucker decomposition, when applied to the coefficient tensor, provides a clear interpretation of supervised factor modeling but loses its efficiency rapidly with increasing tensor order. Conversely, canonical polyadic (CP) decomposition maintains efficiency but lacks a precise statistical interpretation.
	To attain both interpretability and powerful dimension reduction, this paper proposes a novel approach under the supervised factor modeling paradigm, which first uses CP decomposition to extract response and covariate features separately and then regresses response features on covariate ones. This leads to a new CP-based low-rank structure for the coefficient tensor. 
	Furthermore, to address heterogeneous signals or potential model misspecifications arising from stringent low-rank assumptions, a low-rank plus sparse model is introduced by incorporating an additional sparse coefficient tensor. 
	Nonasymptotic properties are established for the ordinary least squares estimators, and an alternating least squares algorithm is introduced for optimization. Theoretical properties of the proposed methodology are validated by simulation studies, and its enhanced prediction performance and interpretability are demonstrated by the  El Ni$\tilde{\text{n}}$o-Southern Oscillation example.  
	
\end{abstract}
\textit{Keywords}: autoregression, high-dimensional time series, nonasymptotic properties, sparsity, tensor decomposition

\newpage
\section{Introduction}
With the rapid advancement of information technology, multi-dimensional or tensor-valued data are now ubiquitous, and particularly, many of them are time-dependent. Examples can be found in various fields, such as neuroscience \citep{zhou2013tensor, li2021tensor}, digital marketing \citep{bi2018multilayer, hao2021sparse}, economics and finance \citep{chen2022factor,wang2024high} and climatology \citep{bahadori2014fast, xu2018muscat}. 
It is arguably the most important task to conduct supervised learning on these data, say tensor autoregression \citep{li2021multi, wang2022high} or more general tensor-on-tensor regression \citep{lock2018tensor,raskutti2019convex}. However, the corresponding responses and covariates may both be tensors with high order and high dimension. This ends up with scenarios where the number of parameters to estimate is substantially larger than the available sample size.
Meanwhile, these data often possess hidden low-dimensional structures that can be leveraged to improve model interpretability and estimation efficiency at the same time.

As an illustrative example, a few oceanic variables from specific regions are found to be closely associated with the El Ni$\tilde{\text{n}}$o-Southern Oscillation (ENSO), the most prominent phenomenon of contemporary natural climate variability \citep{sarachik2010the,zhou2023a}. 
The data, which include seven-layer ocean temperature anomalies in the upper 150 meters, are observed over time, forming a three-way tensor-valued time series with dimensions 51 (\textit{longitude}) $\times$ 41 (\textit{latitude}) $\times$ 7 (\textit{variables}). 
The high dimensionality of the data makes its direct modeling and forecasting formidable since even a simple autoregressive model of order one would involve $(51 \times 41 \times 7)^2 = 14637^2$ parameters, an overwhelming number that far exceeds the observed time points. On the other hand, research indicates that climate conditions of the so-called ``El Ni$\tilde{\text{n}}$o basin" region affect, and are affected by, those of other regions in different ways, especially before the onset of an El Ni$\tilde{\text{n}}$o event \citep{josef2013improved}. The El Ni$\tilde{\text{n}}$o basin is approximately around the equatorial Pacific corridor; see Figure \ref{fig:area} for an illustration. Yet, the literature has not reached a consensus on its precise location.
It is crucial for a successful inference tool to correctly identify the low-dimensional structure, and this can be achieved by low-rankness-induced approaches in a coherent way \citep{zhou2013tensor,feng2021brain}.

\begin{figure}[t!]
	\centering
	\includegraphics[width=0.8\textwidth]{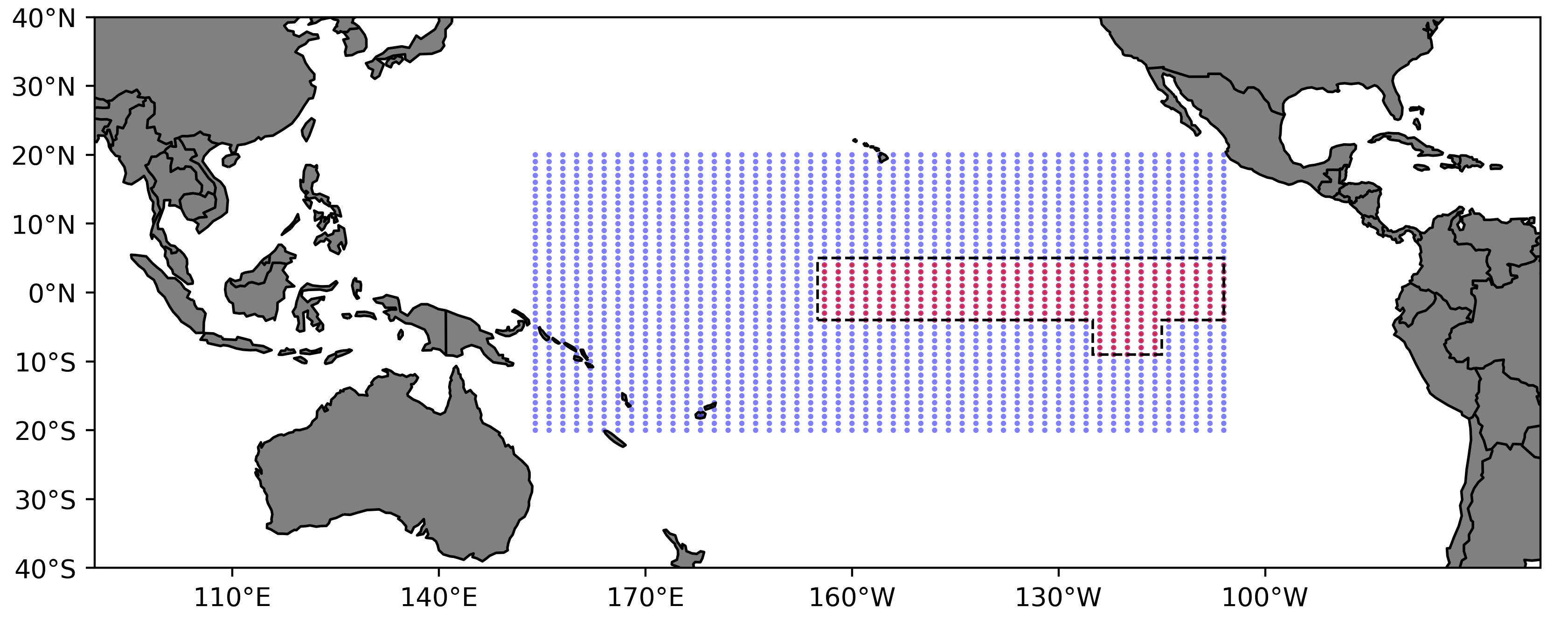}
	\caption{The grid points represent the geographical locations where ocean temperature data are being considered. One possible boundary of the El Ni$\tilde{\text{n}}$o basin is delineated by dashed lines, with grid points within this region highlighted in red.}
	\label{fig:area}
\end{figure}

There are two extensively used methods in the literature for adding low-rank constraints to tensors: Tucker and canonical polyadic (CP) decomposition; see \cite{kolda2009tensor}.
Tucker decomposition \citep{tucker1966some} has been widely applied to various tensor regression settings, including scalar-on-tensor and tensor-on-tensor models, due to its numerical stability \citep{yu2016learn, li2017parsimonious, li2018tucker, gahrooei2021multiple, han2022an}.
\cite{wang2024high} considered Tucker decomposition to tensor autoregressive time series models, leading to a nice interpretation of supervised factor modeling; see also \cite{wang2022high} and \cite{huang2023supervised}.
Specifically, they first assume coefficient tensors to have a high-order singular value decomposition (HOSVD), a special Tucker decomposition. Some loading matrices can then be moved to the left-hand side of autoregression or regression to summarize responses into low-dimensional factors or features, and the others will summarize covariates into another few features.
The response features capture all the predictable components, while the covariate ones encapsulate all the driving forces, thereby facilitating model explanations.
The above physical interpretation can help us study the ENSO example. However, Tucker decomposition will lose its efficiency in compressing tensors very quickly as the order of tensors increases, say greater than four or five \citep{oseledets2011tensor,cai2022provable}.
This issue is particularly prominent when dealing with autoregression for tensor-valued time series, where the coefficient tensor has a minimum order of six. 

In the meanwhile, CP decomposition \citep{harshman1970foundation} has been well recognized for its efficiency in dimension reduction \citep{kolda2009tensor}. 
\cite{guo2012tensor} and \cite{zhou2013tensor} considered CP decomposition for scalar-on-tensor regression problems, followed by many others \citep{guhaniyogi2017bayesian, zhang2019tensor, hao2021sparse, feng2021brain}. A nice interpretation can be achieved since CP decomposition can extract principal components or features from covariates \citep{cai2022provable}.
For instance, in the application to neuroimaging analysis, the model can identify the relevant brain regions to a clinical response \citep{zhou2013tensor}.
Besides, CP decomposition has also been employed to effectively reduce the parameter dimension of tensor-on-tensor regression models \citep{lock2018tensor,llosa2023reduced, billio2024bayesian}, however, its statistical interpretation becomes vague. 
In contrast to Tucker decomposition, the loading matrices of CP decomposition are generally not orthogonal \citep{han2024cp}. This makes it difficult to move some of them to the left-hand side of regression to extract features from tensor-valued responses. 
Consequently, the resulting model lacks the interpretation of feature extraction from both responses and covariates. The limitation of CP decomposition in the settings of tensor-on-tensor regression, and hence tensor autoregression, suggests a pressing need to develop a suitable model that is not only more interpretable but can also dramatically compress parameter dimension.

Beyond Tucker and CP decompositions, several works have explored alternative low-rank structures for coefficients in tensor autoregression \citep{li2021multi} and tensor-on-tensor regression \citep{llosa2023reduced, si2024efficient}. Despite these efforts, none have successfully fulfilled the objective of attaining both interpretability and powerful dimension reduction. The first contribution of this paper is to propose a novel CP-based low-rank structure to the coefficient tensor in tensor-on-tensor regression models and adapt the idea to autoregressive models for tensor-valued time series. The design leads to a supervised factor modeling interpretation,  i.e., first use CP decomposition to extract response and covariate features separately and then regress response features on covariate ones. It also inherits the parsimonious structure of CP decomposition, and hence both the interpretability and estimation efficiency can be achieved.

Moreover, in the literature on high-dimensional time series, another widely used method is to assume that coefficients are sparse and then apply sparsity-inducing regularized estimation; see, e.g., $l_1$-regularization for vector autoregressive models \citep{basu2015regularized,davis2016sparse} and vector autoregressive moving average models \citep{Wilms2023,Zheng2024}.
There is even a much bigger literature on sparsity-inducing regularization methods for regression models \citep{tibshirani1996lasso, fan2001scad, davis2016sparse, raskutti2019convex, bertsimas2020sparse}. 
On the one hand, despite its powerful performance in prediction, this type of methods may fail to represent underlying complex structures in data \citep{basu2019low}. 
For the ENSO example, the oceanic conditions of the entire region are likely to change in a concerted manner, driven by climate conditions of the El Ni$\tilde{\text{n}}$o basin.
It is hard to identify this influential region by solely relying on sparsity regularization, whereas the low-rankness may capture this driving force in a more coherent way \citep{zhou2013tensor,feng2021brain}.
On the other hand, an exact low-rank assumption can be limited, leaving the method susceptible to model misspecification or heterogeneous signals \citep{agarwal2012noisy, basu2019low, cai2023generalized, bai2023multiple}. 
It hence may be more reasonable to assume that the coefficient tensor has a low-rank plus sparse structure. This leads to the second contribution of this paper: we designate a coefficient tensor with CP-based low-dimensional structure to model the associations that are collaboratively driven by latent features, while using a sparse one to account for heterogeneous signals or potential model misspecifications arising from an exact low-rank assumption.


These two main contributions are detailed in Sections \ref{subsec:method-idea} and \ref{subsec:method-ar}. Besides, 
this paper also proposes the corresponding high-dimensional estimation in Section \ref{subsec:HDestimate}, and adopts the alternating least squares (ALS) algorithm \citep{de1976additive,  zhou2013tensor} to search for the estimates, which casts the update of a single block at each iteration into a traditional linear regression problem; see Section \ref{subsec:method-algorithm} for details. 
Theoretically, the non-asymptotic error bounds are given in Section \ref{sec:theory} for high-dimensional autoregression as well as regression settings. Section \ref{sec:simulation} conducts the simulation experiments to assess the finite-sample performance of the proposed models, and their effectiveness is further validated by the ENSO example in Section \ref{sec:real-data}. Finally, Section \ref{sec:conclusion} provides a concise summary and discussion to conclude the paper, and all technical proofs are deferred to the Supplementary Material.

Throughout the paper, we denote vectors by boldface small letters, e.g., $\bm{x}$; matrices by boldface capital letters, e.g., $\bm{X}$; tensors by boldface Euler capital letters, e.g., $\cm{X}$.  
For a $d$-th order tensor $\cm{X} \in \mathbb{R}^{p_1 \times p_2 \times \cdots \times p_d}$, 
denote by $\vectorize(\cm{X})$ its vectorization, and further denote by $\cm{X}_{(s)}$ and $[\cm{X}]_s$ the mode matricization and sequential matricization at mode $s$ with $1 \leq s\leq d$, respectively. The Frobenius norm of $\cm{X}$ is defined as $\norm{\cm{X}}_{\mathrm{F}}=\sqrt{\langle\cm{X},\cm{X}\rangle}$, where $\langle\cdot,\cdot\rangle$ is the inner product;
and the $l_\infty$ and $l_1$-norms are defined as $\norm{\cm{X}}_\infty = \max_{i_1, i_2, \dots, i_d} \lvert \cm{X}_{i_1, i_2, \dots, i_d}\rvert$ and $\norm{\cm{X}}_1 = \sum_{i_1, i_2, \dots, i_d} \mathbbm{1}\{\cm{X}_{i_1, i_2, \dots, i_d}\ \neq 0\}$, respectively.
For a generic matrix $\bm{X}$, we denote by $\bm{X}^\top$, $\norm{\bm{X}}_{\mathrm{F}}$, $\norm{\bm{X}}_2$, and $\vectorize(\bm{X})$ its transpose, Frobenius norm, spectral norm, and a long vector obtained by stacking all its columns, respectively. If $\bm{X}$ is further a square matrix, then denote its minimum and maximum eigenvalue by $\lambda_{\min}(\bm{X})$ and $\lambda_{\max}(\bm{X})$, respectively.
For two real-valued sequences $x_k$ and $y_k$, $x_k\gtrsim y_k$ if there exists a $C>0$ such that $x_k\geq Cy_k$ for all $k$. In addition, we write $x_k\asymp y_k$ if $x_k\gtrsim y_k$ and $y_k\gtrsim x_k$.

\section{Methodology}
\subsection{CP-based feature extraction in tensor-on-tensor regression}
\label{subsec:method-idea}
This subsection first illustrates how to construct a tensor-on-tensor regression model with CP decomposition used to extract features from responses and covariates separately, forming the basis for an interpretable and efficient tensor autoregressive model in the next subsection. 

We start by considering a scalar-on-tensor regression model with the coefficient tensor having a CP decomposition \citep{guo2012tensor, zhou2013tensor, guhaniyogi2017bayesian, feng2021brain},
\begin{equation*}
	\label{model:scalar-on-tensor-cp}
	y_t = \left\langle  \cm{A}, \cm{X}_t \right\rangle + \varepsilon_t \hspace{3mm}\text{with}\hspace{3mm} \cm{A}=\sum_{r=1}^{R_x} g_r \cdot \bm{v}_r^{(1)} \circ \bm{v}_r^{(2)} \circ \cdots \circ \bm{v}_r^{(m)},
\end{equation*}
where $y_t,\varepsilon_t\in\mathbb{R}$ are the response and the error term, respectively, $\cm{X}_t \in \mathbb{R}^{p_1 \times p_2 \times \cdots \times p_m}$ is the covariate, $\cm{A} \in \mathbb{R}^{p_1 \times p_2 \times \cdots \times p_m}$ is the coefficient tensor, $R_x$ is the CP-rank, $\bm{v}_r^{(j)} \in \mathbb{R}^{p_j}$ with $\lVert \bm{v}^{(j)}_r\rVert_2 = 1$ for each $1 \leq r \leq R_x$ and $1 \leq j \leq m$, and $\circ$ denotes the outer product.
The model can be rewritten into
\begin{equation}
	\label{model:cp-covariate-factor}
	y_t = \sum_{r=1}^{R_x} g_r f^{\mathrm{covariate}}_{r, t} + \varepsilon_t
	\quad \text{with} \quad
	f^{\mathrm{covariate}}_{r, t} = \left\langle \bm{v}_r^{(1)} \circ \bm{v}_r^{(2)} \circ \cdots \circ \bm{v}_r^{(m)}, \cm{X}_t  \right\rangle,
\end{equation}
where $\bm{f}^{\mathrm{covariate}}_t = (f^{\mathrm{covariate}}_{1, t}, \dots,  f^{\mathrm{covariate}}_{R_x, t})^\top \in\mathbb{R}^{R_x}$ are features extracted from covariates $\cm{X}_t$, and
the loading tensors $\bm{v}_r^{(1)} \circ \bm{v}_r^{(2)} \circ \cdots \circ \bm{v}_r^{(m)}$ are of the same shape as $\cm{X}_t$ but may not be orthogonal with each other \citep{kolda2009tensor}.
From model \eqref{model:cp-covariate-factor}, the scalar response is modeled as a linear combination of covariate features, which in turn makes the feature extraction a supervised process. 
These features can provide a reasonable approximation to many low-dimensional signals \citep{zhou2013tensor}.
For example, when $y_t$ is a clinical response and $\cm{X}_t$ is a brain image, these CP-structured loadings can help recover regions of interest in the brain related to a disease. In the ENSO example, features may summarize the driving forces of climate change, while loadings can help identify the location of El Ni$\tilde{\text{n}}$o basin.

However, for time series models, both responses and covariates are observed tensors from the same variable, and hence it is necessary to consider a more general tensor-on-tensor regression model,
\begin{equation}
	\label{model:tensor-regression}
	\cm{Y}_t = \left\langle \cm{A}, \cm{X}_t \right\rangle + \cm{E}_t,
\end{equation}
where $\cm{Y}_t, \cm{E}_t \in \mathbb{R}^{q_1 \times q_2 \times \cdots \times q_n}$ are the response and the error term, respectively, and the covariate $\cm{X}_t$ has the same shape as in model \eqref{model:cp-covariate-factor}. The problem caused by high dimensionality is more pronounced in this model as now the coefficient tensor $\cm{A}\in\mathbb{R}^{q_1 \times \cdots \times q_n \times p_1 \times \cdots \times p_m}$. 
While, the efficiency and interpretability brought by covariate feature extraction in \eqref{model:cp-covariate-factor} motivates us to consider feature extraction from responses as well.
Specifically, we may extract $R_y$ features, denoted by $\bm{f}_t^{\mathrm{response}} = (f_{1,t}^{\mathrm{response}}, \dots, f_{R_y,t}^{\mathrm{response}})^\top \in \mathbb{R}^{R_y}$, from $\cm{Y}_t$, with each feature of the form:
\begin{equation}
	\label{eq:response-factor}
	f_{r, t}^{\mathrm{response}} = \left\langle 
	\bm{u}_r^{(1)} \circ \bm{u}_r^{(2)} \circ \cdots \circ \bm{u}_r^{(n)}, \cm{Y}_t 
	\right\rangle
	\quad \text{where } 
	\bm{u}^{(i)}_r \in \mathbb{R}^{q_i}  \text{ and }\lVert \bm{u}^{(i)}_r\rVert_2 = 1 
\end{equation}
for all $1 \leq r \leq R_y$ and $1 \leq i \leq n$.
The loading tensors $\cm{C}_r = \bm{u}_r^{(1)} \circ \bm{u}_r^{(2)} \circ \cdots \circ \bm{u}_r^{(n)}$ are not necessarily orthogonal to each other, but we require them to be non-degenerate, i.e., $\sum_{r=1}^{R_y} c_r \cm{C}_r = 0$ must imply that $c_1 = \cdots = c_{R_y} = 0$. 
Let $\bm{\Lambda}_y = \bm{U}_n \odot \cdots \odot \bm{U}_1 \in \mathbb{R}^{\prod_{i=1}^{n}q_i \times R_y}$
and
$\bm{\Lambda}_x = \bm{V}_m \odot \cdots \odot \bm{V}_1 \in \mathbb{R}^{\prod_{j=1}^{m}p_j \times R_x}$,
where $\odot$ denotes the Khatri-Rao product, $\bm{U}_i = \left(\bm{u}^{(i)}_1, \dots, \bm{u}^{(i)}_{R_y}\right) \in \mathbb{R}^{q_i \times R_y}$ for $1\leq i\leq n$, and $\bm{V}_j = \left(\bm{v}^{(j)}_1, \dots, \bm{v}^{(j)}_{R_x}\right) \in \mathbb{R}^{p_j \times R_x}$ for $1\leq j\leq m$.
It can be verified that 
$\bm{f}_t^{\mathrm{response}} = \bm{\Lambda}_y^\top \vectorize(\cm{Y}_t)$ and $\bm{f}^{\mathrm{covariate}}_t=\bm{\Lambda}_x^\top \vectorize(\cm{X}_t) $.

We next attempt to design a low-dimensional structure to model \eqref{model:tensor-regression} such that both response and covariate feature extraction are supervised by the predictive relationship between their resulting features. 
We propose to assume that the sequential matricization of $\cm{A}$ has the following decomposition:
\begin{equation}
	\label{eq:A-form}
	[\cm{A}]_n = \bm{\Lambda}_y \left(\bm{\Lambda}_y^\top \bm{\Lambda}_y \right)^{-1} \bm{G} \bm{\Lambda}_x^\top,
\end{equation}
and then model \eqref{model:tensor-regression} can be reformulated into
\begin{equation}
	\label{model:our-cp}
	\vectorize(\cm{Y}_t) = \bm{\Lambda}_y \left(\bm{\Lambda}_y^\top \bm{\Lambda}_y \right)^{-1} \bm{G} \bm{\Lambda}_x^\top \vectorize(\cm{X}_t) + \vectorize(\cm{E}_t) \hspace{3mm}\text{or}\hspace{3mm}
	\bm{f}_t^{\mathrm{response}}=\bm{G}\bm{f}^{\mathrm{covariate}}_t+\bm{e}_t^{\mathrm{CP}},
\end{equation}
where $\bm{G}\in\mathbb{R}^{R_y\times R_x}$ is the coefficient matrix, and $\bm{e}_t^{\mathrm{CP}}=\bm{\Lambda}_y^\top\vectorize(\cm{E}_t)$. 
In other words, it admits a regression form of response features
$\bm{f}_t^{\mathrm{response}}$ on covariate ones $\bm{f}^{\mathrm{covariate}}_t$, where $\bm{f}_t^{\mathrm{response}}$ summarizes all the predictable components and $\bm{f}^{\mathrm{covariate}}_t$ captures all the driving forces; see a similar supervised factor modeling argument in \cite{huang2023supervised} and \cite{wang2024high}. Besides, model \eqref{model:our-cp} also inherits the parsimonious structure of CP decomposition: it has a complexity of $d_\mathrm{LR} = R_yR_x + R_y\sum_{i=1}^nq_i + R_x\sum_{j=1}^mp_j$, which only grows linearly with the tensor order and thereby maintaining estimation efficiency even when $n$ and $m$ are large.

\begin{remark}
	\label{remark:tucker}
	In many works of tensor-on-tensor regression, the coefficient tensor $\cm{A}$ is assumed to have multilinear low ranks, i.e., $\mathrm{rank}\left(\cm{A}_{(i)}\right)\leq r_i$ for each $1\leq i \leq n+m$ \citep{yu2016learn, gahrooei2021multiple, han2022an, wang2022high}. We can then have a Tucker decomposition of $\cm{A} = \cm{G} \times_{i=1}^{n+m} \bm{U}_{i}$, where $\bm{U}_i \in \mathbb{R}^{q_i \times r_i}$ and $\bm{U}_{n+j} \in \mathbb{R}^{p_{j} \times r_{n+j}}$ are standardized to be orthonormal loading matrices for $1\leq i \leq n$, $1\leq j \leq m$, and $\cm{G} \in \mathbb{R}^{r_1 \times \cdots \times r_{n+m}}$ is the core tensor. Putting this decomposition back into model \eqref{model:tensor-regression}, we have 
	\begin{equation*}
		\cm{Y}_t = \left\langle \cm{G} \times_{i=1}^{n+m} \bm{U}_i, \cm{X}_t \right\rangle + \cm{E}_t \hspace{5mm}\text{or}\hspace{5mm}
		\cm{Y}_t \times_{i=1}^{n}  \bm{U}_i^\top =\left\langle \cm{G}, \cm{X}_t  \times_{j=1}^{m} \bm{U}_{n+j}^\top \right\rangle + \bm{e}_t^{\mathrm{Tucker}},
	\end{equation*}
	where $\bm{e}_t^{\mathrm{Tucker}}=\cm{E}_t \times_{i=1}^{n}  \bm{U}_i^\top$.
	If regarding $\cm{Y}_t \times_{i=1}^{n}  \bm{U}_i^\top$ and $\cm{X}_t  \times_{j=1}^{m} \bm{U}_{n+j}^\top $ as the response and covariate features, respectively, the above model then offers an interpretation of supervised factor modeling, which directly assuming $\cm{A}$ to have a CP decomposition does not provide.
	However, the Tucker-based method will lose its efficiency very quickly as the order of coefficient tensors increases, say greater than four or five \citep{oseledets2011tensor}. It has a complexity of $d_\mathrm{Tucker} = \prod_{i=1}^{n+m} r_i + \sum_{i=1}^{n}q_ir_i + \sum_{j=1}^{m}p_{j}r_{n+j}$ that is exponential in tensor orders.
	For model \eqref{model:tensor-regression} with the simplest tensor-valued observations, i.e., $n=m=3$, the coefficient tensor $\cm{A}$ already has an order of six, where Tucker decomposition is doomed to be inefficient.
\end{remark}

\subsection{Efficient and interpretable autoregressive time series model}
\label{subsec:method-ar}
This subsection adopts ideas in Section \ref{subsec:method-idea} to the autoregressive (AR) time series model,
\begin{align}
	\label{model:ARp}
	\cm{Y}_t = \left\langle \cm{A}_1, \cm{Y}_{t-1} \right\rangle + \left\langle \cm{A}_2, \cm{Y}_{t-2} \right\rangle + \cdots +
	\left\langle \cm{A}_P, \cm{Y}_{t-P} \right\rangle + \cm{E}_t, 
\end{align}
where $\cm{Y}_t \in \mathbb{R}^{q_1 \times q_2 \times \cdots \times q_n}$ is the observed tensor, $P$ is the order, $\cm{A}_k \in \mathbb{R}^{q_1 \times \cdots \times q_n \times q_1 \times \cdots \times q_n}$ with $1\leq k\leq P$ are coefficient tensors, and $\cm{E}_t \in \mathbb{R}^{q_1 \times q_2 \times \cdots \times q_n}$ is the error term.

We first consider the case with $P=1$, and the coefficient tensor $\cm{A}_1$ can be directly assumed to have the decomposition in \eqref{eq:A-form} by letting $m =n$ and $p_j = q_j$ for $1\leq j\leq n$. It then leads to an AR model, $\vectorize(\cm{Y}_t) = \bm{\Lambda}_y \left(\bm{\Lambda}_y^\top \bm{\Lambda}_y \right)^{-1} \bm{G}_1 \bm{\Lambda}_x^\top
\vectorize(\cm{Y}_{t-1}) + \vectorize(\cm{E}_t)$, and its interpretable form:
\begin{equation}
	\label{model:ar1}
	\underbrace{\bm{\Lambda}_y^\top \vectorize(\cm{Y}_t)}_{\bm{f}_t^{\mathrm{response}}}  
	= \bm{G}_1 
	\underbrace{\bm{\Lambda}_x^\top \vectorize(\cm{Y}_{t-1})}_{\bm{f}^{\mathrm{covariate}}_{t-1}} 
	+ \bm{\Lambda}_y^\top \vectorize(\cm{E}_t). 
\end{equation}
Here the high-order tensor $\cm{Y}_t$ is summarized into $R_y$ features, $\bm{f}_t^{\mathrm{response}}$, each of which is extracted as in \eqref{eq:response-factor}. And the information from $\cm{Y}_{t-1}$ is compressed into $R_x$ features, denoted by $\bm{f}^{\mathrm{covariate}}_{t-1}$, and each of them is obtained as in \eqref{model:cp-covariate-factor}. The coefficient matrix $\bm{G}_1 \in \mathbb{R}^{R_y \times R_x}$ characterizes the relationship between response features $\bm{f}^{\mathrm{response}}_t$ and covariate features $\bm{f}^{\mathrm{covariate}}_{t-1}$. 


For a general order $P$, we extend model \eqref{model:ar1} and consider the following form:
\begin{equation}
	\label{model:arp-cp}
	\vectorize(\cm{Y}_t) = \sum_{k=1}^{P} [\cm{A}_k]_n \vectorize(\cm{Y}_{t-k}) + \vectorize(\cm{E}_t)
	\quad \text{with} \quad
	[\cm{A}_k]_n = \bm{\Lambda}_y \left(\bm{\Lambda}_y^\top \bm{\Lambda}_y \right)^{-1} \bm{G}_k \bm{\Lambda}_x^\top,
\end{equation}
whose interpretation can be revealed by multiplying $\bm{\Lambda}_y^\top$ to both sides:
\begin{equation*}
	\bm{f}_t^{\mathrm{response}} = \sum_{k=1}^P \bm{G}_k \bm{f}^{\mathrm{covariate}}_{t-k} + \bm{\Lambda}_y^\top \vectorize(\cm{E}_t)
	\quad \text{with} \quad
	\bm{f}^{\mathrm{covariate}}_{t-k} = \bm{\Lambda}_x^\top \vectorize(\cm{Y}_{t-k}). 
\end{equation*}
For the sake of parsimony, the matrices $\bm{\Lambda}_y$ and $\bm{\Lambda}_x$ are shared across all $\cm{A}_k$'s, i.e., we apply a single operation to extract response features and use the same loading, $\bm{\Lambda}_x$, to extract features from the preceding variables $\cm{Y}_{t-k}$'s. The different matrices $\bm{G}_k \in \mathbb{R}^{R_y \times R_x}$'s allow flexible relationships between response and covariate features at different lags. Note that model \eqref{model:arp-cp} has a complexity of $d_{\mathrm{AR}} = PR_yR_x + (R_y + R_x)\sum_{i=1}^{n}q_i$, achieving both the interpretability and efficiency.

\begin{remark}\label{remark-ar2}
	We may alternatively treat the temporal dimension as an additional tensor mode, and stack covariates  $\cm{Y}_{t-1}, \dots, \cm{Y}_{t-P}$ and their coefficients $\cm{A}_{1}, \dots, \cm{A}_{P}$ into new tensors $\cm{X}_t \in \mathbb{R}^{P \times q_1 \times \cdots \times q_n}$ and $\cm{A} \in \mathbb{R}^{q_1 \times \cdots \times q_n \times P \times q_1 \times \cdots \times q_n}$, respectively. The resulting model is exactly the same as in \eqref{model:tensor-regression}, and therefore, our low-dimensional structure can just be applied as in \eqref{model:our-cp} with $m =n+1$, $p_1 = P$ and $p_j = q_{j-1}$ for $2 \leq j \leq m$. This alternative model consists of $d_\mathrm{ALR}=R_yR_x + PR_x + (R_y + R_x)\sum_{i=1}^{n}q_i$ parameters, while it leads to a more complex interpretation for the temporal dynamics of time series. Recovering the AR form from model \eqref{model:our-cp}, each coefficient $\cm{A}_k$ can be expressed as
	\begin{equation*}
		[\cm{A}_k]_n = \bm{\Lambda}_y \left(\bm{\Lambda}_y^\top \bm{\Lambda}_y \right)^{-1} \bar{\bm{G}}_k \bm{\Lambda}_x^\top
		\quad \text{with} \quad
		\bar{\bm{G}}_k = \bm{G}\bm{D}_k,
	\end{equation*}
	where $\bar{\bm{G}}_k \in \mathbb{R}^{R_y \times R_x}$ is the difference compared to \eqref{model:arp-cp}, and $\bm{D}_k \in \mathbb{R}^{R_x \times R_x}$ is a diagonal matrix. The form suggests a further constraint that all $\bar{\bm{G}}_k$'s shall share the same column space. This is less intuitive, and we prefer the dimension reduction at \eqref{model:arp-cp}.
\end{remark}

\begin{remark}
	In addition to supervised models, the literature often investigates the low-rank structures of high-dimensional time series using unsupervised factor models \citep{stock2011dynamic, bai2016econometric,chen2022factor,han2024cp}, while an extra dynamic structure, such as vector AR models, has to be imposed on latent factors to enable forecasting, leading to the so-called dynamic factor model \citep{stock2011dynamic},
	\begin{equation*}
		\vectorize(\cm{Y}_t) = \bm{\Lambda}\bm{f}_t + \bm{e}_t 
		\hspace{3mm}\text{and}\hspace{3mm}
		\bm{f}_t = \sum_{k=1}^{P}\bm{B}_k\bm{f}_{t-k} + \bm{\xi}_t,
	\end{equation*}
	where $\bm{\Lambda} \in \mathbb{R}^{\prod_{i=1}^{n}q_i \times r}$ is the factor loading matrix, $\bm{f}_t \in \mathbb{R}^r$ contains the $r$ factors from $\cm{Y}_t$, $\bm{B}_k \in \mathbb{R}^{r\times r}$'s are the coefficient matrices for the latent process, $\bm{e}_t \in \mathbb{R}^{\prod_{i=1}^{n}q_i}$ and $\bm{\xi}_t \in \mathbb{R}^r$ are error terms. Compared to our model \eqref{model:arp-cp}, both the response and covariate factors here are extracted with identical loading matrices, which limits the flexibility of the forecasting model.
\end{remark}

Finally, the exact low-rank assumption can be restrictive, leaving the method susceptible to model misspecification or heterogeneous signals, while the sparsity is another prevalent way to restrict the parameter space \citep{agarwal2012noisy, basu2019low, cai2023generalized, bai2023multiple}.
It hence may be more reasonable to assume that the coefficient tensor has a low-rank plus sparse structure, which in the autoregression setting can be
\begin{equation}
	\label{model:arp-sparse}
	\cm{Y}_t =  \sum_{k=1}^{P} \left\langle \cm{A}_k, \cm{Y}_{t-k} \right\rangle + \cm{E}_t
	\quad\text{with}\quad
	\cm{A}_k = \cm{A}^\mathrm{L}_k + \cm{A}^\mathrm{S}_k.
\end{equation}
The tensor $\cm{A}^\mathrm{L}_k$ has a low-rank structure, dedicated to modeling the associations that are collaboratively driven by latent features, whereas $\cm{A}^\mathrm{S}_k$ is sparse to account for heterogeneous signals or the potential model misspecifications brought by the exact low-rank assumption. Here we may assume the sequential matricization of $\cm{A}^\mathrm{L}_k$ to have a decomposition, $[\cm{A}^\mathrm{L}_k]_n = \bm{\Lambda}_y \left(\bm{\Lambda}_y^\top \bm{\Lambda}_y \right)^{-1} \bm{G}_k \bm{\Lambda}_x^\top$, as in \eqref{model:arp-cp}, and consider the elementwise sparsity for $\cm{A}^\mathrm{S}_k$. 

\subsection{High-dimensional estimation}
\label{subsec:HDestimate}

We first consider the AR$(P)$ model at \eqref{model:arp-cp} with $[\cm{A}_k]_n = \bm{\Lambda}_y \left(\bm{\Lambda}_y^\top \bm{\Lambda}_y \right)^{-1} \bm{G}_k \bm{\Lambda}_x^\top$.
Since the matrix $\left(\bm{\Lambda}_y^\top \bm{\Lambda}_y \right)^{-1}\in\mathbb{R}^{R_y\times R_y}$ has a full rank and keeps unchanged at different lag $k$, without loss of generality, we can reparameterize the model into $\bm{G}_k:=\left(\bm{\Lambda}_y^\top \bm{\Lambda}_y \right)^{-1} \bm{G}_k$ and $[\cm{A}_k]_n := \bm{\Lambda}_y  \bm{G}_k \bm{\Lambda}_x^\top$.
Moreover, as in Remark \ref{remark-ar2}, we stack covariates  $\cm{Y}_{t-1}, \dots, \cm{Y}_{t-P}$ and coefficients $\cm{A}_{1}, \dots, \cm{A}_{P}$ into tensors $\cm{X}_t $ and $\cm{A} $, respectively, for the sake of convenience.

For an observed time series $\{\cm{Y}_1, \dots, \cm{Y}_T\}$, denote $\bm{Y} = \left(\vectorize({\cm{Y}_T}), \vectorize({\cm{Y}_{T-1}}), \dots, \vectorize({\cm{Y}_{P+1}}) \right)$ $ \in \mathbb{R}^{Q\times (T-P)}$, and
$\bm{X} = \left(\vectorize({\cm{X}_T}), \vectorize({\cm{X}_{T-1}}), \dots, \vectorize({\cm{X}_{P+1}}) \right) \in \mathbb{R}^{PQ \times (T-P)}$, where $Q=\prod_{i=1}^nq_i$.
The ordinary least squares (OLS) estimation for AR$(P)$ models has the loss function of
\begin{equation*}
	\label{eq:loss-lowrank}
	\mathcal{L}_T(\cm{A})= 
	\frac{1}{T-P} \sum_{t=P+1}^{T} \norm{\vectorize(\cm{Y}_t) - [\cm{A}]_n \vectorize(\cm{X}_t)}_2^2 
	= \frac{1}{T-P} \norm{\bm{Y} - [\cm{A}]_n \bm{X}}_\Fr^2.
\end{equation*}

We first define the parameter space of model \eqref{model:arp-cp}.
For the CP-structured loadings $\bm{\Lambda}_y$ and $\bm{\Lambda}_x$, their parameter space can be summarized into
\begin{equation*}
	\begin{split}
		\mathcal{S}\left(\underline{d}, R\right) = \{
		\bm{\Lambda} \in \mathbb{R}^{\prod_{i=1}^{n}d_i \times R}: 
		\bm{\Lambda} =\bm{U}_n \odot \cdots \odot \bm{U}_1, \;
		&\bm{U}_i\in \mathbb{R}^{d_i \times R}, \;\\
		&\lVert\bm{u}^{(i)}_r\rVert_2 = 1, \; 
		\forall 1\leq i \leq n, 1\leq r \leq R
		\},
	\end{split}
\end{equation*}
where $\underline{d} = \{d_1, \dots, d_n\}$.
For each $\bm{\Lambda} \in 	\mathcal{S}\left(\underline{d}, R\right)$, the above definition specifies its structure as the Khatri-Rao product of a series of matrices whose columns are unit-norm, ensuring that each column of $\bm{\Lambda}$ has the form of a vectorized CP rank-1 tensor, i.e., $\vectorize(\bm{u}^{(1)}_r \circ \cdots \circ \bm{u}^{(n)}_r )$. We are now ready to define the parameter space that $\cm{A}$ lies in: 
\begin{equation*}
	\label{eq:parameter-space}
	\begin{split}
		\bm{\Theta}(\underline{q}, P, R_y, R_x) = \{
		\cm{A} \in \mathbb{R}^{q_1 \times \cdots \times q_n \times P \times q_1 \times \cdots \times q_n}:
		& [\cm{A}]_n = \bm{\Lambda}_y {\bm{G}} (\bm{I}_P \otimes \bm{\Lambda}_x^\top),\\
		\bm{\Lambda}_y & \in \mathcal{S}(\underline{q}, R_y), \;
		{\bm{G}} \in \mathbb{R}^{R_y \times PR_x},\;
		\bm{\Lambda}_x \in \mathcal{S}(\underline{q}, R_x)
		\},
	\end{split}
\end{equation*}
where $\underline{q}=\{q_1, \dots, q_n\}$, $\bm{I}_P \in \mathbb{R}^{P \times P}$ is an identity matrix, ${\bm{G}} = ({\bm{G}}_1, \dots, {\bm{G}}_P)$, and $\otimes$ denotes the Kronecker product.
As a result, the OLS estimator of model \eqref{model:arp-cp} can be defined below,
\begin{equation}
	\label{eq:least-square-est}
	\widehat{\cm{A}}= \argmin \mathcal{L}_T(\cm{A}) \hspace{3mm}\text{subject to}\hspace{3mm} 
	\cm{A} \in \bm{\Theta}(\underline{q}, P, R_y, R_x).
\end{equation}

We next consider the low-rank plus sparse model at \eqref{model:arp-sparse}. Note that the coefficients $\cm{A}^\mathrm{L}_k$'s can also be rearranged into a new tensor $\cm{A}_\mathrm{L}  \in \mathbb{R}^{q_1 \times \cdots \times q_n \times P \times q_1 \times \cdots \times q_n}$ such that $[\cm{A}_\mathrm{L}]_n = ([\cm{A}^\mathrm{L}_1]_n, \dots, [\cm{A}^\mathrm{L}_P]_n)$; and similarly for $\cm{A}^\mathrm{S}_k$'s, we can get $\cm{A}_\mathrm{S}$. By our assumptions, $\cm{A}_\mathrm{L}  \in \bm{\Theta}(\underline{q}, P, R_y, R_x)$ and $\cm{A}_\mathrm{S}$ has an element-wise sparsity. 
However, there is an inherent identifiability issue in the estimation of $\cm{A}_\mathrm{L}$ and $\cm{A}_\mathrm{S}$ since when $\cm{A}_\mathrm{L}$ is also sparse or $\cm{A}_\mathrm{S}$ is also low-rank, the two components cannot be distinguished from each other. To make them identifiable, we follow \cite{agarwal2012noisy}, \cite{basu2019low} and \cite{bai2023multiple} to similarly impose a mild assumption that $\norm{\cm{A}_\mathrm{L}}_\infty \leq \alpha_\mathrm{L} / (PQ^2)$. The term $\alpha_\mathrm{L}$ is the so-called radius of nonidentifiability, bounding the ``spikeness" of $\cm{A}_\mathrm{L}$. 
We define estimators for model \eqref{model:arp-sparse} as:
\begin{equation}
	\label{eq:least-square-est-sparse}
	\begin{split}
		\left(\widetilde{\cm{A}}_\mathrm{L} , \widetilde{\cm{A}}_\mathrm{S} \right)
		&=
		\argmin
		{\mathcal{L}}_T(\cm{A}_\mathrm{L}+ \cm{A}_\mathrm{S}) + \lambda \norm{\cm{A}_\mathrm{S}}_1\\ &\hspace{3mm}\text{subject to}\hspace{3mm}
		\cm{A}_\mathrm{L} \in \bm{\Theta}(\underline{q}, P, R_y, R_x) \hspace{3mm}\text{and}\hspace{3mm}
		\norm{\cm{A}_\mathrm{L}}_\infty \leq   \alpha_\mathrm{L} / (PQ^2).
	\end{split}
\end{equation}

\begin{remark}
	To gain an intuition of $\alpha_\mathrm{L}$, we consider $\cm{A}_\mathrm{L}$ with $\norm{\cm{A}_\mathrm{L}}_\Fr \approx 1$. Then setting $\alpha_\mathrm{L} \approx P^{1/2}Q$ induces that $\left|[\cm{A}_{\mathrm{L}}]_{i_1, \dots, i_n, k, j_1,  \dots j_n}\right| \approx P^{-1/2}Q^{-1}$ for all $1 \leq i_l, j_l \leq q_l$, $1 \leq l \leq n$, and $1 \leq k \leq P$, which implies that the information is evenly distributed across all the entries. In contrast, when $\alpha_\mathrm{L} \approx P Q^2$, it is possible that there is a spiky entry with all the mass, leading to an extremely sparse case. 
\end{remark}

\begin{remark}
	For the AR$(P)$ model in Remark \ref{remark-ar2} or the more general tensor-on-tensor regression model \eqref{model:our-cp}, given observations $\{(\cm{Y}_t, \cm{X}_t)\}_{t=1}^T$, the loss function can be defined as $\mathcal{L}^\mathrm{LR}_T(\cm{A})= \sum_{t=1}^{T} \lVert \vectorize(\cm{Y}_t) - [\cm{A}]_n \vectorize(\cm{X}_t)\rVert_2^2 /T$. We further define the parameter space of $\cm{A}$:
	\begin{equation*}
		\begin{split}
			\bm{\Theta}_\mathrm{LR}(\underline{q}, \underline{p}, R_y, R_x) = \{
			\cm{A} \in \mathbb{R}^{q_1 \times \cdots \times q_n \times p_1 \times \cdots \times p_m}:
			& [\cm{A}]_n  = \bm{\Lambda}_y{\bm{G}} \bm{\Lambda}_x^\top, \\
			& \bm{\Lambda}_y\in \mathcal{S}(\underline{q}, R_y), 
			{\bm{G}} \in \mathbb{R}^{R_y \times R_x},
			\bm{\Lambda}_x \in \mathcal{S}(\underline{p}, R_x)
			\},
		\end{split}
	\end{equation*}
	where ${\bm{G}}$ is a reparameterization of $\left(\bm{\Lambda}_y^\top \bm{\Lambda}_y \right)^{-1} \bm{G}$ in \eqref{model:our-cp} and $\underline{p} = \{p_1, \dots, p_m\}$. As a result, the corresponding OLS estimator can be given below,
	\begin{equation}
		\label{eq:est-lr}
		\widehat{\cm{A}}_\mathrm{LR} = \argmin	\mathcal{L}^\mathrm{LR}_T(\cm{A}) \hspace{3mm}\text{subject to}\hspace{3mm} \cm{A} \in \bm{\Theta}_\mathrm{LR}(\underline{q}, \underline{p}, R_y, R_x).
	\end{equation}
\end{remark}

\subsection{Algorithms}
\label{subsec:method-algorithm}


This subsection first introduces an algorithm to search for the low-rank estimate in \eqref{eq:least-square-est}, while the primary challenge lies in handling the CP-based low-rank constraint. 
Denote by $\mathcal{M}_n(\cdot)$ the reversed transformation of sequential matricization in mode $n$, i.e., $\cm{A} = \mathcal{M}_n([\cm{A}]_n)$.
Note that for any $\cm{A} \in \bm{\Theta}(\underline{q}, P, R_y, R_x)$, there exists a decomposition $\cm{A} = \mathcal{M}_n \{(\bm{U}_n \odot \cdots \odot \bm{U}_1)\bm{G}[\bm{I}_p \otimes (\bm{V}_n \odot \cdots \odot \bm{V}_1)]^\top\}$, where $\bm{U}_i \in \mathbb{R}^{q_i \times R_y}$ and $\bm{V}_j \in \mathbb{R}^{q_j \times R_x}$ are matrices with unit-norm columns for $1\leq i,j \leq n$, and $\bm{G} \in \mathbb{R}^{R_y \times R_x}$. 
As a result, the loss function in \eqref{eq:least-square-est} can be rewritten as $\mathcal{L}_T(\bm{U}_1, \dots, \bm{U}_n, \bm{G}, \bm{V}_1, \dots, \bm{V}_n)  \defeq \mathcal{L}_T( \mathcal{M}_n\{(\bm{U}_n \odot \cdots \odot \bm{U}_1)\bm{G}[\bm{I}_p \otimes (\bm{V}_n \odot \cdots \odot \bm{V}_1)]^\top\})$, and we then can optimize the loss function with respect to $\bm{U}_i$'s, $\bm{G}$ and $\bm{V}_j$'s. 

This paper considers an alternating least squares algorithm \citep{de1976additive, zhou2013tensor} for these individual blocks, $\bm{U}_1, \dots, \bm{U}_n, \bm{G}, \bm{V}_1, \dots, \bm{V}_n$, leading to an alternating update on each of them while keeping the other blocks fixed.
Importantly, the update of each block is cast into a traditional linear regression problem. Specifically, when updating $\bm{U}_i$, it is equivalent to find the OLS estimate for the following regression problem:
\begin{equation*}
	\begin{split}
		\vectorize([\cm{Y}_t]_{(i)}) & = \bm{X}_{t,i} \vectorize(\bm{U}_i)
		+ \vectorize([\cm{E}_t]_{(i)}) \\
		\text{with} \hspace{3mm}
		\bm{X}_{t,i} = [(\bm{U}_n\odot\cdots\odot\bm{U}_{i+1}&\odot\bm{U}_{i-1}\odot \cdots\odot\bm{U}_1)\mathrm{Diag}(\tilde{\bm{X}}_t)]\otimes \bm{I}_{q_i} \in \mathbb{R}^{Q \times q_iR_y}
	\end{split}
\end{equation*} 
being the design matrix and $\bm{U}_i$ being unknown parameters. Here $\tilde{\bm{X}}_t = \bm{G}(\bm{I}_p \otimes \bm{\Lambda}_x^\top)\vectorize(\cm{X}_t)\in\mathbb{R}^{R_y}$, and $\mathrm{Diag}(\cdot)$ takes its argument to form a diagonal matrix. 
Similarly, when updating $\bm{V}_j$, we can rewrite model \eqref{model:arp-cp} as
\begin{equation*}
	\vectorize(\cm{Y}_t) = \bm{X}_{t,j}	\vectorize(\bm{V}_j) + \vectorize(\cm{E}_t)
	\hspace{3mm}\text{with}\hspace{3mm}
	\bm{X}_{t,j} = \sum_{r=1}^{R_x}\bm{b}_r^\top \otimes (\bm{\Lambda}_y \tilde{\bm{G}}_r\tilde{\bm{X}}_t^{(j,r)}{}^\top) \in \mathbb{R}^{Q \times q_jR_x},
\end{equation*}
where $\bm{b}_r \in \mathbb{R}^{R_x}$ is the basis vector with the $r$-th entry being one and the remaining being zeros, $\tilde{\bm{G}}_r = (\bm{G}_{1,r}, \dots, \bm{G}_{P,r}) \in \mathbb{R}^{R_y \times P}$, $\bm{G}_{k,r}\in\mathbb{R}^{R_y}$ denotes the $r$-th column of $\bm{G}_k$, $\tilde{\bm{X}}_t^{(j,r)} = (\bm{y}_{t-1}^{(j,r)}, \dots,\bm{y}_{t-P}^{(j,r)}) \in \mathbb{R}^{q_j\times P}$, and $\bm{y}_{t-k}^{(j,r)} = [\cm{Y}_{t-k}]_{(j)}(\bm{v}^{(n)}_{r} \otimes \cdots \otimes \bm{v}^{(j+1)}_{r}\otimes \bm{v}^{(j-1)}_{r}\otimes \cdots \otimes \bm{v}^{(1)}_{r}) \in \mathbb{R}^{q_j}$. When updating $\bm{G}$, model \eqref{model:arp-cp} can be rewritten as
\begin{equation*}
	\vectorize(\cm{Y}_t) = \bm{X}_{t} \vectorize(\bm{G}) + \vectorize(\cm{E}_t)
	\hspace{3mm}\text{with}\hspace{3mm}
	\bm{X}_{t} = [\vectorize(\cm{X}_t)^\top (\bm{I}_P \otimes \bm{\Lambda}_x)]\otimes \bm{\Lambda}_y \in \mathbb{R}^{Q \times PR_yR_x}.
\end{equation*}
The OLS estimate for each of these subproblems has a closed form.
Besides, define the column-wise normalization operation as $\mathrm{ColNorm}(\bm{U}) = \bm{U}\bm{D}$ for any matrix $\bm{U}$ where $\bm{D}$ is diagonal with $\bm{D}_{ii} = \norm{\bm{u}_i}_2^{-1}$ and $\bm{u}_i$ denoting the $i$-th column of $\bm{U}$.
For simplicity of notations, denote $\{\bm{U}_i\}_{i=1}^n$, $\{\bm{U}_l\}_{l=1}^{i-1}$, $\{\bm{U}_l\}_{l=i+1}^{n}$ as $\bm{U}_{1\leq i \leq n}$, $\bm{U}_{l < i}$, $\bm{U}_{l>i}$, respectively. Similar notations apply to $\bm{V}$.
The full estimation procedure is encapsulated in Algorithm \ref{alg:low-rank}.
\begin{algorithm}[H]
	\caption{ALS for CP-based low-rank estimation}\label{alg:low-rank}
	\begin{algorithmic}
		\STATE \textbf{Input:} data $\{\cm{Y}_t\}_{t=1}^T$, order $P$, initial value $\cm{A}^{(0)} = \mathcal{M}_n\{ ( \bm{U}_n^{(0)} \odot \cdots \odot \bm{U}_1^{(0)}) {\bm{G}^{(0)} } [\bm{I}_P \otimes  (\bm{V}_n^{(0)} \odot \cdots \odot \bm{V}_1^{(0)})^\top] \}$
		\REPEAT
		\STATE for $i = 1, 2, \dots, n$, obtain
		$\bm{U}_i^{(k+0.5)} =\argmin_{\bm{U}_i} \mathcal{L}_T(\bm{U}_{l<i}^{(k+0.5)}, \bm{U}_i,  \bm{U}_{l>i}^{(k)}, \bm{G}^{(k)}, \bm{V}_{1\leq j \leq n}^{(k)})$ 
		\STATE for $j=1,2, \dots, n$, obtain
		$\bm{V}_j^{(k+0.5)} = \argmin_{\bm{V}_j} \mathcal{L}_T(\bm{U}_{1\leq i \leq n}^{(k+0.5)}, \bm{G}^{(k)},  \bm{V}_{l<j}^{(k+0.5)}, \bm{V}_{j},\bm{V}_{l>j}^{(k)})$
		\STATE for $i = 1, 2, \dots, n$, obtain $\bm{U}_i^{(k+1)} = \text{ColNorm}(\bm{U}_i^{(k+0.5)})$ and $\bm{V}_i^{(k+1)} = \text{ColNorm}(\bm{V}_i^{(k+0.5)})$
		
		\STATE $\bm{G}^{(k+1)} = \argmin_{\bm{G}} \mathcal{L}_T(\bm{U}_{1 \leq i \leq n}^{(k+1)},\bm{G}, \bm{V}_{1 \leq j \leq n}^{(k+1)})$	
		\UNTIL{convergence}
		\STATE \textbf{Return } $\cm{A}^{(K)} = \mathcal{M}_n\{ ( \bm{U}_n^{(K)} \odot \cdots \odot \bm{U}_1^{(K)}) {\bm{G}^{(K)} } [\bm{I}_P \otimes  (\bm{V}_n^{(K)} \odot \cdots \odot \bm{V}_1^{(K)})^\top] \}$ where $K$ is the terminated iteration
	\end{algorithmic}
\end{algorithm}


\begin{remark}
	The algorithm to search for the low-rank estimate defined in \eqref{eq:est-lr} can be easily obtained by modifying Algorithm \ref{alg:low-rank}. The loss function shall be changed to $\mathcal{L}^\mathrm{LR}_T(\bm{U}_{1\leq i \leq n}, \bm{G}, $ $\bm{V}_{1\leq j \leq n})\defeq \mathcal{L}^\mathrm{LR}_T( \mathcal{M}_n[(\bm{U}_n \odot \cdots \odot \bm{U}_1)\bm{G}(\bm{V}_m \odot \cdots \odot \bm{V}_1)^\top])$, where $\bm{V}_j \in \mathbb{R}^{p_j \times R_x}$ for $1 \leq j \leq m$, $\bm{U}_i$'s and $\bm{G}$ are defined in the same way as before. And the final output becomes $\cm{A}^{(K) }=  \mathcal{M}_n[( \bm{U}_n^{(K)} \odot \cdots \odot \bm{U}_1^{(K)}) {\bm{G}^{(K)} }(\bm{V}_m^{(K)} \odot \cdots \odot \bm{V}_1^{(K)})^\top]$.
\end{remark}

We next construct an algorithm to search for the estimates in \eqref{eq:least-square-est-sparse}. Following \cite{cai2023generalized}, we propose Algorithm \ref{alg:low-rank-sparse} that involves three major steps: trimming, updating the estimate of the sparse tensor $\cm{A}_\mathrm{S}$ and that of the low-rank tensor $\cm{A}_\mathrm{L}$. To start with, we define a trimming operator $\mathrm{Trim}(\cm{A}, \zeta)$ which trims each entry of $\cm{A}$, denoted as $\cm{A}_{i_1, \dots, i_n, k, j_1, \dots, j_n}$, to $\zeta \cdot \text{sign}(\cm{A}_{i_1, \dots, i_n, k, j_1, \dots, j_n})$ if its magnitude is larger than $\zeta$, and does nothing otherwise.
At each iteration, the trimming operation is first applied to $\cm{A}_\mathrm{L}$ to ensure the low-rank and sparse components are identifiable. Then keeping $\cm{A}_\mathrm{L}$ fixed, the search of $\cm{A}_\mathrm{S}$ becomes solving a linear regression problem with a Lasso penalty, for which many optimization algorithms exist, such as the coordinate descent \citep{hastie2015statistical}. Subsequently, fixing $\cm{A}_\mathrm{S}$, we can update $\cm{A}_\mathrm{L}$ using the same routine as in Algorithm \ref{alg:low-rank}, with the loss function adjusted to
\begin{equation*}
	\label{eq:loss-sparse}
	\mathcal{L}_T(\bm{U}_{1\leq i \leq n}, \bm{G}, \bm{V}_{1\leq j \leq n})  \defeq \mathcal{L}_T( \mathcal{M}_n\{(\bm{U}_n \odot \cdots \odot \bm{U}_1)\bm{G}[\bm{I}_p \otimes (\bm{V}_n \odot \cdots \odot \bm{V}_1)]^\top\} + \cm{A}_\mathrm{S}).
\end{equation*}

\begin{algorithm}[H]
	\caption{$l_1$-regularized optimization and ALS for low-rank plus sparse estimation}\label{alg:low-rank-sparse}
	\begin{algorithmic}
		\STATE \textbf{Input:} data $\{\cm{Y}_t\}_{t=1}^T$, order $P$, initial value $\cm{A}_\mathrm{L}^{(0)} =  \mathcal{M}_n\{ ( \bm{U}_n^{(0)} \odot \cdots \odot \bm{U}_1^{(0)}) {\bm{G}^{(0)} } [\bm{I}_P \otimes  (\bm{V}_n^{(0)} \odot \cdots \odot \bm{V}_1^{(0)})^\top] \}$, penalty strength $\lambda$, radius of non-identifiability $\alpha_\mathrm{L}$
		\REPEAT
		\STATE $\bar{\cm{A}}_\mathrm{L}^{(k)} = \text{Trim}(\cm{A}_\mathrm{L}^{(k)}, \alpha_\mathrm{L} / (PQ^2))$
		\STATE $\cm{A}_\mathrm{S}^{(k+1)} = \argmin_{\cm{A}_\mathrm{S}} \mathcal{L}_T(\bar{\cm{A}}_\mathrm{L}^{(k)} + \cm{A}_\mathrm{S}) + \lambda \norm{\cm{A}_\mathrm{S}}_1$
		\STATE $\cm{A}_\mathrm{L}^{(k+1)} = \argmin_{\cm{A}_\mathrm{L}} \mathcal{L}_T(\cm{A}_\mathrm{L} + \cm{A}_\mathrm{S}^{(k+1)})$ using Algorithm \ref{alg:low-rank}
		\UNTIL{convergence}
		\STATE \textbf{Return } $\cm{A}^{(K)}_\mathrm{L}$ and $\cm{A}^{(K)}_\mathrm{S}$ where $K$ is the terminated iteration
	\end{algorithmic}
\end{algorithm}

Lastly, we consider the selection of hyperparameters and initializations of the algorithms. In Algorithm \ref{alg:low-rank}, the autoregressive order $P$ and ranks $(R_y, R_x)$ need to be selected. For computational efficiency, the hold-out method is employed, where the data is split into training and validation sets, each with $T_\mathrm{train}$ and $T_\mathrm{val}$ samples. We first use Algorithm \ref{alg:low-rank} to fit our low-rank model on the training set; and then conduct a rolling forecast on the validation set, i.e., fit the model to historical data with the ending point iterating from $\cm{Y}_{T_\mathrm{train}}$ to $\cm{Y}_{T_\mathrm{train} + T_\mathrm{val}-1}$ and then make one-step-ahead prediction for each iteration. The parameters $P, R_y, R_x$ are selected by minimizing the mean squared error of predictions on the validation set over the ranges $1 \leq P \leq P^{\mathrm{max}}$, $1 \leq R_y \leq R_y^{\mathrm{max}}$ and $1 \leq R_x \leq R_x^{\mathrm{max}}$ for some predetermined upper bounds. 
For Algorithm \ref{alg:low-rank-sparse},  we set the radius of non-identifiability $\alpha_\mathrm{L} = P^{1/2}Q$ for simplicity and use the same hold-out method to select $P, R_y, R_x$ and the penalty strength $\lambda$, where a predetermined candidate set is considered for $\lambda$ as well.
Given these hyperparameters, we follow \cite{zhou2013tensor} to adopt a random initialization, i.e., we randomly initialize $\bm{U}_{1\leq i \leq n}$ and $\bm{V}_{1\leq j \leq n}$ and normalize their columns to get $\bm{U}^{(0)}_{1\leq i \leq n}$ and $\bm{V}^{(0)}_{1\leq j \leq n}$, while $\bm{G}^{(0)}$ is obtained through $\bm{G}^{(0)} = \arg \min_{\bm{G}}\mathcal{L}_T(\bm{U}_{1\leq i \leq n}^{(0)}, \bm{G}, \bm{V}_{1\leq j \leq n}^{(0)})$. Note that the algorithm should be run for multiple initializations to locate a better stationary point.

\section{Theoretical properties}
\label{sec:theory}
\subsection{Tensor-on-tensor regression}
As a preliminary discussion for tensor autoregression, this subsection first establishes the nonasymptotic property of the low-rank estimator $\widehat{\cm{A}}_\mathrm{LR}$ in \eqref{eq:est-lr} for the tensor-on-tensor regression setting, i.e., $\{(\cm{Y}_t, \cm{X}_t)\}_{t=1}^T$ are independent and identically distributed (i.i.d.) observations. 
Prior to its presentation, we introduce the following assumptions, where we denote $\vectorize(\cm{X}_t)$ and $\vectorize(\cm{E}_t)$ as $\bm{x}_t$ and $\bm{e}_t$, respectively.

\begin{assumption}[Regression inputs]\label{assump:reg-input}
	Covariates $\{\bm{x}_t\}$ are i.i.d. with $\mathbb{E}(\bm{x}_t)=\bm{0}$, $\mathrm{var}(\bm{x}_t)=\bm{\Sigma}_x$ and $\sigma^2$-sub-Gaussian distribution. There exist constants $0<c_{x}<C_{x}<\infty$ such that 
	$c_x \leq \lambda_{\min}(\bm{\Sigma}_x) \leq  \lambda_{\max}(\bm{\Sigma}_x)\leq C_x$,
	where the two quantities $c_x$ and $C_x$ depend on the dimensions of $p_j$'s, and they may shrink to zero or diverge to infinity as the dimensions increase.
\end{assumption}

\begin{assumption}[Regression errors]\label{assump:reg-error}
	Error terms $\{\bm{e}_t\}$ are i.i.d. and, conditional on $\bm{x}_t$, $\bm{e}_t$ has mean zero and follows  $\kappa^2$-sub-Gaussian distribution.
\end{assumption}

\begin{assumption}\label{assump:core}
	The matrix $\bm{G}$ has its operator norm upper bounded by a constant $g$ with $g > 0$.
\end{assumption}
The first two sub-Gaussian assumptions on the covariates and error terms are common in literature \citep{wainwright2019high, si2024efficient}. Assumption \ref{assump:core} imposes an upper bound for the matrix $\bm{G}$, which is not a stringent assumption since the dimensions of $\bm{G}$, $R_y$ and $R_x$, tend to be small.
Denote by $d_\mathrm{LR} = R_yR_x + R_y\sum_{i=1}^{n}q_i + R_x\sum_{j=1}^{m}p_j$ the model complexity. Theorem \ref{thm:reg} gives the nonasymptotic error upper bound.

\begin{theorem}\label{thm:reg}
	Suppose that Assumptions \ref{assump:reg-input} - \ref{assump:core}  hold, and the true coefficient tensor $\cm{A}_\mathrm{LR}^* \in \bm{\Theta}_\mathrm{LR}(\underline{q}, \underline{p}, R_y, R_x)$.
	If the sample size $T \gtrsim \max(1, \sigma^4)(d_cd_{\mathrm{LR}} + \sum_{i=1}^{n}\log q_i)$, then
	\begin{align*}
		\|\cm{\widehat{A}}_{\mathrm{LR}}-\cm{A}^*_\mathrm{LR}\|_{\mathrm{F}}&\lesssim \frac{\kappa \sigma}{c_x}\sqrt{\frac{d_{c}d_{\mathrm{LR}}}{T}}
	\end{align*}
	with probability at least $1 - \exp\left\{-c_1d_cd_{\mathrm{LR}}\right\}
	-\exp\left\{-c_2d_c^\prime d_{\mathrm{LR}}^\prime - c_3\sum_{i=1}^n \log q_i\right\}$, where 
	$d_c = \log ((m \vee n )(R_y \wedge R_x)R_y^{1/2}R_x^{1/2})$, $d_\mathrm{LR}^\prime = R_yR_x + R_y + R_x\sum_{j=1}^{m}p_j$, $d_c^\prime = \log ({m(R_y \wedge R_x)R_y^{1/2}R_x^{1/2}})$,
	and 
	$c_i$'s are some positive constants.
\end{theorem}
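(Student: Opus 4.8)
The plan is to combine the basic inequality that follows from optimality of $\widehat{\cm{A}}_\mathrm{LR}$ with a peeling/covering argument over the constraint set $\bm{\Theta}_\mathrm{LR}(\underline{q},\underline{p},R_y,R_x)$. First I would write $\widehat{\bm{\Delta}} = [\widehat{\cm{A}}_\mathrm{LR}]_n - [\cm{A}^*_\mathrm{LR}]_n$ and, using $\mathcal{L}^\mathrm{LR}_T(\widehat{\cm{A}}_\mathrm{LR}) \le \mathcal{L}^\mathrm{LR}_T(\cm{A}^*_\mathrm{LR})$, arrive at the standard bound
\begin{equation*}
	\frac{1}{T}\sum_{t=1}^T \norm{\widehat{\bm{\Delta}} \bm{x}_t}_2^2 \le \frac{2}{T}\sum_{t=1}^T \langle \bm{e}_t, \widehat{\bm{\Delta}} \bm{x}_t \rangle.
\end{equation*}
The left side is controlled below by a restricted-eigenvalue type argument: for $\bm{x}_t$ i.i.d. $\sigma^2$-sub-Gaussian with covariance bounded below by $c_x$, one shows $T^{-1}\sum_t \norm{\bm{\Delta} \bm{x}_t}_2^2 \gtrsim c_x \norm{\bm{\Delta}}_\Fr^2$ uniformly over the set of differences of two elements of $\bm{\Theta}_\mathrm{LR}$, provided $T$ exceeds the effective dimension of that difference set times a logarithmic factor — this is exactly where the $d_c d_\mathrm{LR}$ and $\sum_i \log q_i$ terms enter the sample-size requirement and the probability bound. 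The key structural fact is that the difference set is not itself a CP-low-rank manifold, but it lies in a union of low-dimensional sets whose metric entropy at scale $\delta$ is of order $d_\mathrm{LR}\log(1/\delta)$ plus the $\sum_i \log q_i$ discretization term; I would establish a covering-number bound for $\mathcal{S}(\underline{q},R_y)$ and $\mathcal{S}(\underline{p},R_x)$ separately (each column is a rank-one tensor with unit-norm factors, so its $\epsilon$-net has size $(C/\epsilon)^{\sum_i q_i}$) and for the bounded matrix $\bm{G}$, then take products and use the bound on $\norm{\bm{G}}_\op \le g$.

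Next I would bound the right side, the "noise" term. Conditional on $\{\bm{x}_t\}$, $\sup_{\bm{\Delta}} \langle \bm{e}_t, \bm{\Delta}\bm{x}_t\rangle / (\norm{\bm{\Delta}}_\Fr \cdot \text{scale})$ is a sub-Gaussian process, and a chaining / discretization argument over the same net gives, with the stated probability,
\begin{equation*}
	\sup_{\bm{\Delta} \in \mathcal{D}} \frac{1}{T}\left|\sum_{t=1}^T \langle \bm{e}_t, \bm{\Delta}\bm{x}_t\rangle\right| \lesssim \kappa \norm{\bm{\Delta}}_\Fr \sqrt{\frac{C_x\, d_c d_\mathrm{LR}}{T}},
\end{equation*}
where $\mathcal{D}$ is the (rescaled) difference set and $d_c$ collects the logarithmic factors coming from the number of Khatri-Rao factors $(m\vee n)$ and the rank quantities. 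The asymmetry between the $d_\mathrm{LR}, d_c$ in the sample-size condition and the primed quantities $d_\mathrm{LR}', d_c'$ in the second exponential comes from separating the event that the restricted lower bound holds (which needs the full $d_\mathrm{LR}$, since it involves discretizing both $\bm{\Lambda}_y$ and $\bm{\Lambda}_x$) from the event controlling the noise term after fixing the $\bm{\Lambda}_y$ side, which only costs the $R_y$-dimensional and $R_x\sum_j p_j$-dimensional pieces plus the $\sum_i \log q_i$ discretization over the response modes.

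Combining the two displays, dividing through by $\norm{\widehat{\bm{\Delta}}}_\Fr$ and rearranging yields $\norm{\widehat{\bm{\Delta}}}_\Fr \lesssim (\kappa/c_x)\sqrt{C_x\, d_c d_\mathrm{LR}/T}$; absorbing $\sqrt{C_x}$ into the implicit constant (or carrying it, since $C_x$ is treated as a constant in the statement) gives the claimed $\frac{\kappa\sigma}{c_x}\sqrt{d_c d_\mathrm{LR}/T}$ bound, with the $\sigma$ appearing because the restricted lower bound and the sub-Gaussian norm of $\bm{\Delta}\bm{x}_t$ both scale with $\sigma$. I expect the \textbf{main obstacle} to be the uniform restricted lower eigenvalue bound over the CP-parametrized difference set: unlike the Tucker case, the set $\{[\cm{A}]_n - [\cm{A}^*]_n : \cm{A}\in\bm{\Theta}_\mathrm{LR}\}$ is a difference of two Khatri-Rao images and is not convex, nor is it a smooth manifold of the naive dimension; making the entropy bound tight — so that only $d_\mathrm{LR}$ (linear in tensor order) and not a product of ranks appears — requires carefully exploiting that each loading column is rank-one with unit-norm factors and that $\bm{G}$ is a small bounded matrix, and handling the $(\bm{\Lambda}_y^\top\bm{\Lambda}_y)^{-1}$ reparametrization so that the Frobenius geometry on $[\cm{A}]_n$ is comparable to the product geometry on the factors. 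A secondary technical point is ensuring the discretization error from the finite net does not dominate, which forces the extra $\sum_i \log q_i$ terms and motivates choosing the net resolution $\delta$ polynomially small in the dimensions.
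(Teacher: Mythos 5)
Your proposal follows essentially the same route as the paper: the basic inequality from optimality of $\widehat{\cm{A}}_\mathrm{LR}$, a restricted strong convexity lower bound with constant proportional to $c_x$, a deviation bound for $\sup_{\bm{\Delta}}T^{-1}\sum_t\langle\bm{e}_t,\bm{\Delta}\bm{x}_t\rangle$ via covering numbers of the CP-parametrized difference set, and a final rearrangement; the covering-number bookkeeping you describe (unit-norm rank-one columns, bounded $\bm{G}$, product of nets) matches Lemma \ref{lemma:covering}. The one point you have reversed is the provenance of the primed quantities: in the paper's argument the deviation bound is taken over the \emph{full} difference set $\bm{\Xi}'_\mathrm{LR}$ and yields both the rate $\kappa\sigma\sqrt{d_cd_\mathrm{LR}/T}$ and the first exponential $\exp\{-c_1d_cd_\mathrm{LR}\}$ (Lemma \ref{lemma:reg-deviation}), whereas the RSC is what produces $d'_\mathrm{LR}$, $d'_c$ and $\sum_i\log q_i$ -- it is established coordinate-by-coordinate for each of the $Q=\prod_i q_i$ response entries, where a single row $\bm{\Delta}^j$ is parametrized by only an $R_y$-dimensional slice of $\bm{\Lambda}_y$ (hence $d'_\mathrm{LR}=R_yR_x+R_y+R_x\sum_jp_j$), and a union bound over the $Q$ rows contributes $\sum_i\log q_i$ (Lemmas \ref{lemma:RSCregression} and \ref{lemma:reg-rsc-general}). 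Your attribution of the primed terms to the noise bound is also internally inconsistent with your (correct) claim that the final rate $\sqrt{d_cd_\mathrm{LR}/T}$ comes from the noise term: the chaining bound's tail level and its rate are tied together, so a noise bound costing only $d'_cd'_\mathrm{LR}$ would produce the smaller rate $\sqrt{d'_cd'_\mathrm{LR}/T}$, contradicting the theorem. This is a bookkeeping error rather than a fatal gap -- carrying out the RSC step would force the row-wise decomposition on you -- but as written the account of where the second probability term comes from is wrong.
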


Note that the parameter space $\bm{\Theta}_\mathrm{LR}(\underline{q}, \underline{p}, R_y, R_x)$ differs from the standard space defined by Tucker decomposition due to the non-orthogonality of loading matrices, which leads to an additional term $d_c$ in its covering number. Besides, the non-standard structure also poses challenges in establishing restricted strong convexity (RSC). We solve this by deriving the RSC property for each response first, and this leads to the additional terms $d_\mathrm{LR}^\prime$ and $d_c^\prime$; please refer to Lemma \ref{lemma:RSCregression} in Supplementary Material for more details. 
When the orders $n, m$ and the ranks $R_y, R_x$ are fixed, the parameters $\sigma^2$, $\kappa^2$, $C_x$, and $c_x$ are bounded away from zero and infinity. We thus have $\|\cm{\widehat{A}}_{\mathrm{LR}}-\cm{A}^*_\mathrm{LR}\|_{\mathrm{F}} = O_p(\sqrt{d_\mathrm{LR}/ T})$. Note that if assuming the coefficient tensor $\cm{A}$ to have a Tucker decomposition with Tucker ranks $(r_1, r_2, \dots, r_{n+m})$, the corresponding OLS estimator will have an error upper bound of $O_p(\sqrt{d_\mathrm{Tucker}/ T})$, where $d_\mathrm{Tucker}$ is given in Remark \ref{remark:tucker} and could be quite large even if $n$ and $m$ are small values.

\subsection{Tensor autoregression}
This subsection considers the tensor autoregression setting that is specified in model \eqref{model:ARp}, and also begins with introducing the assumptions.

\begin{assumption}[Autoregression errors]\label{assump:errorauto}
	For the error term, let $\bm{e}_t =\bm{\Sigma}_e^{1/2}\bm{\xi}_t$. Random vectors $\{\bm{\xi}_t\}$ are i.i.d. with $\mathbb{E}(\bm{\xi}_t)=\bm{0}$ and $\mathrm{var}(\bm{\xi}_t)=\bm{I}_Q$, the entries $(\bm{\xi}_{tj})_{1\leq j\leq Q}$ of $\bm{\xi}_t$ are mutually independent and $\kappa^2$-sub-Gaussian distributed, and there exist constants $0<c_{e}<C_{e}<\infty$ such that $c_{e} \leq \lambda_{\min}(\bm{\Sigma}_e) \leq \lambda_{\max}(\bm{\Sigma}_e)\leq C_e$.
\end{assumption}

\begin{assumption}[Autoregression stationarity]\label{assump:stationarity}
	The determinant of the matrix polynomial $\cm{A}(z) = \bm{I}_Q-\bm{A}_1 z-\cdots-\bm{A}_P z^P$ with $\bm{A}_k = [\cm{A}_k]_n$ is not equal to zero for all $1\leq k \leq P$, $z \in \mathbb{C}$ and $|z| < 1$.
\end{assumption}

The sub-Gaussian condition in Assumption \ref{assump:errorauto} is commonly imposed in high-dimensional time series literature \citep{zheng2020finite, wang2024high}. Assumption \ref{assump:stationarity} is a necessary and sufficient condition for the existence of a unique strictly stationary solution to model \eqref{model:ARp}. It also guarantees the eigenvalues of the Hermitian matrix $\bar{\cm{A}}(z)\cm{A}(z)$ over the unit circle $\{z \in \mathbb{C}: |z|=1\}$ are positive, where $\bar{\cm{A}}(z)$ denotes the conjugate transpose of $\cm{A}(z)$. Following \cite{basu2015regularized}, we define the following quantities,
\begin{align*}
	\mu_{\mathrm{min}}(\cm{A}) = \min_{|z|=1} \lambda_{\mathrm{min}}(\bar{\cm{A}}(z)\cm{A}(z))
	\quad \text{and} \quad
	\mu_{\mathrm{max}}(\cm{A}) = \max_{|z|=1} \lambda_{\mathrm{max}}(\bar{\cm{A}}(z)\cm{A}(z)).
\end{align*}
Furthermore, the AR($P$) model in \eqref{model:ARp} can be alternatively represented as a $PQ$-dimensional AR(1) model $\bm{x}_t = \bm{B} \bm{x}_{t-1}+\bm{\zeta}_t$ with 
\begin{equation}
	\bm{x}_t=
	\begin{bmatrix}
		\vectorize(\cm{Y}_t) \\ \vectorize(\cm{Y}_{t-1}) \\ \vdots \\ \vectorize(\cm{Y}_{t-P+1})
	\end{bmatrix},
	\quad 
	\bm{B}=
	\begin{bmatrix}
		[\cm{A}_1]_n & [\cm{A}_2]_n & \cdots & [\cm{A}_{P-1}]_n &[\cm{A}_P]_n\\
		\bm{I}_Q &\bm{0} & \cdots & \bm{0} & \bm{0}\\
		\bm{0} & \bm{I}_Q & \cdots & \bm{0} & \bm{0} \\
		\vdots & \vdots & \ddots & \vdots & \vdots \\
		\bm{0} & \bm{0} & \cdots & \bm{I}_Q & \bm{0} 
	\end{bmatrix}, 
	\quad
	\bm{\zeta}_t=
	\begin{bmatrix}
		\bm{e}_t \\ \bm{0} \\ \vdots \\ \bm{0} 
	\end{bmatrix}.
	\label{eq:ar1-B}
\end{equation}
And we similarly define the matrix polynomial $\bm{B}(z) = \bm{I}_{PQ} - \bm{B}z$, its conjucate transpose $\bar{\bm{B}}(z)$, and the quantities $\mu_{\mathrm{min}}(\bm{B}) = \min_{|z|=1} \lambda_{\mathrm{min}}(\bar{\bm{B}}(z)\bm{B}(z))$ as well as 
$\mu_{\mathrm{max}}(\bm{B}) = \max_{|z|=1} \lambda_{\mathrm{max}}(\bar{\bm{B}}(z)\bm{B}(z))$. Note that $\mu_{\mathrm{min}}(\bm{B})$, $\mu_{\mathrm{max}}(\bm{B})$ are not necessarily the same as $\mu_{\mathrm{min}}(\cm{A})$, $\mu_{\mathrm{max}}(\cm{A})$ \citep{basu2015regularized}.
Besides, to establish the nonasymptotic properties in autoregression settings, Assumption \ref{assump:core} is also required, which is mild because a large operator norm of $\bm{G}$ could cause the nonstationarity of AR models.

We next give the error upper bound for the estimator $\widehat{\cm{A}}$ defined in \eqref{eq:least-square-est} and that for the estimator $\widehat{\cm{A}}_\mathrm{ALR}$ which is obtained through \eqref{eq:est-lr}.
\begin{theorem}\label{thm:autoreg}
	Suppose that Assumptions \ref{assump:core} - \ref{assump:stationarity} hold, and the true coefficient tensor $\cm{A}^\ast \in \bm{\Theta}(\underline{q}, P, R_y, R_x)$. If the sample size $T - P \gtrsim \max\left(1,\kappa^4\kappa_{U,B}^2/\kappa_{L,A}^2\right)$ $d_{\mathrm{AR}} d_{\mathrm{c}}$, then
	\begin{align*}
		\|{\widehat{\cm{A}}-\cm{A}^\ast}\|_{\mathrm{F}}&\lesssim 
		\frac{\kappa^2\kappa_{U,B}'}{\kappa_{L,A}}
		\sqrt{\frac{d_{\mathrm{AR}}d_c}{T-P}} 
	\end{align*}
	with probability at least $1-c_1\exp\left\{-c_2 d_{\mathrm{AR}}d_{c}\right\}$, where $d_{\mathrm{AR}} = PR_yR_x + (R_y+R_x)\sum_{i=1}^{n}q_i$, {$d_c = \log (nP^{1/2}R_yR_x^{1/2})$}, $\kappa^\prime_{U, B}=C_{e}/\mu^{1/2}_{\mathrm{min}}(\bm{B}^{\ast})$, $\kappa_{L, A}=c_{e}/\mu_{\mathrm{max}}(\cm{A}^\ast)$, $\kappa_{U, B}=C_{e}/\mu_{\mathrm{min}}(\bm{B}^{\ast})$, 
	$\bm{B}^{\ast}$ is the true value of $\bm{B}$ constructed from $\cm{A}^\ast$ as in \eqref{eq:ar1-B},
	and $c_i$'s are some positive constants.
\end{theorem}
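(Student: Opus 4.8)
The plan is to prove Theorem~\ref{thm:autoreg} by the standard three-ingredient argument for constrained least squares over a low-complexity parameter set: a basic inequality from optimality, a restricted strong convexity (RSC) lower bound on the empirical curvature, and a uniform upper bound on the noise term obtained by covering the non-convex CP parameter space $\bm{\Theta}(\underline{q},P,R_y,R_x)$. Write $\bm{\Delta}=[\widehat{\cm{A}}]_n-[\cm{A}^\ast]_n$, so that $\|\widehat{\cm{A}}-\cm{A}^\ast\|_{\mathrm{F}}=\|\bm{\Delta}\|_{\mathrm{F}}$, and let $\bm{E}=\bm{Y}-[\cm{A}^\ast]_n\bm{X}\in\mathbb{R}^{Q\times(T-P)}$ collect the errors $\vectorize(\cm{E}_t)$. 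Since both $\widehat{\cm{A}}$ and $\cm{A}^\ast$ lie in $\bm{\Theta}(\underline{q},P,R_y,R_x)$ and $\widehat{\cm{A}}$ minimizes $\mathcal{L}_T$, expanding $\mathcal{L}_T(\widehat{\cm{A}})\le\mathcal{L}_T(\cm{A}^\ast)$ gives
\[
\frac{1}{T-P}\|\bm{\Delta}\bm{X}\|_{\mathrm{F}}^2\;\le\;\frac{2}{T-P}\langle\bm{E},\bm{\Delta}\bm{X}\rangle .
\]
It then remains to lower bound the left side and upper bound the right side uniformly over the feasible difference set $\mathcal{D}=\{[\cm{A}_1]_n-[\cm{A}_2]_n:\cm{A}_1,\cm{A}_2\in\bm{\Theta}(\underline{q},P,R_y,R_x)\}$, which is a non-convex cone of effective dimension $\lesssim d_{\mathrm{AR}}$.

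For the RSC step I would work with the companion VAR(1) form $\bm{x}_t=\bm{B}^\ast\bm{x}_{t-1}+\bm{\zeta}_t$ of~\eqref{eq:ar1-B}, for which the stationary process $\{\bm{x}_t\}$ has a spectral density controlled, as in \cite{basu2015regularized}, by $\mu_{\mathrm{min}}(\cm{A}^\ast),\mu_{\mathrm{max}}(\cm{A}^\ast)$ and $\mu_{\mathrm{min}}(\bm{B}^\ast),\mu_{\mathrm{max}}(\bm{B}^\ast)$. In particular, the population second-moment matrix $\bm{\Gamma}_x=\mathbb{E}[\vectorize(\cm{X}_t)\vectorize(\cm{X}_t)^\top]$ of the regression covariate, which equals the companion state $\bm{x}_{t-1}$, obeys $\lambda_{\mathrm{min}}(\bm{\Gamma}_x)\gtrsim c_e/\mu_{\mathrm{max}}(\cm{A}^\ast)=\kappa_{L,A}$ and $\lambda_{\mathrm{max}}(\bm{\Gamma}_x)\lesssim C_e/\mu_{\mathrm{min}}(\bm{B}^\ast)=\kappa_{U,B}$. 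I would then establish that, with high probability and thanks to the sample-size condition,
\[
\frac{1}{T-P}\|\bm{\Delta}\bm{X}\|_{\mathrm{F}}^2\;\ge\;\frac{\kappa_{L,A}}{2}\|\bm{\Delta}\|_{\mathrm{F}}^2\qquad\text{for all }\bm{\Delta}\in\mathcal{D},
\]
by bounding the deviation of the quadratic forms $\bm{\Delta}\bigl(\bm{X}\bm{X}^\top/(T-P)\bigr)\bm{\Delta}^\top$ from their means over an $\epsilon$-net of $\mathcal{D}$ using a Hanson--Wright / martingale concentration inequality for the dependent sub-Gaussian linear process $\{\bm{x}_t\}$. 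Combining the net cardinality (next paragraph) with this concentration is what yields $T-P\gtrsim\max(1,\kappa^4\kappa_{U,B}^2/\kappa_{L,A}^2)\,d_{\mathrm{AR}}d_c$; this is the autoregression counterpart of the RSC lemma behind Theorem~\ref{thm:reg}, with the added twist that $\mu(\bm{B}^\ast)$ and $\mu(\cm{A}^\ast)$ must be tracked separately because they need not coincide.

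For the noise step I would parametrize each $\bm{\Delta}\in\mathcal{D}$ through its CP factor matrices $\{\bm{U}_i\}_{i=1}^n,\{\bm{V}_j\}_{j=1}^n$ and core $\bm{G}\in\mathbb{R}^{R_y\times PR_x}$, and build an $\epsilon$-net $\mathcal{N}$ of the unit ball of $\mathcal{D}$ by netting these factors; accounting for the $2n$ factor matrices of total size $(R_y+R_x)\sum_i q_i$ and the $PR_yR_x$-entry core gives $\log|\mathcal{N}|\lesssim d_{\mathrm{AR}}\log(nP^{1/2}R_yR_x^{1/2}/\epsilon)\asymp d_{\mathrm{AR}}d_c$, where the logarithmic overhead $d_c$ is, exactly as in Theorem~\ref{thm:reg}, attributable to the non-orthogonality of the CP loadings. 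For a fixed unit $\bm{\Delta}\in\mathcal{N}$, $\langle\bm{E},\bm{\Delta}\bm{X}\rangle=\sum_{t=P+1}^T\bm{e}_t^\top\bm{\Delta}\,\vectorize(\cm{X}_t)$ is a bilinear form in the dependent sub-Gaussian sequence: $\bm{e}_t=\bm{\Sigma}_e^{1/2}\bm{\xi}_t$ is $\kappa$-scaled, and $\vectorize(\cm{X}_t)$ is a causal moving average of past innovations whose spectral norm is bounded through $\kappa_{U,B}'=C_e/\mu_{\mathrm{min}}^{1/2}(\bm{B}^\ast)$; a concentration bound for such bilinear forms gives $|\langle\bm{E},\bm{\Delta}\bm{X}\rangle|\lesssim\kappa^2\kappa_{U,B}'\sqrt{(T-P)d_{\mathrm{AR}}d_c}$ with probability $1-\exp(-c\,d_{\mathrm{AR}}d_c)$, the factor $\kappa^2$ arising from the product of the two $\kappa$-scales. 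A union bound over $\mathcal{N}$ plus a discretization argument (using the RSC bound to absorb the $\epsilon$-slack) upgrades this to $\sup_{\bm{\Delta}\in\mathcal{D}}\langle\bm{E},\bm{\Delta}\bm{X}\rangle/\|\bm{\Delta}\|_{\mathrm{F}}\lesssim\kappa^2\kappa_{U,B}'\sqrt{(T-P)d_{\mathrm{AR}}d_c}$.

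Combining the pieces, on the intersection of the two high-probability events the basic inequality and RSC give $\tfrac{\kappa_{L,A}}{2}\|\bm{\Delta}\|_{\mathrm{F}}^2\le\tfrac{2}{T-P}\langle\bm{E},\bm{\Delta}\bm{X}\rangle\lesssim\kappa^2\kappa_{U,B}'\,\|\bm{\Delta}\|_{\mathrm{F}}\sqrt{d_{\mathrm{AR}}d_c/(T-P)}$, and cancelling $\|\bm{\Delta}\|_{\mathrm{F}}$ yields $\|\widehat{\cm{A}}-\cm{A}^\ast\|_{\mathrm{F}}\lesssim(\kappa^2\kappa_{U,B}'/\kappa_{L,A})\sqrt{d_{\mathrm{AR}}d_c/(T-P)}$, the stated bound; the analogous result for $\widehat{\cm{A}}_{\mathrm{ALR}}$ follows by the same route with $\bm{\Theta}$ replaced by $\bm{\Theta}_{\mathrm{LR}}$ and the i.i.d.\ design arguments of Theorem~\ref{thm:reg}. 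I expect the main obstacle to be the RSC step under serial dependence: unlike the i.i.d.\ regression case, $\bm{X}\bm{X}^\top/(T-P)$ concentrates only at a rate governed by the spectral radius of the companion process, the two families of quantities $\mu(\bm{B}^\ast)$ and $\mu(\cm{A}^\ast)$ enter the lower and upper curvature bounds respectively, and the entire argument must be made uniform over the non-convex CP difference set $\mathcal{D}$, whose covering number already costs the $d_c$ factor --- a genuine entanglement of time-series dependence with low-rank tensor geometry.
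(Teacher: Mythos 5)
Your proposal is correct and follows essentially the same route as the paper: the basic inequality from optimality, a restricted strong convexity lower bound with constant $\kappa_{L,A}$ obtained via the companion VMA$(\infty)$ representation, Hanson--Wright concentration and the spectral-measure bounds of Basu and Michailidis, and a uniform deviation bound over the CP difference set whose covering number contributes the $d_{\mathrm{AR}}d_c$ complexity. The only cosmetic difference is that the paper runs the uniformization through generic chaining (Talagrand $\gamma$-functionals bounded by the Gaussian width via Dudley's integral) rather than a single-scale $\epsilon$-net union bound, which yields the same rate.
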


\begin{theorem}\label{thm:autoreg-lr}
	Suppose that Assumptions \ref{assump:core} - \ref{assump:stationarity} hold, and the true coefficient tensor $\cm{A}_\mathrm{ALR}^\ast \in \bm{\Theta}_\mathrm{LR}(\underline{q}, \{P, \underline{q}\}, R_y, R_x)$. If the sample size $T - P \gtrsim \max\left(1,\kappa^4\bar{\kappa}_{U,B}^2/\bar{\kappa}_{L,A}^2\right)$ $d_{\mathrm{ALR}} d_{\mathrm{c}}$, then
	\begin{align*}
		\|{\widehat{\cm{A}}_\mathrm{ALR}-\cm{A}_{\mathrm{ALR}}^\ast}\|_{\mathrm{F}}&\lesssim 
		\frac{\kappa^2\bar{\kappa}_{U,B}'}{\bar{\kappa}_{L,A}}
		\sqrt{\frac{d_{\mathrm{ALR}}d_c}{T-P}} 
	\end{align*}
	with probability at least $1-c_1\exp\left\{-c_2 d_{\mathrm{ALR}}d_{c}\right\}$, where $d_{\mathrm{ALR}} = R_yR_x + PR_x + (R_y+R_x)\sum_{i=1}^{n}q_i$, {$d_c = \log ((n+1)(R_y \wedge R_x)R_y^{1/2}R_x^{1/2})$}, $\bar{\kappa}^\prime_{U, B}=C_{e}/\mu^{1/2}_{\mathrm{min}}(\bm{B}_{\mathrm{ALR}}^{\ast})$, $\bar{\kappa}_{L, A}=c_{e}/\mu_{\mathrm{max}}(\cm{A}^\ast_\mathrm{ALR})$, $\bar{\kappa}_{U, B}=C_{e}/\mu_{\mathrm{min}}(\bm{B}_\mathrm{ALR}^{\ast})$, 
	$\bm{B}^{\ast}_\mathrm{ALR}$ is the true value of $\bm{B}_\mathrm{ALR}$ constructed from $\cm{A}_\mathrm{ALR}^\ast$ as in \eqref{eq:ar1-B},
	and $c_i$'s are some positive constants.
\end{theorem}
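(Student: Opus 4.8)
The plan is to run the standard $M$-estimation template --- basic inequality, restricted strong convexity (RSC), deviation bound, then combine --- and to supply its two non-standard ingredients by grafting together the arguments already used for Theorems \ref{thm:reg} and \ref{thm:autoreg}. Since $\cm{A}_\mathrm{ALR}^\ast$ lies in the tensor-on-tensor space $\bm{\Theta}_\mathrm{LR}(\underline{q},\{P,\underline{q}\},R_y,R_x)$ --- i.e.\ $[\cm{A}_\mathrm{ALR}]_n=\bm{\Lambda}_y\bm{G}\bm{\Lambda}_x^\top$ with the lag index $P$ absorbed into one $P\times R_x$ block of $\bm{\Lambda}_x$, cf.\ Remark \ref{remark-ar2} --- the control of the parameter-space geometry (covering number of the CP-structured difference set, propagation of perturbations through the product of Khatri--Rao factors, the bound $\|\bm{G}\|_\op\le g$ from Assumption \ref{assump:core}) is exactly as in the proof of Theorem \ref{thm:reg}, and produces the complexity $d_\mathrm{ALR}=R_yR_x+PR_x+(R_y+R_x)\sum_{i=1}^nq_i$ together with the logarithmic factor $d_c=\log((n+1)(R_y\wedge R_x)R_y^{1/2}R_x^{1/2})$. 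The temporal dependence, on the other hand, is handled as in the proof of Theorem \ref{thm:autoreg}: rewrite the AR$(P)$ model \eqref{model:ARp} in the companion form \eqref{eq:ar1-B} to get $\bm{B}^\ast_\mathrm{ALR}$, use Assumption \ref{assump:stationarity} to bound the spectral density of the stacked covariate process through $\mu_{\mathrm{min}}(\bm{B}^\ast_\mathrm{ALR})$, $\mu_{\mathrm{max}}(\bm{B}^\ast_\mathrm{ALR})$ and $\mu_{\mathrm{max}}(\cm{A}^\ast_\mathrm{ALR})$, and replace each i.i.d.\ concentration step by the corresponding sub-Gaussian martingale concentration for stationary processes under Assumption \ref{assump:errorauto}, following \cite{basu2015regularized}.

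Concretely, I would proceed as follows. (i) Optimality of $\widehat{\cm{A}}_\mathrm{ALR}$ in \eqref{eq:est-lr} and feasibility of $\cm{A}_\mathrm{ALR}^\ast$ give the basic inequality $\frac{1}{T-P}\|\widehat{\bm{\Delta}}\bm{X}\|_\Fr^2\le\frac{2}{T-P}\langle\bm{E},\widehat{\bm{\Delta}}\bm{X}\rangle$, where $\widehat{\bm{\Delta}}=[\widehat{\cm{A}}_\mathrm{ALR}]_n-[\cm{A}_\mathrm{ALR}^\ast]_n$ lies in the difference set $\mathcal{D}_\mathrm{LR}=\{[\cm{A}_1]_n-[\cm{A}_2]_n:\cm{A}_1,\cm{A}_2\in\bm{\Theta}_\mathrm{LR}(\underline{q},\{P,\underline{q}\},R_y,R_x)\}$ and $\bm{E}$ stacks $\vectorize(\cm{E}_t)$. (ii) RSC: on an $\varepsilon$-net of $\{\bm{\Delta}\in\mathcal{D}_\mathrm{LR}:\|\bm{\Delta}\|_\Fr=1\}$, whose log-cardinality is $\lesssim d_\mathrm{ALR}d_c$, apply the dependent-data lower-tail bound for the quadratic form $\frac{1}{T-P}\|\bm{\Delta}\bm{X}\|_\Fr^2$, whose population version has smallest eigenvalue $\gtrsim c_e/\mu_{\mathrm{max}}(\cm{A}^\ast_\mathrm{ALR})=\bar{\kappa}_{L,A}$; discretizing to all of $\mathcal{D}_\mathrm{LR}$ yields $\frac{1}{T-P}\|\bm{\Delta}\bm{X}\|_\Fr^2\gtrsim\bar{\kappa}_{L,A}\|\bm{\Delta}\|_\Fr^2$ uniformly, on the stated event, once $T-P\gtrsim\max(1,\kappa^4\bar{\kappa}_{U,B}^2/\bar{\kappa}_{L,A}^2)\,d_\mathrm{ALR}d_c$. (iii) Deviation: with the same net, show $\sup_{\bm{\Delta}\in\mathcal{D}_\mathrm{LR}}\frac{1}{T-P}\langle\bm{E},\bm{\Delta}\bm{X}\rangle/\|\bm{\Delta}\|_\Fr\lesssim\kappa^2\bar{\kappa}^\prime_{U,B}\sqrt{d_\mathrm{ALR}d_c/(T-P)}$ via sub-Gaussian concentration of the bilinear form in $\{\bm{e}_t\}$ and $\{\bm{x}_t\}$, the cross-spectral scale being $\bar{\kappa}^\prime_{U,B}=C_e/\mu_{\mathrm{min}}^{1/2}(\bm{B}^\ast_\mathrm{ALR})$. (iv) Substituting (ii) and (iii) into the basic inequality and cancelling one power of $\|\widehat{\bm{\Delta}}\|_\Fr$ gives $\|\widehat{\cm{A}}_\mathrm{ALR}-\cm{A}_\mathrm{ALR}^\ast\|_\Fr\lesssim(\kappa^2\bar{\kappa}^\prime_{U,B}/\bar{\kappa}_{L,A})\sqrt{d_\mathrm{ALR}d_c/(T-P)}$, with the probability bound $1-c_1\exp(-c_2 d_\mathrm{ALR}d_c)$ collected from steps (ii) and (iii).

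I expect step (ii) to be the main obstacle, for two reasons. First, obtaining a covering number for $\mathcal{D}_\mathrm{LR}$ that produces exactly $d_\mathrm{ALR}$ and the $(R_y\wedge R_x)$-dependent logarithm requires propagating a perturbation of the CP factors $(\bm{\Lambda}_y,\bm{G},\bm{\Lambda}_x)$ through the product $\bm{\Lambda}_y\bm{G}\bm{\Lambda}_x^\top$ using only the unit-norm-column constraints defining $\mathcal{S}(\cdot,\cdot)$ and $\|\bm{G}\|_\op\le g$, via a telescoping estimate over the Khatri--Rao blocks; this is precisely the step that distinguishes the CP case from the Tucker case of Remark \ref{remark:tucker}, where orthonormality of the loadings makes it routine. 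Second, one must port the restricted-eigenvalue argument behind Theorem \ref{thm:reg} --- possibly through a per-response reduction, as was needed there --- to the stationary, temporally dependent design, keeping $\mu_{\mathrm{max}}(\cm{A}^\ast_\mathrm{ALR})$ (the marginal spectral-density scale of $\{\vectorize(\cm{Y}_t)\}$, hence the RSC floor) and $\mu_{\mathrm{min}}(\bm{B}^\ast_\mathrm{ALR})$ (the companion-process scale entering the noise--design cross term) carefully separated, since these need not coincide. A minor simplification worth noting: because \eqref{eq:est-lr} optimizes over $\cm{A}$ directly rather than over its CP factors, no identifiability or minimal-eigenvalue condition on $\bm{\Lambda}_y,\bm{\Lambda}_x$ is needed --- the RSC set is a set of differences and the conclusion bounds $\|\widehat{\cm{A}}_\mathrm{ALR}-\cm{A}_\mathrm{ALR}^\ast\|_\Fr$ alone.
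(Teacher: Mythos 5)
Your proposal matches the paper's proof, which simply reruns the basic-inequality/RSC/deviation-bound argument of Theorem \ref{thm:autoreg} (via Lemmas \ref{lemma:AR-RSC} and \ref{lemma:AR-DB}) with the parameter space replaced by $\bm{\Xi}_\mathrm{LR}(\underline{q},\{P,\underline{q}\},R_y,R_x)$ and the covering number taken from Lemma \ref{lemma:covering}(c) with $\underline{p}=\{P,\underline{q}\}$, which is exactly what produces $d_\mathrm{ALR}$ and the modified $d_c$. The only hedge you raise in step (ii) is moot: in the autoregressive setting the RSC is established directly on the full difference set via the Hanson--Wright inequality and generic chaining, so no per-response reduction is needed.
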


When the order $n$ and the ranks $R_y, R_x$ are fixed, the parameters $c_e$, $C_e$, $\kappa^2$, $\mu_{\mathrm{min}}(\bm{B}^{\ast})$, $\mu_{\mathrm{max}}(\cm{A}^\ast)$, $\mu_{\mathrm{min}}(\bm{B}_\mathrm{ALR}^{\ast})$, and $\mu_{\mathrm{max}}(\cm{A}_\mathrm{ALR}^\ast)$ are bounded away from zero and infinity. Therefore, we have $\|{\widehat{\cm{A}}-\cm{A}^\ast}\|_{\mathrm{F}} = O_p(\sqrt{d_{\mathrm{AR}} / (T-P)})$ and $\|{\widehat{\cm{A}}_\mathrm{ALR}-\cm{A}_\mathrm{ALR}^\ast}\|_{\mathrm{F}} = O_p(\sqrt{d_{\mathrm{ALR}} / (T-P)})$, where $d_{\mathrm{AR}}$ and $d_{\mathrm{ALR}}$ are the corresponding model complexities.

Lastly, we consider the low-rank plus sparse model in \eqref{model:arp-sparse} and provide the error upper bound for the low-rank and sparse estimators that are defined in \eqref{eq:least-square-est-sparse}. Two additional assumptions are imposed: Assumption \ref{assump:identifiability} is to ensure the identifiability of the low-rank and sparse components as discussed in Section \ref{subsec:method-ar}, while Assumption \ref{assump:sparse} specifies the elementwise sparsity for the sparse tensor $\cm{A}_\mathrm{S}$.

\begin{assumption}[Identifiability]\label{assump:identifiability}
	The low-rank coefficient tensor $\cm{A}_\mathrm{L}$ satisfies $\norm{\cm{A}_\mathrm{L}}_\infty \leq \alpha_\mathrm{L} / (PQ^2)$, where $\alpha_\mathrm{L}$ is the radius of non-identifiability.
\end{assumption}

\begin{assumption}[Sparsity]\label{assump:sparse}
	The sparse coefficient tensor $\cm{A}_\mathrm{S}$ has at most $s$ nonzero entries.
\end{assumption}

\begin{theorem}\label{thm: ar-sparse}
	Suppose that Assumptions \ref{assump:core} - \ref{assump:stationarity} hold, and the true coefficient tensor $\cm{A}^\ast_\mathrm{L} \in \bm{\Theta}(\underline{q}, P, R_y, R_x)$ and satisfies Assumption \ref{assump:identifiability} while $\cm{A}^\ast_\mathrm{S}$ satisfies Assumption \ref{assump:sparse}. 
	If the sample size $T-P \gtrsim \max\left(1,\kappa^4\kappa_{U,B}^2/\kappa_{L,A}^2\right)\left(d_{\mathrm{AR}} d_{\mathrm{c}} + s\log(PQ^2) \right)$, then
	\[
	\norm{ \widetilde{\cm{A}}_{\mathrm{L}} - \cm{A}^\ast_{\mathrm{L}}}_\Fr^2 + \norm{ \widetilde{\cm{A}}_{ \mathrm{S}} - \cm{A}^\ast_{\mathrm{S}}}_\Fr^2 
	\lesssim
	\underbrace{
		\frac{\kappa^4 C_e\kappa_{U,B}}{\kappa_{L, A}^2} \cdot \frac{d_cd_{\mathrm{AR}} + {s\log(PQ^2)}}{T-P}}_{\text{estimation error}}+ 
\underbrace{
	\frac{\kappa_{U, A}^2 }{\kappa_{L, A}^2} \cdot \frac{s \alpha_\mathrm{L}^2}{PQ^2}
}_{\text{unidentifiability error}}
\]
with probability at least $	1-c_1\exp\left\{-c_2d_{c}d_{\mathrm{AR}}\right\}
-c_3 \exp \left\{-c_4 s\log(PQ^2)\right\}$,
where 
$\kappa_{U, A}=C_{e}/$ $\mu_{\mathrm{min}}(\cm{A}^{\ast})$, $\kappa_{L, A}=c_{e}/\mu_{\mathrm{max}}(\cm{A}^\ast)$, $\kappa_{U, B}=C_{e}/{\mu_{\mathrm{min}}(\bm{B}^{\ast})}$,  $d_{\mathrm{AR}} = PR_yR_x + (R_y+R_x)\sum_{i=1}^{n}q_i$, {$d_c = \log (nP^{1/2}R_yR_x^{1/2})$}, $\cm{A}^\ast =  \cm{A}^\ast_\mathrm{L} +\cm{A}^\ast_\mathrm{S}$, 
$\bm{B}^{\ast}$ is the true value of $\bm{B}$ constructed from $\cm{A}^\ast$ as in \eqref{eq:ar1-B},
and $c_i$'s are some positive constants.
\end{theorem}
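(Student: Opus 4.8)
The plan is to follow the now-standard decomposition of the error for low-rank plus sparse estimation into a statistical part and an unidentifiability part, adapting the machinery of \cite{agarwal2012noisy}, \cite{basu2019low} and \cite{cai2023generalized} to the present CP-based parameter space and the autoregressive dependence structure. Write $\widetilde{\cm{A}}_\mathrm{L}+\widetilde{\cm{A}}_\mathrm{S}=\cm{A}^\ast+\bm{\Delta}$ with $\bm{\Delta}=\bm{\Delta}_\mathrm{L}+\bm{\Delta}_\mathrm{S}$, where $\bm{\Delta}_\mathrm{L}=\widetilde{\cm{A}}_\mathrm{L}-\cm{A}^\ast_\mathrm{L}$ lies (up to the trimming correction) in the difference set of $\bm{\Theta}(\underline{q},P,R_y,R_x)$ and $\bm{\Delta}_\mathrm{S}=\widetilde{\cm{A}}_\mathrm{S}-\cm{A}^\ast_\mathrm{S}$. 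First I would establish the \emph{deterministic oracle inequality}: starting from the basic inequality $\mathcal{L}_T(\widetilde{\cm{A}}_\mathrm{L}+\widetilde{\cm{A}}_\mathrm{S})+\lambda\|\widetilde{\cm{A}}_\mathrm{S}\|_1\le\mathcal{L}_T(\cm{A}^\ast_\mathrm{L}+\cm{A}^\ast_\mathrm{S})+\lambda\|\cm{A}^\ast_\mathrm{S}\|_1$, expanding the quadratic loss, and using the dual-norm bound $|\langle \tfrac{2}{T-P}\sum \bm{e}_t\bm{x}_t^\top,\bm{\Delta}_\mathrm{S}\rangle|\le\|\nabla\mathcal{L}_T(\cm{A}^\ast)\|_\infty\|\bm{\Delta}_\mathrm{S}\|_1$ together with a choice $\lambda\asymp\|\nabla\mathcal{L}_T(\cm{A}^\ast)\|_\infty$. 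This yields a cone-type condition on $\bm{\Delta}_\mathrm{S}$ relative to the support of $\cm{A}^\ast_\mathrm{S}$ and, after absorbing the $\ell_1$ terms, an inequality of the form $\|\bm{X}^\top\bm{\Delta}\|_\Fr^2/(T-P)\lesssim \lambda(\|\bm{\Delta}_{\mathrm{S},\mathcal{S}}\|_1+\text{low-rank terms})$, where the spikiness constraint $\|\widetilde{\cm{A}}_\mathrm{L}\|_\infty\le\alpha_\mathrm{L}/(PQ^2)$ enters through the trimming step to control the cross term $\langle\bm{\Delta}_\mathrm{L},\bm{\Delta}_\mathrm{S}\rangle$ and to produce the unidentifiability term $\kappa_{U,A}^2\kappa_{L,A}^{-2}\,s\alpha_\mathrm{L}^2/(PQ^2)$.

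The second ingredient is the \emph{restricted strong convexity (RSC)} of the sample loss over the relevant set $\{\bm{\Delta}_\mathrm{L}+\bm{\Delta}_\mathrm{S}\}$, i.e.\ a lower bound $\|\bm{X}^\top\bm{\Delta}\|_\Fr^2/(T-P)\gtrsim \kappa_{L,A}\|\bm{\Delta}\|_\Fr^2-(\text{tolerance})\cdot(\|\bm{\Delta}_\mathrm{L}\|_\Fr+\|\bm{\Delta}_\mathrm{S}\|_1)^2$. For the low-rank directions this reuses exactly the RSC argument already developed for Theorem~\ref{thm:autoreg} (the covering-number/covariance-concentration bound on $\bm{\Theta}(\underline{q},P,R_y,R_x)$ with its extra $d_c$ factor), and for the sparse directions it reuses a standard $\ell_0$-ball RSC bound giving the $s\log(PQ^2)$ contribution; the union over both classes is what produces the combined complexity $d_cd_{\mathrm{AR}}+s\log(PQ^2)$ and the sample-size requirement. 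Here the time-series dependence is handled via the AR(1) companion-form representation \eqref{eq:ar1-B} and the $\mu_{\min}(\bm{B}^\ast),\mu_{\max}(\cm{A}^\ast)$ spectral-density bounds, exactly as in the proof of Theorem~\ref{thm:autoreg}; the relevant concentration inequalities for $\tfrac{1}{T-P}\bm{X}\bm{X}^\top$ and $\tfrac{1}{T-P}\bm{E}\bm{X}^\top$ carry over with the sub-Gaussian Assumption~\ref{assump:errorauto}. I would also need the deviation bound $\|\nabla\mathcal{L}_T(\cm{A}^\ast)\|_\infty\lesssim \kappa\sqrt{C_e\kappa_{U,B}}\sqrt{\log(PQ^2)/(T-P)}$, which is an elementwise maximal inequality over the $PQ^2$ coordinates and fixes the order of $\lambda$.

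Combining the oracle inequality with RSC in the usual way — substituting the lower bound into the left side of the oracle inequality, moving the tolerance terms to the right, and solving the resulting quadratic in $(\|\bm{\Delta}_\mathrm{L}\|_\Fr^2+\|\bm{\Delta}_\mathrm{S}\|_\Fr^2)^{1/2}$ using $\|\bm{\Delta}_{\mathrm{S},\mathcal{S}}\|_1\le\sqrt{s}\,\|\bm{\Delta}_\mathrm{S}\|_\Fr$ — gives $\|\widetilde{\cm{A}}_\mathrm{L}-\cm{A}^\ast_\mathrm{L}\|_\Fr^2+\|\widetilde{\cm{A}}_\mathrm{S}-\cm{A}^\ast_\mathrm{S}\|_\Fr^2\lesssim \lambda^2(s+d_{\mathrm{AR}}d_c/\log(PQ^2))/\kappa_{L,A}^2+\text{unidentifiability term}$, which after plugging in $\lambda$ is exactly the claimed bound; the probability statement is the union bound over the two RSC events and the deviation event. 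I expect the main obstacle to be the RSC step over the \emph{union} of the CP-low-rank difference set and the sparse set: the CP parameter space is not a linear subspace and its difference set is not convex, so one cannot simply invoke a Gaussian-width bound for a norm ball; instead one must carefully chain together the peeling/covering argument of Lemma~\ref{lemma:RSCregression} (handling the non-orthogonal Khatri--Rao loadings, hence the $d_c$ factor) with the sparse-set RSC, while simultaneously tracking the companion-matrix spectral constants so that the final bound is stated in terms of $\mu_{\min}(\bm{B}^\ast)$ and $\mu_{\max}(\cm{A}^\ast)$ rather than the (generally larger) naive covariance eigenvalues. A secondary technical point is verifying that the trimming operator in Algorithm~\ref{alg:low-rank-sparse} indeed enforces $\|\widetilde{\cm{A}}_\mathrm{L}\|_\infty\le\alpha_\mathrm{L}/(PQ^2)$ at the optimum so that the deterministic bound on $\langle\bm{\Delta}_\mathrm{L},\bm{\Delta}_\mathrm{S}\rangle$ applies, and that doing so does not disturb membership of $\widetilde{\cm{A}}_\mathrm{L}$ in a slightly enlarged low-rank set for which the covering bound still holds.
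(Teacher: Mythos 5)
Your proposal follows the same overall architecture as the paper's proof: start from the basic inequality for the penalized estimator, lower-bound the empirical quadratic form via restricted strong convexity separately over the CP-low-rank difference set (reusing the chaining/covering machinery behind Theorem~\ref{thm:autoreg}, with the companion-form spectral constants $\mu_{\min}(\bm{B}^\ast)$, $\mu_{\max}(\cm{A}^\ast)$) and over the sparse set, control the cross term between $\bm{\Delta}_\mathrm{L}$ and $\bm{\Delta}_\mathrm{S}$ through the spikiness constraint $\norm{\cm{A}_\mathrm{L}}_\infty\le\alpha_\mathrm{L}/(PQ^2)$ (which is exactly where the unidentifiability term $s\alpha_\mathrm{L}^2/(PQ^2)$ originates in both arguments), and close with a quadratic inequality and a union bound over the two RSC/deviation events. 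The one substantive difference is how the sparse component's empirical-process term is handled: you use the dual-norm bound $\lvert\langle\nabla\mathcal{L}_T,\bm{\Delta}_\mathrm{S}\rangle\rvert\le\norm{\nabla\mathcal{L}_T}_\infty\norm{\bm{\Delta}_\mathrm{S}}_1$ with an elementwise maximal inequality and a cone condition, whereas the paper bounds $\sup_{\bm{\Delta}_\mathrm{S}\in\mathcal{H}_{2s}'}T_0^{-1}\sum_t\langle\bm{e}_t,\bm{\Delta}_\mathrm{S}\bm{x}_t\rangle$ directly over the Frobenius ball of $2s$-sparse matrices (Lemmas~\ref{lemma:AR-RSC-sparse}--\ref{lemma:ar-deviation-sparse}) and decomposes the penalty over the support to get $\lambda\sqrt{2s}\norm{\bm{\Delta}_\mathrm{S}}_\Fr$; your cone-based route has the advantage of not presupposing that $\widetilde{\cm{A}}_\mathrm{S}-\cm{A}^\ast_\mathrm{S}$ is itself $2s$-sparse, at the cost of a slightly more involved absorption step. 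One detail you should fix: setting $\lambda\asymp\norm{\nabla\mathcal{L}_T(\cm{A}^\ast)}_\infty$ alone is not enough — as in \cite{agarwal2012noisy} and in the paper's proof (which takes $\lambda = 2C_2\kappa^2\kappa_{U,B}'\sqrt{s\log(PQ^2)/T_0}+16\kappa_{U,A}\alpha_\mathrm{L}/\sqrt{PQ^2}$), the penalty level must also dominate the spikiness spillover $\kappa_{U,A}\alpha_\mathrm{L}/\sqrt{PQ^2}$ from the cross term, otherwise the cone condition on $\bm{\Delta}_\mathrm{S}$ cannot be established and the $-\kappa_{U,A}\alpha_\mathrm{L}\norm{\bm{\Delta}_\mathrm{S}}_\Fr/\sqrt{PQ^2}$ contribution cannot be absorbed into the right-hand side of the quadratic.
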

The error bound in the above theorem comprises two parts. The first part corresponds to the estimation error that arises from the randomness in data and limited sample size. Specifically, the term $\kappa^4 C_e\kappa_{U,B} / \kappa_{L, A}^2$ captures the effect of dependence in time series data. When the order $n$ and the ranks $R_y, R_x$ are fixed, the error sum becomes $O_p((d_\mathrm{AR} + s\log(PQ^2))/(T-P))$, where $d_\mathrm{AR} + s\log(PQ^2)$ is the model complexity. On the other hand, the second part of the error bound emanates from the non-identifiability inherent in the structure of the true low-rank and sparse components \citep{basu2019low}. It does not rely on the sample size, and thus will not vanish no matter how large the sample size is.

\section{Simulation studies}
\label{sec:simulation}
This section conducts two experiments to evaluate finite-sample performances of the proposed low-rank estimator in \eqref{eq:least-square-est} and low-rank plus sparse estimator in \eqref{eq:least-square-est-sparse}, respectively.

In the first experiment, the data generating process is the autoregression model at \eqref{model:ARp} with $[\cm{A}]_n = ([\cm{A}_1]_n, \dots, [\cm{A}_P]_n) = \bm{\Lambda}_y \bm{G}(\bm{I}_P \otimes \bm{\Lambda}_x^\top)$, $P=2$ and $n=3$, i.e., $\cm{Y}_t \in \mathbb{R}^{q_1 \times q_2 \times q_3}$. We consider three different scenarios for generating the entries of $\cm{E}_t$: (i) independently sampling from a uniform distribution on $(-0.5, 0.5)$, (ii) independently sampling from a standard normal distribution, or (iii) correlated with normality, i.e., $\vectorize(\cm{E}_t) \sim N(\bm{0}, \bm{\Sigma}_e)$ with $\bm{\Sigma}_e = (0.5^{|i-j|})_{1\leq i, j \leq Q}$ and $Q = q_1q_2q_3$. To generate the coefficient $\cm{A}$, we first construct matrices $\{\bm{U}_i \in \mathbb{R}^{q_i \times R_y}\}_{i=1}^3$ and $\{\bm{V}_j \in \mathbb{R}^{q_j \times R_x}\}_{j=1}^3$ by sampling their entries independently from a standard normal distribution and normalizing each column to have unit-norm, and subsequently let $\bm{\Lambda}_y= \bm{U}_3 \odot \bm{U}_2 \odot \bm{U}_1$ and $\bm{\Lambda}_x = \bm{V}_3 \odot \bm{V}_2 \odot \bm{V}_1$. Besides, the matrix $\bm{G}$ is also generated with i.i.d. standard normal entries and then rescaled to make $\lVert \cm{A} \rVert_\Fr = 0.9$.

\begin{figure}[t!]
\includegraphics[width=1.\textwidth]{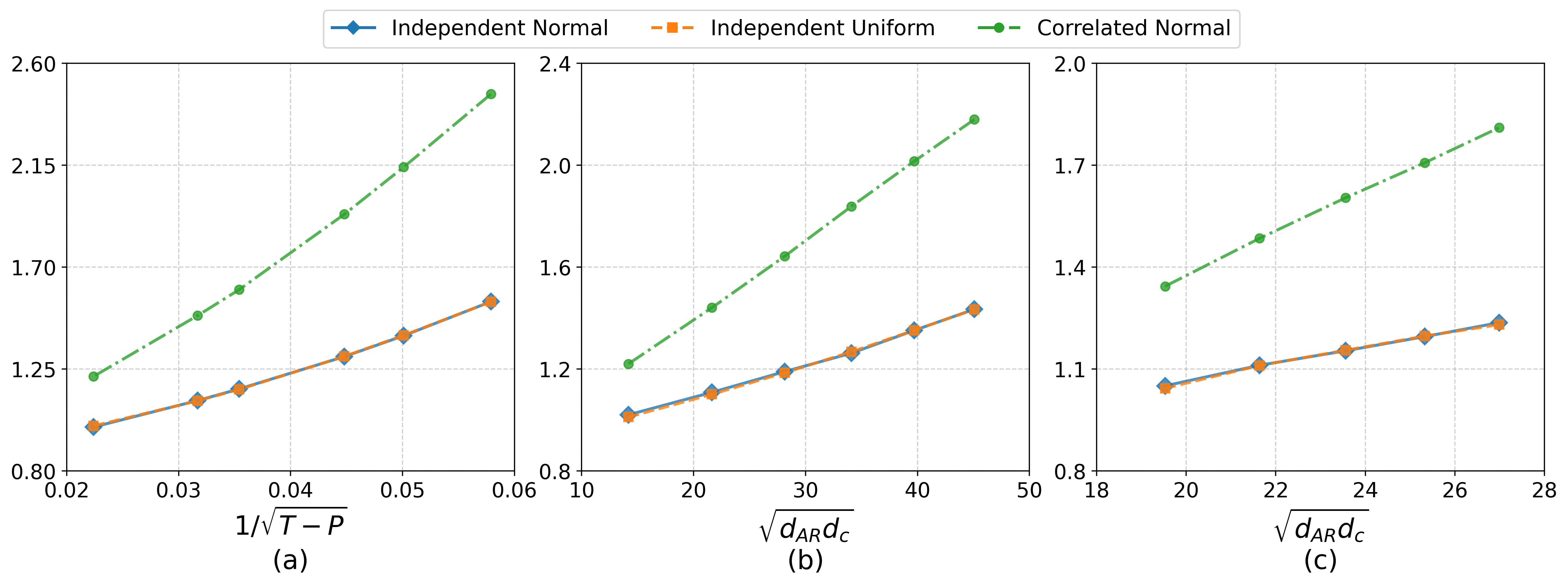}
\caption{Average estimation errors $\lVert \widehat{\cm{A}} - \cm{A}\rVert_\Fr$ for the tensor autoregression model with low-rank structure under settings: (a) changing the sample size $T$, (b) changing the ranks $R_y$ and $R_x$, and (c) changing the dimensions $q_i$'s. Three different ways of generating the error $\cm{E}_t$ are considered, as specified in the legend. }
\label{fig:simulation-low-rank}
\end{figure}

From Theorem \ref{thm:autoreg}, the estimation error converges at a rate of $\sqrt{d_\mathrm{AR} d_c / (T-P)}$, where $d_{\mathrm{AR}} = PR_yR_x + (R_y+R_x)\sum_{i=1}^{n}q_i$ and {$d_c = \log (nP^{1/2}R_yR_x^{1/2})$}. To verify this convergence rate numerically, three different settings are considered: (a) $(q_i, R_y, R_x)$ are fixed at $(10, 3, 2)$ for all $i \in \{1, 2, 3\}$, while the sample size $T$ varies among the set $\{300, 400, 500, 800, 1000, 1200\}$; (b) $(q_i, T)$ are fixed at $(10, 1000)$, while the rank $R_x$ varies from one to six and $R_y = R_x+1$; and (c) $(R_y, R_x, T)$ are fixed at $(3, 2, 1000)$, while a value that varies among $\{8, 10, 12, 14,16\}$ is applied to all $q_i$'s. To search for the low-rank estimates $\widehat{\cm{A}}$, Algorithm \ref{alg:low-rank} is employed. The estimation errors $\lVert \widehat{\cm{A}} - \cm{A} \rVert_\Fr$ for these settings, averaged over 500 replications, are presented in Figure \ref{fig:simulation-low-rank}. The estimation errors exhibit linear growth in Figure \ref{fig:simulation-low-rank} (a)-(c), which implies that $\lVert \widehat{\cm{A}} - \cm{A} \rVert_\Fr$ is proportional to $1 / \sqrt{T-P}$ and $\sqrt{d_\mathrm{AR} d_c}$ and thus validates the theoretical results in Theorem \ref{thm:autoreg}. Additionally, in scenarios where the entries of $\cm{E}_t$ follow a correlated normal distribution as in (iii), the estimation errors are slightly worse than independent cases. Utilizing a generalized least squares method could potentially enhance their performances.

\begin{figure}[t!]
\includegraphics[width=1.\textwidth]{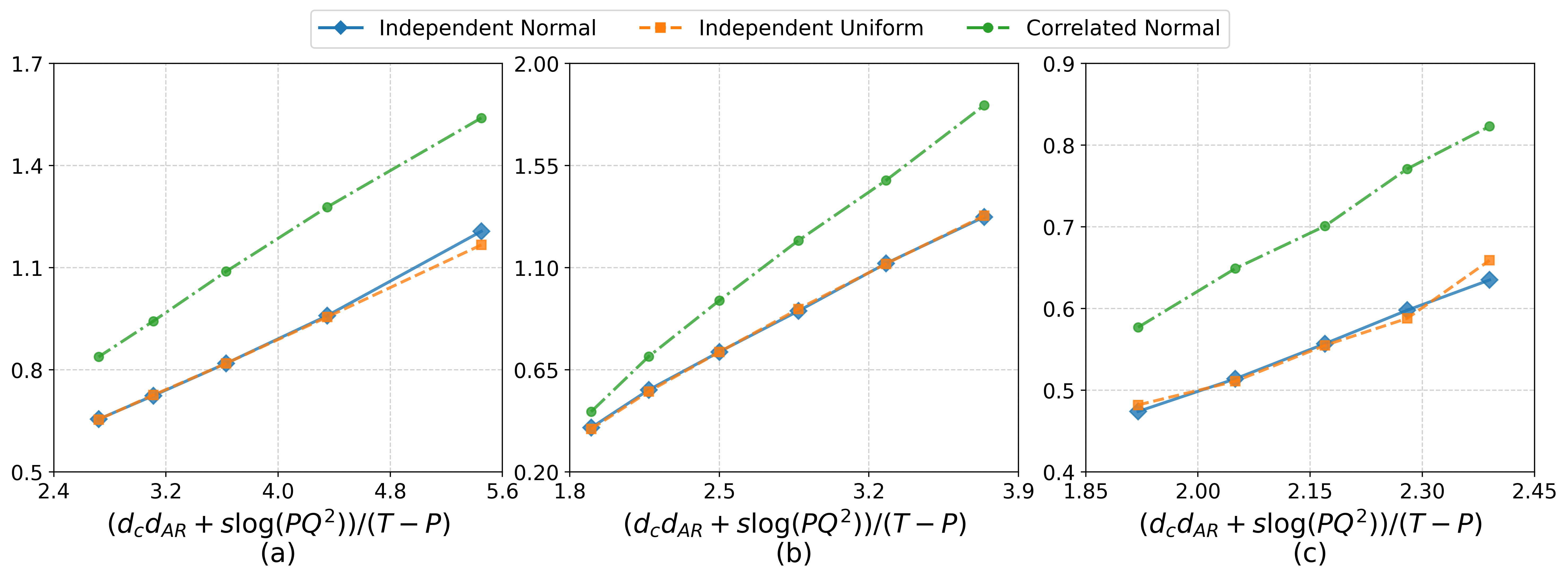}
\caption{Average errors $\lVert \widetilde{\cm{A}}_\mathrm{L} - \cm{A}_\mathrm{L} \rVert_\Fr^2 + \lVert \widetilde{\cm{A}}_\mathrm{S} - \cm{A}_\mathrm{S} \rVert_\Fr^2$ for tensor autoregression model with low-rank plus sparse structure under settings: (a) changing the sample size $T$, (b) changing the ranks $R_y$ and $R_x$, and (c) changing the dimensions $q_i$'s. Three different ways of generating the error $\cm{E}_t$ are considered, as specified in the legend. }
\label{fig:simulation-low-rank-sparse}
\end{figure}

In the second experiment, the data generating process is the low-rank plus sparse model at \eqref{model:arp-sparse}, where $[\cm{A}_\mathrm{L}]_n = ([\cm{A}^\mathrm{L}_1]_n, \dots, [\cm{A}^\mathrm{L}_P]_n) = \bm{\Lambda}_y \bm{G}(\bm{I}_P \otimes \bm{\Lambda}_x^\top)$ and we keep $P=2$ and $n=3$. The generation of the error $\cm{E}_t$, and the components $\bm{\Lambda}_y$, $\bm{\Lambda}_x$, $\bm{G}$ are the same as in the first experiment, except that $\bm{G}$ is now rescaled to ensure that $\lVert \cm{A}_\mathrm{L} \rVert_\infty \leq \alpha_\mathrm{L} / (PQ^2)$. Moreover, we consider a sparse tensor $\cm{A}_\mathrm{S}$ with $s$ non-zero entries, each drawn from a standard normal distribution. This tensor is then rescaled to make $\lVert  \cm{A}_\mathrm{L} + \cm{A}_\mathrm{S} \rVert_\Fr = 0.6$.

Based on Theorem \ref{thm: ar-sparse}, the sum of squared errors, $\lVert \widetilde{\cm{A}}_\mathrm{L} - \cm{A}_\mathrm{L} \rVert_\Fr^2 + \lVert \widetilde{\cm{A}}_\mathrm{S} - \cm{A}_\mathrm{S} \rVert_\Fr^2$, consists of two components: estimation error and unidentifiability error. Firstly, to verify the estimation error, we fix $s = 60$ and $\alpha_\mathrm{L} = P^{1/2}Q$ such that $\lVert \widetilde{\cm{A}}_\mathrm{L} - \cm{A}_\mathrm{L} \rVert_\Fr^2 + \lVert \widetilde{\cm{A}}_\mathrm{S} - \cm{A}_\mathrm{S} \rVert_\Fr^2$ has a convergence rate of $(d_cd_\mathrm{AR} + s\log(PQ^2)) / (T-P)$. Three different settings are further considered: (a) $(q_i, R_y, R_x )$ are fixed at $(10, 3, 2)$ for all $i \in \{1, 2, 3\}$, whereas the sample size $T$ varies among the set $\{400, 500, 600, 700, 800\}$; (b) $(q_i, T)$ are fixed at $(10, 1000)$, whereas the rank $R_x$ ranges from one to six and $R_y = R_x + 1$; and (c) $(R_y, R_x, T)$ are fixed at $(3, 2, 1000)$, whereas a value from $\{8, 9, 10, 11, 12\}$ is applied to all $q_i$'s. 
Algorithm \ref{alg:low-rank-sparse} is used to search for the low-rank estimate $\widetilde{\cm{A}}_\mathrm{L}$ and the sparse estimate $\widetilde{\cm{A}}_\mathrm{S}$.
The sum of squared errors, $\lVert \widetilde{\cm{A}}_\mathrm{L} - \cm{A}_\mathrm{L} \rVert_\Fr^2 + \lVert \widetilde{\cm{A}}_\mathrm{S} - \cm{A}_\mathrm{S} \rVert_\Fr^2$, averaged over 500 replications, are plotted in Figure \ref{fig:simulation-low-rank-sparse}. The linearity observed across all settings confirms the aforementioned convergence rate.

\begin{table}[t!]
\centering
\caption{True positive rate (TPR) and false positive rate (FPR) of the low-rank plus sparse model for identifying the sparse components of $\cm{A}_\mathrm{S}$ across various $\alpha_\mathrm{L}$ values. }
\label{tab:simulation-alpha}
\resizebox{1.\textwidth}{!}{
	\begin{tabular}{cccccccccc}
		\hline
		$\alpha_\mathrm{L}$ & $P^{1/2}Q / 8$ &  $P^{1/2}Q / 4$ &  $P^{1/2}Q / 2$  &  $P^{1/2}Q $ & $2P^{1/2}Q$ & $4P^{1/2}Q$  & $8P^{1/2}Q$ & $16P^{1/2}Q$ \\
		\hline
		TPR (in \%)  & 62.93 & 62.37 & 61.39 & 60.43 & 57.38 & 51.41 & 40.58 & 31.47 \\
		FPR (in \%)  & 25.04 & 25.03 & 25.04 & 25.04 &  25.04 & 25.04 & 25.03 & 25.04 \\
		\hline
\end{tabular}}
\end{table}

Secondly, inspired by \cite{basu2019low}, we explore the impact of non-identifiability on discerning the sparsity pattern of $\cm{A}_\mathrm{S}$ by altering the setting of $\alpha_\mathrm{L}$. Specifically, we fix $(q_i, R_y, R_x, T, s)$ at $(8, 3, 2, 1000, 150)$ for all $i\in \{1, 2, 3\}$ and set $\alpha_\mathrm{L} = 2^{k} \cdot P^{1/2}Q$ with $k$ in the range of $\{-3, -2, -1, 0, 1, 2, 3, 4\}$. The true positive rate (TPR) and false positive rate (FPR) serve as evaluation metrics, which are defined as $\mathrm{TPR} = \sum_{i_1i_2\cdots i_7} \mathbbm{1}\{(\widetilde{\cm{A}}_\mathrm{S})_{i_1i_2\cdots i_7} \neq 0  \text{ and } ({\cm{A}}_\mathrm{S})_{i_1i_2\cdots i_7} \neq 0  \} / \sum_{i_1i_2\cdots i_7} \mathbbm{1}\{({\cm{A}}_\mathrm{S})_{i_1i_2\cdots i_7} \neq 0  \}$ and
$\mathrm{FPR} =  \sum_{i_1i_2\cdots i_7} \mathbbm{1}\{(\widetilde{\cm{A}}_\mathrm{S})_{i_1i_2\cdots i_7} \neq 0  \text{ and } ({\cm{A}}_\mathrm{S})_{i_1i_2\cdots i_7} = 0  \} / \sum_{i_1i_2\cdots i_7} \mathbbm{1}\{({\cm{A}}_\mathrm{S})_{i_1i_2\cdots i_7} = 0  \}$, respectively. Here $({\cm{A}}_\mathrm{S})_{i_1i_2\cdots i_7}$ stands for the $(i_1, i_2, \dots, i_7)$-th entry of $\cm{A}_\mathrm{S}$ with $1\leq i_{k}, i_{k+4} \leq q_k$ for $k\in \{1, 2, 3\}$ and $1\leq i_4 \leq P$.
Using the estimates obtained from Algorithm \ref{alg:low-rank-sparse}, we report the TPR and FPR for various $\alpha_\mathrm{L}$ values, averaged over 500 replications, in Table \ref{tab:simulation-alpha}. Consistent with observations in \cite{basu2019low}, we note that a smaller $\alpha_\mathrm{L}$ substantially enhances the identification of true nonzero entries in $\cm{A}_\mathrm{S}$, leading to a better separation between low-rank and sparse components.

\section{An empirical example}
\label{sec:real-data}

El Ni$\tilde{\text{n}}$o-Southern Oscillation (ENSO) is a climate phenomenon characterized by the anomalous episodic warming of the eastern equatorial Pacific Ocean. The event exerts considerable influences on global climatic and environmental conditions, and thus its accurate prediction can yield essential scientific and economic benefits \citep{josef2013improved}. 
This section analyzes the upper ocean temperature data which are proven to be significant predictors for ENSO events \citep{clarke2003improving, mc2003tropical}. The dataset, sourced from \cite{zhou2023a}, contains monthly observations of seven-layer ocean temperature anomalies in the upper 150 meters (at 5, 20, 40, 60, 90, 120, and 150 meters) spanning from 1980 to 2021. Geographically, the data cover a grid of 51 longitudes and 41 latitudes, evenly spaced from 154$^{\circ}$ East to 106$^{\circ}$ West and from $20^{\circ}$ South to $20^{\circ}$ North. 
For preprocessing, the data are first rendered stationary by removing monthly seasonality and applying differencing, and then are standardized to have zero means.
As a result, we have $T = 503$ observations of tensor-valued time series $\{\cm{Y}_t \in \mathbb{R}^{51 \times 41 \times 7}\}_{t=1}^{T}$.

The tensor autoregression model at \eqref{model:ARp} is applied with each coefficient tensor $\cm{A}_k \in \mathbb{R}^{51 \times 41 \times 7 \times 51 \times 41 \times 7}$ containing $14637^2$ parameters. To enable a feasible estimation, we consider our proposed (i) CP-based low-rank estimator at \eqref{eq:least-square-est} and (ii) low-rank plus sparse estimator at \eqref{eq:least-square-est-sparse}. We also compare them with three existing dimension reduction methods in literature, including restricting the stacked coefficient $\cm{A}$ to follow (iii) CP decomposition and (iv) Tucker decomposition, as well as (v) the Lasso method which imposes an $l_1$-regularization on $\cm{A}$. To select the hyperparameters, we split the whole dataset into training, validation, and testing sets, comprising 400, 50, and 53 samples, respectively. The testing set is reserved for evaluating prediction performance. For our low-rank estimator, the hold-out method in Section \ref{subsec:method-algorithm} is used and the values of autoregressive order $P$ and ranks $R_y, R_x$ are chosen as 1, 3, and 2, respectively, with $P^{\mathrm{max}} = R_y^{\mathrm{max}} = R_x^\mathrm{max}=12$. To save computations, we also utilize these choices of $P$, $R_y$, and $R_x$ for our low-rank plus sparse estimator, and the hold-out method further chooses the penalty strength $\lambda = 0.01$ from the candidate set $\{0.001, 0.003, 0.005, 0.008, 0.01, 0.03, 0.05, 0.08, 0.1, 0.3, 0.5, 0.8, 1\}$.
For consistency, we set $P=1$ for all other methods and apply the hold-out method to determine the CP rank as $3$ and the penalty strength for Lasso as 0.01, respectively. Additionally, based on the 95\% cumulative percentage of the total variation in \cite{han2022an}, the Tucker ranks are selected as $(3,6,6,3,6,6)$.

\begin{table}[t!]
\centering
\caption{Mean squared forecast errors (MSFE) and mean absolute forecast errors (MAFE) of our low-rank and low-rank plus sparse models and three competing methods on the testing set of the upper ocean temperature data. The smallest numbers in each row are in bold.}
\label{tab:real-data}
\begin{tabular}{cccccc}
	\hline
	& Low-rank (ours) & Low-rank plus sparse (ours) & CP  & Tucker & Lasso \\
	\hline
	MSFE  & 0.361  & \textbf{0.271} & 0.373 & 0.440   & 0.272  \\
	MSAE  & 0.375  &  \textbf{0.216}  & 0.384  & 0.409  &  0.320 \\
	\hline
\end{tabular}
\end{table}

We next compare the prediction performance of these five methods. To search for the estimates, Algorithms \ref{alg:low-rank} and \ref{alg:low-rank-sparse} are used for our proposed low-rank and low-rank plus sparse models, respectively; the alternating least squares method in \cite{lock2018tensor} and the projected gradient descent algorithm in  \cite{han2022an} are adopted for tensor autoregression models with CP and Tucker decompositions, respectively; and lastly, the coordinate descent in \cite{hastie2015statistical} is employed for Lasso. The performance of the rolling forecast on the testing set is used for comparison. Table \ref{tab:real-data} gives the mean squared forecast errors (MSFE) and mean absolute forecast errors (MAFE) for all five methods, where the smallest forecast errors are obtained by our low-rank plus sparse model. The result confirms the existence of both low-rank and sparse signals in data. Moreover, the superior performance of our low-rank method over CP and Tucker decompositions underscores the benefits brought by the enhanced flexibility and improved efficiency in dimension reduction, respectively. The Lasso method demonstrates a strong prediction performance as anticipated. However, it is less interpretable than our low-rank method due to its challenges in delineating crucial regions coherently, especially when both response and covariate tensors are high-order.

\begin{figure}[t]
\includegraphics[width=1.\textwidth]{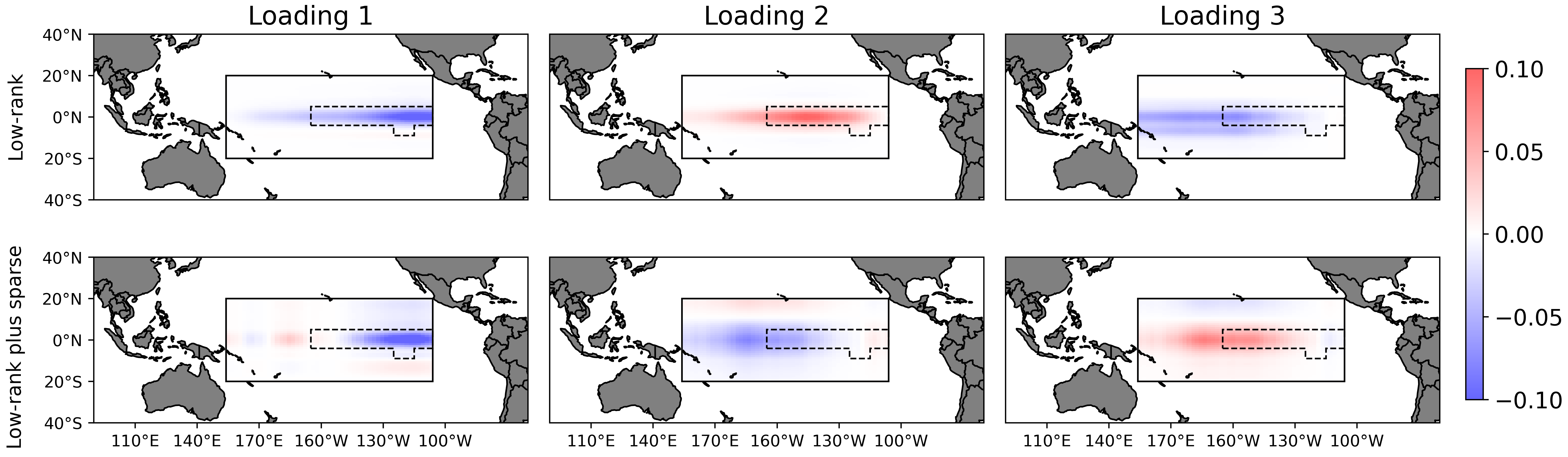}
\caption{Heatmaps of spatial loadings for response features: the top and bottom panels are from our low-rank and low-rank plus sparse models, respectively. One possible boundary of the El Ni$\tilde{\text{n}}$o basin is delineated by dashed lines in figures.}
\label{fig:response-loading-spatial}
\end{figure}

Finally, we fit our low-rank and low-rank plus sparse models using the entire dataset and the chosen ranks $R_y  = 3, R_x = 2$, and obtain the loading matrices $\{\bm{U}_i\}_{i=1}^3$ and $\{\bm{V}_j\}_{j=1}^3$ for both low-rank and low-rank plus sparse estimates. Here, $i= 1, 2, 3$ correspond to the longitude, latitude, and variable dimensions, respectively; and therefore, the matrices $\bm{u}_r^{(1)} \circ \bm{u}_r^{(2)}$ for $1 \leq r \leq R_y$, where $\bm{u}_r^{(i)}$ denotes the $r$-th column of $\bm{U}_i$, represent spatial loadings for different response features. This holds similarly for covariate features. Figure \ref{fig:response-loading-spatial} showcases heatmaps of these spatial loadings for response features, where the upper and bottom panels correspond to our low-rank model and the rest are from our low-rank plus sparse model. We also draw one possible region boundary of the El Ni$\tilde{\text{n}}$o basin with dashed lines on these figures, according to its definition in \cite{josef2013improved}. It can be seen that both models summarize spatial information from response tensors similarly: at least one feature emphasizes the El Ni$\tilde{\text{n}}$o basin with substantial weights, whereas the remaining focus more on the areas outside the basin. This observation aligns with the consensus in the literature that the El Ni$\tilde{\text{n}}$o basin has a unique role in climate dynamics and its climate conditions are influenced in a different way compared to its surrounding areas \citep{gozo2011emergence,josef2013improved}. 
In addition, Figure \ref{fig:response-loading-var} displays the heatmaps of variable loadings for response features, where the left and right panels are from our low-rank and low-rank plus sparse models, respectively. It is evident that both models have two features that rely more on temperatures at deeper ocean levels, while the other summarizes the information mainly from the shallower ocean levels. For covariate features, there is no clear pattern possibly because we do not distinguish periods before and during the El Ni$\tilde{\text{n}}$o episodes as suggested by \cite{josef2013improved}, given the limited sample size.

\begin{figure}[t]
\includegraphics[width=1.\textwidth]{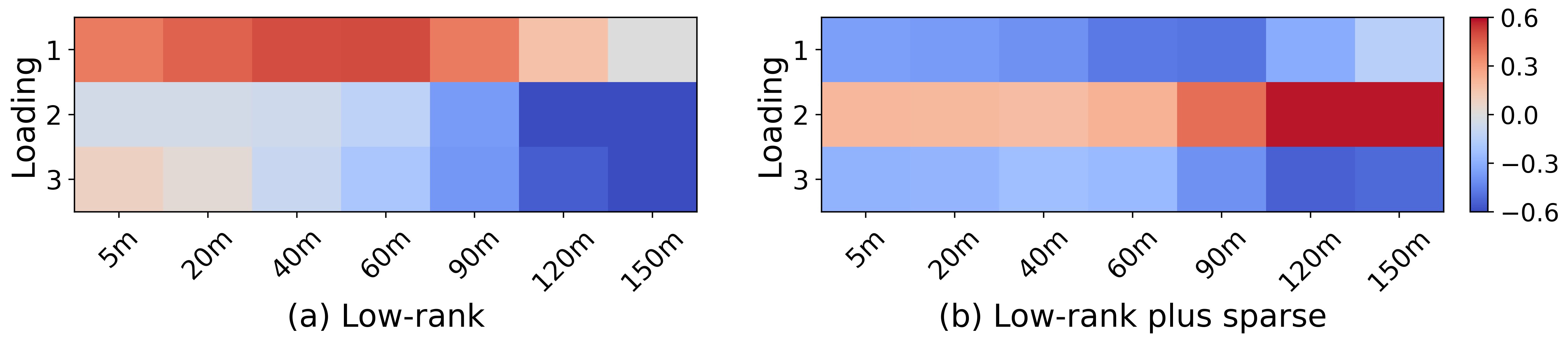}
\caption{Heatmaps of variable loadings for response features: the left and right panels are from our low-rank and low-rank plus sparse models, respectively.}
\label{fig:response-loading-var}
\end{figure}

\section{Conclusion and discussion}
\label{sec:conclusion}
Tucker and CP decompositions are commonly used to impose low-rank constraints on tensors in the literature, and they can be applied to the coefficient tensor in autoregressive models for tensor-valued time series. While Tucker decomposition offers a nice interpretation of supervised factor modeling, its efficiency in dimension reduction diminishes very quickly as tensor order increases. In contrast, CP decomposition remains efficient but only provides a vague statistical interpretation. To attain both interpretability and efficiency, this paper proposes a new approach under the supervised factor modeling paradigm by using CP decomposition to summarize high-order response and covariate tensors into features and then performing a regression on these latent features. This leads to a novel CP-based low-rank structure for the coefficient tensor.
Moreover, to account for heterogeneous signals or potential model misspecifications due to an exact low-rank assumption, this paper also considers a low-rank plus sparse model, incorporating an additional sparse coefficient tensor.
High-dimensional estimation is provided for both models, and their non-asymptotic properties are established. An alternating least squares algorithm is suggested to search for the estimates. 
Finally, as demonstrated through the ENSO example, our models surpass existing methods in prediction accuracy while effectively extracting interpretable features from tensor-valued time series data.

The proposed methodology in this paper opens avenues for three potential extensions. Firstly, for our tensor low-rank autoregressive model, a sample size of $T \gtrsim \sum_{i=1}^{n}q_i$ is required for achieving the estimation consistency. 
However, real applications, such as time-course gene expression data analysis \citep{lozano2009grouped}, may not always meet this requirement.
To address this issue, we may impose sparsity to the response loading matrices $\{\bm{U}_i\}$ and covariate loadings $\{\bm{V}_j\}$, which can further reduce the dimension and relax the sample size requirement.
Secondly, it is of interest to make a statistical inference on estimated tensor coefficients, such as constructing confidence intervals for specific entries or loading subspaces \citep{cai2020uncertainty,xia2022inference}, which we defer to future investigations.
Finally, we may extend the supervised factor modeling framework to utilize the tensor ring decomposition \citep{zhao2016tensorring} for feature extraction. Compared to CP-based structures, tensor ring decomposition may lead to enhanced representation ability and numerical stability while maintaining a linear growth in model complexity with respect to the tensor order.

\newpage
\appendix

\title{Supplementary Material for ``An Efficient and Interpretable Autoregressive Model for High-Dimensional Tensor-Valued Time Series"}
\author{}
\date{\today}
\makeatletter
\renewcommand{\@thanks}{}
\makeatother
\maketitle

\begin{abstract}
	\vspace{-2mm}
	This supplementary material includes two sections. Section \ref{appendix:sec-theory} presents the proofs of theorems in Section \ref{sec:theory} of the manuscript, whereas the auxiliary lemmas are provided in Section \ref{appendix:sec-lemma}. 
\end{abstract}

\section*{Contents}
\startcontents
\printcontents{ }{1}{}

\section{Proof of Theorems} \label{appendix:sec-theory}
This section provides the proofs of theorems in Section \ref{sec:theory} of the manuscript.
\subsection{Proof of Theorem \ref{thm:reg}}
\begin{proof}
	Denote sets $\bm{\Xi}_\mathrm{LR}(\underline{q}, \underline{p}, R_y, R_x) = \{\bm{\Delta} \in \mathbb{R}^{\prod_{i=1}^{n} q_i \times \prod_{j=1}^{m} p_j}: \bm{\Delta} = \bm{A}_1 - \bm{A}_2,    \mathcal{M}_n(\bm{A}_1),$ $  \mathcal{M}_n(\bm{A}_2) \in \bm{\Theta}_\mathrm{LR}(\underline{q}, \underline{p}, R_y, R_x)\}$ and $\bm{\Xi}'_{\mathrm{LR}}(\underline{q}, \underline{p}, R_y, R_x) = \{\bm{\Delta} \in \bm{\Xi}_\mathrm{LR}(\underline{q}, \underline{p}, R_y, R_x): \lVert \bm{\Delta} \rVert_\Fr \leq 1 \}$. 
	For simplicity of notations, we abbreviate them to $\bm{\Xi}_\mathrm{LR}$ and $\bm{\Xi}^\prime_\mathrm{LR}$, respectively, and let $\bm{A} \defeq [\cm{A}]_n$, $\bm{y}_t \defeq \vectorize(\cm{Y}_t)$, $\bm{x}_t \defeq \vectorize(\cm{X}_t)$, and $\bm{e}_t \defeq \vectorize(\cm{E}_t)$.
	
	Recall the quadratic loss function
	\[
	\mathcal{L}_T^{\mathrm{LR}}(\cm{A}) 
	= \frac{1}{T}\sum_{t=1}^T \norm{\vectorize(\cm{Y}_t) - [\cm{A}]_n \vectorize(\cm{X}_t)}_2^2
	= \frac{1}{T}\sum_{t=1}^T \norm{\bm{y}_t - \bm{A}\bm{x}_t}_2^2.
	\]
	By the optimality of $\cm{\widehat{A}}_{\mathrm{LR}}$, we have
	\begin{align}
		&\frac{1}{T}\sum_{t=1}^T\norm{\bm{y}_t-\bm{\widehat{A}}_{\mathrm{LR}}\bm{x}_t}_2^2\leq
		\frac{1}{T}\sum_{t=1}^T\norm{\bm{y}_t-\bm{A}_\mathrm{LR}^\ast\bm{x}_t}_2^2\nonumber\\
		\Rightarrow&\frac{1}{T}\sum_{t=1}^T\norm{\bm{\Delta}\bm{x}_t}_2^2
		\leq
		\frac{2}{T}\sum_{t=1}^T\langle\bm{e}_t,\bm{\Delta}\bm{x}_t\rangle\nonumber\\
		\Rightarrow&\frac{1}{T}\sum_{t=1}^T\norm{\bm{\Delta}\bm{x}_t}_2^2
		\leq
		2\norm{\bm{\Delta}}_{\mathrm{F}}\sup_{\bm{\Delta}\in\bm{\Xi}^\prime_\mathrm{LR}}\frac{1}{T}\sum_{t=1}^T\langle\bm{e}_t,\bm{\Delta}\bm{x}_t\rangle,
		\label{thm1:corereg}
	\end{align}
	where $\bm{\Delta}=\bm{\widehat{A}}_{\mathrm{LR}}-\bm{A}_\mathrm{LR}^\ast \in\bm{\Xi}_\mathrm{LR}$.
	Define the empirical norm $\norm{\bm{\Delta}}_T$ with $\norm{\bm{\Delta}}_T^2 = \sum_{t=1}^T\norm{\bm{\Delta}\bm{x}_t}_2^2$ $/T$.
	Combining Lemmas \ref{lemma:RSCregression} and \ref{lemma:reg-deviation}, we can show when $T \gtrsim \max(1, \sigma^4)(d_cd_{\mathrm{LR}}+ \sum_{i=1}^{n}\log q_i)$,
	\begin{align}
		\mathbb{P}\left\{
		\frac{1}{16}c_x \norm{\bm{\Delta}}_\Fr^2 \leq \norm{\bm{\Delta}}_T^2
		\leq 2C\kappa \sigma \sqrt{d_cd_{\mathrm{LR}} /T}	\norm{\bm{\Delta}}_\Fr
		\right\}
		\geq 1 - \exp\left\{-c_1d_cd_{\mathrm{LR}}\right\}  \nonumber \\
		-\exp\left\{-c_2d_c^\prime d_\mathrm{LR}^\prime -c_3 \sum_{i=1}^n \log(q_i)\right\},
		\label{eq:thm1-prob}
	\end{align}
	where $d_\mathrm{LR} = R_yR_x + R_y\sum_{i=1}^{n}q_i + R_x\sum_{j=1}^{m}p_j$, $d_c = \log ((m \vee n )(R_y \wedge R_x)R_y^{1/2}R_x^{1/2})$, $d_\mathrm{LR}^\prime = R_yR_x + R_y + R_x\sum_{j=1}^{m}p_j$ and $d_c^\prime = \log \left({m(R_y \wedge R_x)R_y^{1/2}R_x^{1/2}}\right)$.
	Thus, \[\norm{\bm{\Delta}}_{\mathrm{F}}\lesssim \frac{\kappa \sigma}{c_x}\sqrt{\frac{d_{c}d_{\mathrm{LR}}}{T}} \] holds with probability no smaller than that in \eqref{eq:thm1-prob}. This finishes the proof.
\end{proof}

\subsection{Proof of Theorem \ref{thm:autoreg}}
\begin{proof}
	Denote sets $\bm{\Xi}(\underline{q}, P, R_y, R_x) = \{\bm{\Delta} \in \mathbb{R}^{\prod_{i=1}^{n} q_i \times P  \prod_{i=1}^{n} q_i}: \bm{\Delta} = \bm{A}_1 - \bm{A}_2,    \mathcal{M}_n(\bm{A}_1),  $ $\mathcal{M}_n(\bm{A}_2) \in \bm{\Theta}(\underline{q}, P, R_y, R_x)\}$ and $\bm{\Xi}'(\underline{q}, P, R_y, R_x) = \{\bm{\Delta} \in \bm{\Xi}(\underline{q}, P, R_y, R_x): \lVert \bm{\Delta} \rVert_\Fr \leq 1 \}$.
	For simplicity of notations, we abbreviate them to $\bm{\Xi}$ and $\bm{\Xi}^\prime$, respectively, and let $\bm{y}_t \defeq \vectorize(\cm{Y}_t)$, $\bm{x}_t \defeq \vectorize(\cm{X}_t)$, $\bm{e}_t \defeq \vectorize(\cm{E}_t)$, $T_0 \defeq T-P$.
	
	Note that $\bm{\Delta}=[\cm{\widehat{A}}]_n - [\cm{A}^\ast]_n \in\bm{\Xi}$.
	By the optimality of estimator, similar to Theorem \ref{thm:reg}, we could obtain
	\begin{equation}\label{thm-AR}
		\frac{1}{T_0}\sum_{t=P+1}^T\|\bm{\Delta}\bm{x}_{t}\|_2^2\leq
		2\norm{\bm{\Delta}}_{\mathrm{F}}\sup_{\bm{\Delta}\in\bm{\Xi}^\prime}\frac{1}{T_0}\sum_{t=P+1}^T\langle\bm{e}_t,\bm{\Delta}\bm{x}_{t}\rangle.
	\end{equation}
	Define the empirical norm $\norm{\bm{\Delta}}_T$ with $\norm{\bm{\Delta}}_T^2 = \sum_{t=P+1}^T\norm{\bm{\Delta}\bm{x}_{t}}_2^2/{T_0}$.
	Applying Lemmas \ref{lemma:AR-RSC} and \ref{lemma:AR-DB} to \eqref{thm-AR}, we can show when $T_0\gtrsim \max\left(1,\kappa^4\kappa_{U,B}^2/\kappa_{L,A}^2\right)d_{\mathrm{AR}} d_{\mathrm{c}}$,
	\begin{align}
		\mathbb{P}\left\{
		\frac{1}{16}\kappa_{L,A}\norm{\bm{\Delta}}_\Fr^2 \leq \norm{\bm{\Delta}}_T^2
		\leq 2C\kappa^2\kappa_{U,B}'\sqrt{d_{\mathrm{AR}}d_c/T_0} 	\norm{\bm{\Delta}}_\Fr
		\right\}
		\geq 1-c_1\exp\left\{-c_2d_{\mathrm{AR}}d_{c}\right\},
		\label{eq:thm2-prob}
	\end{align}
	where 
	$d_{\mathrm{AR}} = PR_yR_x + (R_y+R_x)\sum_{i=1}^{n}q_i$, {$d_c = \log (nP^{1/2}R_yR_x^{1/2})$},  
	$\kappa^\prime_{U, B}=C_{e}/\mu^{1/2}_{\mathrm{min}}(\bm{B}^{\ast})$, $\kappa_{L, A}=c_{e}/\mu_{\mathrm{max}}(\cm{A}^\ast)$, and $\kappa_{U, B}=C_{e}/\mu_{\mathrm{min}}(\bm{B}^{\ast})$.
	Thus, 
	\[
	\norm{\bm{\Delta}}_{\mathrm{F}}\lesssim
	\frac{\kappa^2\kappa_{U,B}'}{\kappa_{L,A}}
	\sqrt{\frac{d_{\mathrm{AR}}d_c}{T_0}} 
	\]
	holds with probability no smaller than that in \eqref{eq:thm2-prob}. This finishes the proof.
\end{proof}

\subsection{Proof of Theorem \ref{thm:autoreg-lr}}
\begin{proof}
	The proof of this theorem can be easily obtained by replacing the parameter space $\bm{\Xi}(\underline{q}, $ $P, R_y, R_x)$ with $\bm{\Xi}_\mathrm{LR}(\underline{q}, \{P, \underline{q}\}, R_y, R_x)$ in Theorem \ref{thm:autoreg} and Lemmas \ref{lemma:AR-RSC} and \ref{lemma:AR-DB}. Note that an upper bound for the covering number of $\bm{\Xi}_\mathrm{LR}(\underline{q}, \{P, \underline{q}\}, R_y, R_x)$ is given by Lemma \ref{lemma:covering}(c), where we let $\underline{p} = \{P, \underline{q}\}$, leading to  
	\begin{align*}
		\mathcal{N}(\bm{\Xi}'_\mathrm{LR}, \norm{\cdot}_\Fr, \epsilon) \leq   
		\left(1 + \frac{16\sqrt{2}g(n+1)(R_y \wedge R_x)\sqrt{R_yR_x}}{\epsilon} \right)^{4R_yR_x + 2PR_x+ 2(R_y + R_x)\sum_{i=1}^{n}q_i}.
	\end{align*} 
	The rest of the proof shall be the same as that of Theorem \ref{thm:autoreg} and is thus omitted.
\end{proof}

\subsection{Proof of Theorem \ref{thm: ar-sparse}}
\begin{proof}
	Denote sets $\mathcal{H}_{2s}(Q, PQ) = \left\{\bm{A} \in \mathbb{R}^{Q \times PQ}: \norm{\bm{A}}_0 \leq 2s\right\}$, and $\mathcal{H}_{2s}^\prime(Q, PQ) = \{\bm{A} \in \mathcal{H}_{2s}$ $(Q, PQ): \norm{\bm{A}}_\Fr \leq 1\}$, where $Q = \prod_{i=1}^{n}q_i$.
	For simplicity of notations, we abbreviate them to $\mathcal{H}_{2s}$ and $\mathcal{H}_{2s}^\prime$, respectively. Besides, the notations $\bm{\Xi}$, $\bm{\Xi}^\prime$, $\bm{x}_t$, $\bm{e}_t$, and $T_0$ follow the same definition as in Theorem \ref{thm:autoreg}.
	
	By the optimality of $\widetilde{\cm{A}}_\mathrm{L} + \widetilde{\cm{A}}_\mathrm{S}$ and similar to \eqref{thm-AR}, we have
	\begin{equation}
		\begin{split}
			\frac{1}{T_0}\sum_{t=P+1}^T\norm{(\bm{\Delta}_\mathrm{L} + \bm{\Delta}_\mathrm{S})\bm{x}_t}_2^2
			\leq
			& 2\norm{\bm{\Delta}_\mathrm{L}}_{\mathrm{F}}  \sup_{\bm{\Delta}_\mathrm{L}\in\bm{\Xi}^\prime}\frac{1}{T_0}\sum_{t=P+1}^T\langle\bm{e}_t,\bm{\Delta}_\mathrm{L}\bm{x}_t\rangle
			\\
			+ 
			&2\norm{\bm{\Delta}_\mathrm{S}}_{\mathrm{F}}\sup_{\bm{\Delta}_\mathrm{S}\in\mathcal{H}_{2s}^\prime}\frac{1}{T_0}\sum_{t=P+1}^T\langle\bm{e}_t,\bm{\Delta}_\mathrm{S}\bm{x}_t\rangle 
			+ 
			\lambda \left( \lVert{\cm{A}^\ast_\mathrm{S}}\rVert_1 - \lVert{\widetilde{\cm{A}}_\mathrm{S}}\rVert_1\right),
		\end{split}
		\label{eq:ar-sparse-core}
	\end{equation}
	where $\bm{\Delta}_\mathrm{L} = [\widetilde{\cm{A}}_\mathrm{L}]_n-[ \cm{A}^\ast_\mathrm{L}]_n$ and $\bm{\Delta}_\mathrm{S} =[ \widetilde{\cm{A}}_\mathrm{S}]_n - [\cm{A}^\ast_\mathrm{S}]_n$. We first give the lower bound for the left-hand side of the inequality. Note that
	\begin{align*}
		\frac{1}{T_0}\sum_{t=P+1}^T\norm{(\bm{\Delta}_\mathrm{L} + \bm{\Delta}_\mathrm{S})\bm{x}_t}_2^2
		=
		\frac{1}{T_0}\sum_{t=P+1}^T\norm{\bm{\Delta}_\mathrm{L}\bm{x}_t}_2^2 + 
		\frac{1}{T_0}\sum_{t=P+1}^T\norm{\bm{\Delta}_\mathrm{S}\bm{x}_t}_2^2 + 
		\frac{2}{T_0}\sum_{t=P+1}^T \left\langle \bm{\Delta}_\mathrm{L}\bm{x}_t, \bm{\Delta}_\mathrm{S}\bm{x}_t\right\rangle,
	\end{align*}
	where the last term 
	\begin{equation*}
		\begin{split}
			\frac{2}{T_0}\sum_{t=P+1}^T \left\langle \bm{\Delta}_\mathrm{L}\bm{x}_t, \bm{\Delta}_\mathrm{S}\bm{x}_t\right\rangle 
			\geq 
			- \frac{2}{T_0}\sum_{t=P+1}^T \left| \left\langle \bm{\Delta}_\mathrm{L}\bm{x}_t, \bm{\Delta}_\mathrm{S}\bm{x}_t\right\rangle \right|
			\geq 
			- \frac{2}{T_0}\sum_{t=P+1}^T \norm{\bm{\Delta}_\mathrm{L}\bm{x}_t}_2 \norm{\bm{\Delta}_\mathrm{S}\bm{x}_t}_2 \\
			\geq
			- 2 \sqrt{\frac{1}{T_0}\sum_{t=P+1}^T\norm{\bm{\Delta}_\mathrm{L}\bm{x}_t}_2^2} \sqrt{\frac{1}{T_0}\sum_{t=P+1}^T\norm{\bm{\Delta}_\mathrm{S}\bm{x}_t}_2^2}.
		\end{split}
	\end{equation*}
	Define the empirical norm $\norm{\bm{\Delta}_\mathrm{L}}_T$ and $\norm{\bm{\Delta}_\mathrm{S}}_T$ with $\norm{\bm{\Delta}_\mathrm{L}}_T^2 = \sum_{t=P+1}^T\norm{\bm{\Delta}_\mathrm{L}\bm{x}_t}_2^2/T_0$ and $\norm{\bm{\Delta}_\mathrm{S}}_T^2 = \sum_{t=P+1}^T\norm{\bm{\Delta}_\mathrm{S}\bm{x}_t}_2^2/T_0$, respectively.
	Applying the restricted strong convexity results in Lemmas \ref{lemma:AR-RSC} and \ref{lemma:AR-RSC-sparse}, we have
	\begin{align}
		\frac{1}{T_0}\sum_{t=P+1}^T\norm{(\bm{\Delta}_\mathrm{L} + \bm{\Delta}_\mathrm{S})\bm{x}_t}_2^2
		\geq &
		\frac{1}{16}\kappa_{L, A}\norm{\bm{\Delta}_\mathrm{L}}_\Fr^2 + \frac{1}{16}\kappa_{L, A}\norm{\bm{\Delta}_\mathrm{S}}_\Fr^2
		-8 \kappa_{U, A}\norm{\bm{\Delta}_\mathrm{L}}_\Fr \norm{\bm{\Delta}_\mathrm{S}}_\Fr \nonumber \\
		\geq & \frac{1}{16}\kappa_{L, A}\norm{\bm{\Delta}_\mathrm{L}}_\Fr^2 + \frac{1}{16}\kappa_{L, A}\norm{\bm{\Delta}_\mathrm{S}}_\Fr^2
		- 16\kappa_{U, A} \frac{\alpha_\mathrm{L}}{\sqrt{PQ^2}} \norm{\bm{\Delta}_\mathrm{S}}_\Fr, 
		\label{eq:ar-sparse-lower}
	\end{align}
	with probability at least
	$1-2\exp\left\{-c_1d_{\mathrm{AR}}d_c\right\} -2\exp\left\{-c_2s\log(PQ^2)\right\}$, where $d_{\mathrm{AR}} = PR_yR_x + (R_y+R_x)\sum_{i=1}^{n}q_i$ and {$d_c = \log (nP^{1/2}R_yR_x^{1/2})$}.
	The last inequality uses the fact that $\norm{\bm{\Delta}_\mathrm{L}}_\Fr \leq \sqrt{PQ^2} \norm{\bm{\Delta}_\mathrm{L}}_\infty \leq \sqrt{PQ^2} \left(\lVert\widetilde{\cm{A}}_\mathrm{L}\rVert_\infty + \lVert{\cm{A}^\ast_\mathrm{L}\rVert}_\infty\right)$ and Assumption \ref{assump:identifiability}. 
	
	On the other hand, applying the deviation bounds in Lemmas \ref{lemma:AR-DB} and \ref{lemma:ar-deviation-sparse}, the right-hand side of \eqref{eq:ar-sparse-core} can be upper bounded by 
	\begin{align*}
		2C_1\kappa^2\kappa_{U,B}'\sqrt{d_{\mathrm{AR}}d_c/T_0} \norm{\bm{\Delta}_\mathrm{L}}_\Fr + 
		2C_2\kappa^2\kappa_{U,B}'
		\sqrt{s\log(PQ^2)/T_0} \norm{\bm{\Delta}_\mathrm{S}}_\Fr +
		\lambda \left( \lVert{\cm{A}^\ast_\mathrm{S}}\rVert_1 - \lVert{\widetilde{\cm{A}}_\mathrm{S}}\rVert_1\right).
	\end{align*}
	Note that the last term
	\begin{align*}
		\lambda \left( \lVert{\cm{A}^\ast_\mathrm{S}}\rVert_1 - \lVert{\widetilde{\cm{A}}_\mathrm{S}}\rVert_1\right)
		= &
		\lambda \left( \norm{(\bm{A}^\ast_{\mathrm{S}})_S}_1 - \norm{(\bm{A}^\ast_{\mathrm{S}})_S + (\bm{\Delta}_\mathrm{S})_S}_1 - \norm{(\bm{\Delta}_\mathrm{S})_{S^c}}_1 \right)\\
		\leq &
		\lambda \left(
		\norm{(\bm{\Delta}_\mathrm{S})_S}_1 - \norm{(\bm{\Delta}_\mathrm{S})_{S^c}}_1
		\right) \\
		\leq & 
		\lambda \sqrt{2s} \norm{\bm{\Delta}_\mathrm{S}}_\Fr,
	\end{align*}
	where $\bm{A}^\ast_{\mathrm{S}} = [\cm{A}^\ast_{\mathrm{S}}]_n$, $\norm{(\bm{\Delta}_\mathrm{S})_S}_1$ and $\norm{(\bm{\Delta}_\mathrm{S})_{S^c}}_1$ stand for the $l_1$-norm of $\bm{\Delta}_\mathrm{S}$ on and off the support $S$, respectively.
	
	Let $\lambda = 2C_2\kappa^2\kappa_{U,B}'
	\sqrt{s\log(PQ^2)/T_0} + 16\kappa_{U, A} \alpha_\mathrm{L} / \sqrt{PQ^2}$. Combining with \eqref{eq:ar-sparse-core} and \eqref{eq:ar-sparse-lower}, we have
	\begin{align}
		\frac{1}{16}\kappa_{L, A}\norm{\bm{\Delta}_\mathrm{L}}_\Fr^2 + \frac{1}{16}\kappa_{L, A}\norm{\bm{\Delta}_\mathrm{S}}_\Fr^2 
		& \leq 2C_1\kappa^2\kappa_{U,B}'\sqrt{d_{\mathrm{AR}}d_c/T_0}  \norm{\bm{\Delta}_\mathrm{L}}_\Fr + 2\lambda \sqrt{2s} \norm{\bm{\Delta}_\mathrm{S}}_\Fr  \nonumber \\
		\Longrightarrow \quad 
		\norm{\bm{\Delta}_\mathrm{L}}_\Fr^2 + \norm{\bm{\Delta}_\mathrm{S}}_\Fr^2 
		& \lesssim 
		\frac{\kappa^4 (\kappa_{U,B}')^2}{\kappa_{L, A}^2} \cdot \frac{d_cd_{\mathrm{AR}}}{T_0}+ 
		\frac{\kappa^4 (\kappa_{U,B}')^2}{\kappa_{L, A}^2}  \cdot \frac{s\log(PQ^2)}{T_0}+ 
		\frac{\kappa_{U, A}^2 }{\kappa_{L, A}^2} \cdot \frac{s \alpha_\mathrm{L}^2}{PQ^2}, \label{eq:ar-sprase-result}
	\end{align}
	which holds with probability at least $1-3\exp\left\{-c_3d_{\mathrm{AR}}d_c\right\} -3\exp\left\{-c_4s\log(PQ^2)\right\}$,
	when $T_0 \gtrsim \max \left(1,\kappa^4\kappa_{U,B}^2/\kappa_{L,A}^2\right) \left(d_{\mathrm{AR}} d_{\mathrm{c}} + s\log(PQ^2) \right)$, where $\kappa_{U, A}=C_{e}/\mu_{\mathrm{min}}(\cm{A}^\ast)$, $\kappa_{L, A}=c_{e}/$ $\mu_{\mathrm{max}}(\cm{A}^\ast)$, $\kappa_{U, B}=C_{e}/\mu_{\mathrm{min}}(\bm{B}^{\ast})$, $\kappa_{U, B}^\prime=C_{e}/{\mu^{1/2}_{\mathrm{min}}(\bm{B}^{\ast})}$.
\end{proof}

\section{Auxiliary Lemmas}  \label{appendix:sec-lemma}
This section provides the auxiliary lemmas that are used in establishing the theorems in Section \ref{appendix:sec-theory} and their proofs. We first introduce the concepts of Gaussian width and Talagrand's $\gamma_{\alpha}$ functional, which will be frequently utilized in the upcoming lemmas.
\begin{definition}[Gaussian width, Definition 7.5.1 of \cite{vershynin2019high}]
	The Gaussian width of a subset $T \subset \mathbb{R}^n$ is defined as 
	\[
	\omega(T) \defeq \mathbb{E} \sup_{\bm{x} \in T} \langle \bm{g}, \bm{x} \rangle 
	\quad \text{where} \quad
	\bm{g} \sim N(\bm{0}, \bm{I}_n).
	\]
\end{definition}

\begin{definition}[Talagrand's $\gamma_{\alpha}$ functional, Definition 8.5.1 of \cite{vershynin2019high}]
	Let $(T, d)$ be a metric space. A sequence of subsets $\left(T_k\right)_{k=0}^\infty$ of $T$ is called an admissible sequence if the cardinalities of $T_k$ satisfy $|T_0| = 1$ and $|T_k| \leq 2^{2^k}$ for all $k \geq 1$. The $\gamma_\alpha$ functional of $T$ is defined as 
	\[
	\gamma_\alpha(T,d) = \inf_{(T_k)}\sup_{t \in T} \sum_{k=0}^{\infty} 2^{k/\alpha}d(t, T_k)
	\]
	where the infimum is with respect to all admissible sequences.
\end{definition}

\begin{lemma}[\textbf{Covering number}] \label{lemma:covering}
	This lemma derives the covering number for a series of parameter space that are involved in the subsequent proof.
	\begin{itemize}
		\item[(a)] The cardinality of the $\epsilon$-covering net of $\mathcal{S}(\underline{d}, R)$, denoted as $\mathcal{N}(\mathcal{S}, \norm{\cdot}_\Fr, \epsilon)$, satisfies
		\begin{align*}
			\mathcal{N}(\mathcal{S}, \norm{\cdot}_\Fr, \epsilon) \leq 
			\left(1 + \frac{2n \sqrt{R}}{\epsilon}\right)^{R \sum_{i=1}^{n}d_i}.
		\end{align*}
		
		\item[(b)] Denote $\bm{\Xi}(\underline{q}, P, R_y, R_x) = \{\bm{\Delta} \in \mathbb{R}^{\prod_{i=1}^{n} q_i \times P  \prod_{i=1}^{n} q_i}: \bm{\Delta} = \bm{A}_1 - \bm{A}_2,    \mathcal{M}_n(\bm{A}_1),  \mathcal{M}_n(\bm{A}_2) \in \bm{\Theta}(\underline{q}, P, R_y, R_x)\}$ and $\bm{\Xi}'(\underline{q}, P, R_y, R_x) = \{\bm{\Delta} \in \bm{\Xi}(\underline{q}, P, R_y, R_x): \lVert \bm{\Delta} \rVert_\Fr \leq 1 \}$. The cardinality of the $\epsilon$-covering net of $\bm{\Xi}'(\underline{q}, P, R_y, R_x)$, denoted as $\mathcal{N}(\bm{\Xi}', \norm{\cdot}_\Fr, \epsilon)$, satisfies
		\begin{align*}
			\mathcal{N}(\bm{\Xi}', \norm{\cdot}_\Fr, \epsilon) \leq   
			\left(1 + \frac{16\sqrt{2}ng^\prime\sqrt{PR_yR_x}}{\epsilon} \right)^{4PR_yR_x + 2(R_y+R_x)\sum_{i=1}^{n}q_i},
		\end{align*} 
		where $g^\prime =  g\sqrt{\min(R_y, PR_x)}$.
		
		\item[(c)] Denote $\bm{\Xi}_\mathrm{LR}(\underline{q}, \underline{p}, R_y, R_x) = \{\bm{\Delta} \in \mathbb{R}^{\prod_{i=1}^{n} q_i \times \prod_{j=1}^{m} p_j}: \bm{\Delta} = \bm{A}_1 - \bm{A}_2,    \mathcal{M}_n(\bm{A}_1),  \mathcal{M}_n(\bm{A}_2) \in \bm{\Theta}_\mathrm{LR}(\underline{q}, \underline{p}, R_y, R_x)\}$ and $\bm{\Xi}'_{\mathrm{LR}}(\underline{q}, \underline{p}, R_y, R_x) = \{\bm{\Delta} \in \bm{\Xi}_\mathrm{LR}(\underline{q}, \underline{p}, R_y, R_x): \lVert \bm{\Delta} \rVert_\Fr \leq 1 \}$. The cardinality of the $\epsilon$-covering net of $\bm{\Xi}'_{\mathrm{LR}}(\underline{q}, \underline{p}, R_y, R_x)$, denoted as $\mathcal{N}(\bm{\Xi}'_\mathrm{LR}, \norm{\cdot}_\Fr, \epsilon)$, satisfies
		\begin{align*}
			\mathcal{N}(\bm{\Xi}'_\mathrm{LR}, \norm{\cdot}_\Fr, \epsilon) \leq   
			\left(1 + \frac{16\sqrt{2}(m \vee n )g^\prime\sqrt{R_yR_x}}{\epsilon} \right)^{4R_yR_x + 2R_y\sum_{i=1}^{n}q_i + 2R_x\sum_{j=1}^{m}p_j},
		\end{align*} 
		where $g^\prime =  g\sqrt{\min(R_y, R_x)}$ and $m \vee n$ stands for the maximum between $m$ and $n$.
		
		\item[(d)] Denote $ \bm{\Xi}_\mathrm{LR}^j(\underline{q}, \underline{p}, R_y, R_x) = \{\bm{\delta} \in \mathbb{R}^{1 \times \prod_{j=1}^{m}p_j}: \bm{\delta}= \bm{\Lambda}_{y,1}^j \bm{G}_1 \bm{\Lambda}_{x,1}^\top - \bm{\Lambda}_{y,2}^j \bm{G}_2 \bm{\Lambda}_{x,2}^\top,   \bm{\Lambda}_{y,k}^j \in \mathbb{R}^{1 \times R_y}, \bm{G}_k \in \mathbb{R}^{R_y \times R_x},  \bm{\Lambda}_{x,k} \in  \mathcal{S}(\underline{p}, R_x), \text{ for } k \in \{1,2\}, \norm{\bm{\delta}}_2 = \norm{\bm{\Delta}^j}_2 \}$,
		and further,
		${\bm{\Xi}_\mathrm{LR}^j}^\prime(\underline{q}, \underline{p}, R_y, R_x) = \{\bm{\delta} \in \bm{\Xi}_\mathrm{LR}^j(\underline{q}, \underline{p},R_y, R_x): \norm{\bm{\delta}}_2 \leq 1\}$. Then the cardinality of the $\epsilon$-covering net of ${\bm{\Xi}_\mathrm{LR}^j}^\prime(\underline{q},\underline{p}, R_y, R_x)$, denoted as $\mathcal{N}({\bm{\Xi}_\mathrm{LR}^j}^\prime, \norm{\cdot}_2, \epsilon)$, satisfies
		
		\[
		\mathcal{N}({\bm{\Xi}_\mathrm{LR}^j}^\prime, \norm{\cdot}_2, \epsilon) \leq 
		\left(1 + \frac{12\sqrt{2}mg^\prime\sqrt{R_yR_x}}{\epsilon} \right)^{4R_yR_x + 2R_y + 2R_x\sum_{j=1}^{m}p_j},
		\]
		where $g^\prime$ is defined same as in Lemma \ref{lemma:covering}(c).
		
		\item[(e)] Denote the parameter space of a sparse matrix as $\mathcal{H}_s(d_1, d_2) = \left\{\bm{A} \in \mathbb{R}^{d_1 \times d_2}: \norm{\bm{A}}_0 \leq s\right\}$ and $\mathcal{H}_s^\prime(d_1, d_2)  = \left\{\bm{A} \in \mathcal{H}_s(d_1, d_2) : \norm{\bm{A}}_\Fr \leq 1\right\}$. The cardinality of the $\epsilon$-covering net of $\mathcal{H}_s^\prime(d_1, d_2) $, denoted as $\mathcal{N}(\mathcal{H}_s^\prime, \norm{\cdot}_\Fr, \epsilon)$, satisfies 
		\[
		\mathcal{N}(\mathcal{H}_s^\prime, \norm{\cdot}_\Fr, \epsilon) \leq {d_1d_2 \choose s}\left(\frac{3}{\epsilon}\right)^s.
		\]
		Further, for any $\bm{\Delta}_a$, $\bm{\Delta}_b \in \mathcal{H}_s$, there exist $\bm{\Delta}_k$, $\bm{\Delta}_l \in \mathcal{H}_s$ such that $\bm{\Delta}_a - \bm{\Delta}_b = \bm{\Delta}_k + \bm{\Delta}_l$ and $\left\langle \bm{\Delta}_k, \bm{\Delta}_l \right\rangle = 0$. Consequently, $\norm{\bm{\Delta}_k}_\Fr + \norm{\bm{\Delta}_l}_\Fr \leq \sqrt{2} \norm{\bm{\Delta}_a - \bm{\Delta}_b}_\Fr$.
	\end{itemize}
\end{lemma}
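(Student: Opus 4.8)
The common strategy is to reduce the covering of each structured parameter space to the covering of a product of elementary sets---Euclidean unit spheres for the factor vectors and operator-norm balls for the core matrices---and then glue these elementary nets together through a multiplicative perturbation bound for the relevant (multi)linear reconstruction map. Throughout I will use that an $\eta$-net of $\mathbb{S}^{d-1}$ has cardinality at most $(1+2/\eta)^{d}$, that an $\eta$-net of the operator-norm ball $\{\bm{G}\in\mathbb{R}^{a\times b}:\norm{\bm{G}}_\op\le g\}$ has cardinality at most $(1+2g/\eta)^{ab}$, and the sub-multiplicative inequality $\norm{\bm{A}\bm{B}\bm{C}}_\Fr\le\norm{\bm{A}}_\op\norm{\bm{B}}_\Fr\norm{\bm{C}}_\op$.

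For part (a), every $\bm{\Lambda}\in\mathcal{S}(\underline d,R)$ is determined by the $nR$ unit vectors $\bm{u}^{(i)}_r\in\mathbb{S}^{d_i-1}$, and a telescoping decomposition of $\bm{u}^{(1)}_r\circ\cdots\circ\bm{u}^{(n)}_r-\tilde{\bm{u}}^{(1)}_r\circ\cdots\circ\tilde{\bm{u}}^{(n)}_r$ into $n$ terms, each a Kronecker product of unit vectors, shows that the $r$-th column of $\bm{\Lambda}$ is $n$-Lipschitz in $(\bm{u}^{(1)}_r,\dots,\bm{u}^{(n)}_r)$; summing the squared column errors gives $\norm{\bm{\Lambda}-\tilde{\bm{\Lambda}}}_\Fr\le\sqrt R\,n\,\max_{i,r}\norm{\bm{u}^{(i)}_r-\tilde{\bm{u}}^{(i)}_r}_2$. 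Placing $(\epsilon/(n\sqrt R))$-nets on the $nR$ spheres and taking their product then yields the stated bound. For parts (b), (c) and (d) I would net the two constituent triples $(\bm{\Lambda}_y,\bm{G},\bm{\Lambda}_x)$ of $\bm{\Delta}=\bm{A}_1-\bm{A}_2$ separately: the loading factors are covered via part (a) (in (d) the relaxed single row is covered as a vector in a ball of radius $\sqrt{R_y}$), while $\bm{G}$ (or the stacked $(\bm{G}_1,\dots,\bm{G}_P)$) is covered inside its operator-norm ball of radius $g$ (Assumption \ref{assump:core}), using $\norm{\bm{G}}_\Fr\le\sqrt{\min(R_y,R_x)}\,g=g'$. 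Writing $\bm{\Lambda}_y\bm{G}\bm{\Lambda}_x^\top-\tilde{\bm{\Lambda}}_y\tilde{\bm{G}}\tilde{\bm{\Lambda}}_x^\top$ as a three-term telescoping sum---swapping $\bm{\Lambda}_y$, then $\bm{G}$, then $\bm{\Lambda}_x$---and bounding each term by the sub-multiplicative inequality together with $\norm{\bm{\Lambda}_y}_\op\le\sqrt{R_y}$, $\norm{\bm{\Lambda}_x}_\op\le\sqrt{R_x}$ (unit-norm columns) and $\norm{\bm{G}}_\Fr\le g'$, shows the reconstruction map is Lipschitz with constant of order $g'\sqrt{R_yR_x}$ (with an extra $P^{1/2}$ in (b) from the stacked core); propagating the per-sphere radius through part (a) contributes the factor $m\vee n$ in (c) (resp.\ $n$ in (b), $m$ in (d)). Choosing the elementary resolutions so the total error is at most $\epsilon$ and multiplying cardinalities over all parameter blocks of both triples---each constrained block being bounded by its ambient dimension, so that these sum to the displayed exponents---gives (b)--(d).

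For part (e), the covering bound is the textbook one: union over the at most $\binom{d_1 d_2}{s}$ choices of an $s$-element support, with an $\epsilon$-net of the Euclidean unit ball in $\mathbb{R}^s$ of size at most $(3/\epsilon)^s$ on each support. For the decomposition claim, given $\bm{\Delta}_a,\bm{\Delta}_b\in\mathcal{H}_s$ set $S=\mathrm{supp}(\bm{\Delta}_a)$, $\bm{\Delta}_k=(\bm{\Delta}_a-\bm{\Delta}_b)|_S$ and $\bm{\Delta}_l=(\bm{\Delta}_a-\bm{\Delta}_b)|_{\mathrm{supp}(\bm{\Delta}_b)\setminus S}$; these have disjoint supports of size at most $s$, sum to $\bm{\Delta}_a-\bm{\Delta}_b$, are orthogonal, and hence $\norm{\bm{\Delta}_k}_\Fr+\norm{\bm{\Delta}_l}_\Fr\le\sqrt2\,(\norm{\bm{\Delta}_k}_\Fr^2+\norm{\bm{\Delta}_l}_\Fr^2)^{1/2}=\sqrt2\,\norm{\bm{\Delta}_a-\bm{\Delta}_b}_\Fr$.

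The main obstacle is the perturbation analysis underlying parts (b)--(d): because the CP loadings $\bm{\Lambda}_y,\bm{\Lambda}_x$ are \emph{not} orthonormal (unlike the Tucker case), one cannot simply invoke a standard low-rank covering number, and the Lipschitz constant of the trilinear reconstruction map genuinely picks up the operator norms $\sqrt{R_y},\sqrt{R_x}$ of the loadings and the Frobenius bound $g'$ on the core---it is exactly this that is responsible for the additional logarithmic factor $d_c$ appearing in Theorems \ref{thm:reg}--\ref{thm: ar-sparse}. The remaining ingredients---the sphere and ball net cardinalities, the telescoping identities, and the dimension count for the exponents---are routine bookkeeping.
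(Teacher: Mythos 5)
Your proposal is correct and follows essentially the same route as the paper: an $(\epsilon/(n\sqrt{R}))$-net on each unit sphere combined with the telescoping bound for part (a), a three-term telescoping of the trilinear map $(\bm{\Lambda}_y,\bm{G},\bm{\Lambda}_x)\mapsto\bm{\Lambda}_y\bm{G}\bm{\Lambda}_x^\top$ with the Frobenius bound $g'$ on the core for parts (b)--(d), and the standard support-union argument plus an orthogonal splitting for part (e). The only differences are bookkeeping: the paper stacks the two triples of $\bm{\Delta}=\bm{A}_1-\bm{A}_2$ into a single block product (with $\bm{\Lambda}_y=[\bm{\Lambda}_{y,1},\bm{\Lambda}_{y,2}]$ and block-diagonal $\bm{G}$) and nets the core in a Frobenius ball of radius $\sqrt{2}g'$ rather than an operator-norm ball, and in (e) it splits $\bm{\Delta}_a-\bm{\Delta}_b$ by cardinality rather than by support, but these yield the same exponents and constants up to the stated bounds.
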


\begin{proof}
	(a) 
	Denote the radius-$\nu$ ball by $\mathcal{B}_2^d(\nu) = \{\bm{b} \in \mathbb{R}^d: \lVert \bm{b} \rVert_2 \leq \nu\}$. Set the $\epsilon$-covering net to be $\bar{\mathcal{S}}(\underline{d}, R) = \{ \bar{\bm{\Lambda}} \in \mathbb{R}^{\prod_{i=1}^{n}d_i \times R}: \bar{\bm{\Lambda}}  = \bar{\bm{U}}_n \odot \cdots \odot \bar{\bm{U}}_1 , \bar{\bm{u}}^{(i)}_r \in \overline{ \mathcal{B}}_2^{d_i}(1), \text{ for } 1 \leq r \leq R, 1 \leq i \leq n \}$, where 
	$\bar{\bm{u}}^{(i)}_r$ is the $r$-th column of the matrix $\bar{\bm{U}}_i \in \mathbb{R}^{d_i \times R}$,
	$\overline{ \mathcal{B}}_2^{d_i}(1)$ is a $\epsilon / (n\sqrt{R})$-net for $\mathcal{B}_2^{d_i}(1)$ with 
	$|\bar{\mathcal{B}}_2^{d_i}| = (1 + 2n\sqrt{R} / \epsilon)^{d_i}$. 
	It is sufficient to show that for any $\bm{\Lambda} \in \mathcal{S}(\underline{d}, R)$, there exists a $\bar{\bm{\Lambda}} \in \bar{\mathcal{S}}(\underline{d}, R)$ such that $\lVert \bm{\Lambda}  - \bar{\bm{\Lambda}} \rVert_\Fr \leq \epsilon$. Let $\bm{\Lambda} (r)$ refer to the $r$-th column of $\bm{\Lambda}$. Since
	\begin{align*}
		& \lVert  \bm{\Lambda}(r)  - \bar{\bm{\Lambda}}(r) \rVert_2 \\
		= &  \lVert \text{vec}(\bm{u}^{(1)}_r  \circ \cdots \circ \bm{u}^{(n)}_r  -  \bar{\bm{u}}^{(1)}_r \circ \cdots \circ \bar{\bm{u}}^{(n)}_r) \rVert_2 \\
		\leq & \lVert (\bm{u}^{(1)}_r -  \bar{\bm{u}}^{(1)}_r)   \circ \bm{u}^{(2)}_r \circ  \cdots \circ \bm{u}^{(n)}_r \rVert_\Fr
		+ \cdots +
		\lVert  \bar{\bm{u}}^{(1)}_r   \circ  \cdots \circ \bar{\bm{u}}^{(n-1)}_r  \circ (\bm{u}^{(n)}_r -  \bar{\bm{u}}^{(n)}_r)\rVert_\Fr \\
		\leq & \sum_{i=1}^n \lVert \bm{u}^{(i)}_r -  \bar{\bm{u}}^{(i)}_r \rVert_2 
		\leq \epsilon / \sqrt{R},
	\end{align*} 
	we have 
	\begin{align*}
		\lVert \bm{\Lambda}  - \bar{\bm{\Lambda}} \rVert_\Fr
		= \sqrt{\sum_{r=1}^R \lVert  \bm{\Lambda}(r)  - \bar{\bm{\Lambda}}(r) \rVert_2^2  } 
		\leq \sqrt{R} \max_r \lVert  \bm{\Lambda}(r)  - \bar{\bm{\Lambda}}(r) \rVert_2
		\leq \epsilon, 
	\end{align*}
	and 
	$|\bar{\mathcal{S}}(\underline{d}, R)| = (1 + 2n \sqrt{R}/\epsilon)^{R\times \sum_{i=1}^{n}d_i}$. 
	
	(b) 
	Note that $\bm{\Delta}$ is of the form
	\begin{align*}
		\bm{\Delta} =  \bm{\Lambda}_{y,1} \bm{G}_1 (\bm{I}_P \otimes \bm{\Lambda}_{x,1}^\top) -  \bm{\Lambda}_{y,2} \bm{G}_2 (\bm{I}_P \otimes \bm{\Lambda}_{x,2}^\top) 
		=
		\underbrace{
			\begin{bmatrix}
				\bm{\Lambda}_{y,1} & \bm{\Lambda}_{y,2}
			\end{bmatrix}
		}_{\bm{\Lambda}_y}
		\underbrace{
			\begin{bmatrix}
				\bm{G}_1 & 0 \\
				0 & -\bm{G}_2
			\end{bmatrix}
		}_{\bm{G}}
		\underbrace{
			\begin{bmatrix}
				\bm{I}_P \otimes \bm{\Lambda}_{x,1}^\top \\ \bm{I}_P \otimes \bm{\Lambda}_{x,2}^\top
			\end{bmatrix}
		}_{\bm{\Lambda}_x^\top}.
	\end{align*}
	By Assumption \ref{assump:core}, we have $\norm{\bm{G}_1}_\Fr \leq \sqrt{\min(R_y, PR_x)} \norm{\bm{G}_1}_\op \leq g\sqrt{\min(R_y, PR_x)}$ and same for $\bm{G}_2$. Define $g^\prime \defeq Pg\sqrt{\min(R_y, PR_x)}$.
	Then the covering net can be constructed to be
	\begin{equation}
		\label{eq:covering-net}
		\begin{split}
			\bar{\bm{\Xi}}(\underline{q}, P, R_y, R_x) = \{\bar{\bm{\Lambda}}_y \bar{\bm{G}}   
			\underbrace{
				\begin{bmatrix}
					\bm{I}_P \otimes \bar{\bm{\Lambda}}_{x,1}^\top \\ \bm{I}_P \otimes \bar{\bm{\Lambda}}_{x,2}^\top
				\end{bmatrix}
			}_{\bar{\bm{\Lambda}}_x^\top},
			\bar{\bm{\Lambda}}_y \in \bar{\mathcal{S}}(\underline{q}, 2R_y), 
			\vectorize(\bar{\bm{G}}) \in\bar{ \mathcal{B}}_2^{4PR_yR_x}(\sqrt{2}g^\prime), \\
			\bar{\bm{\Lambda}}_{x,1}, \bar{\bm{\Lambda}}_{x,2} \in \bar{\mathcal{S}}(\underline{q}, R_x)
			\}
		\end{split}
	\end{equation}
	where $\bar{\mathcal{S}}(\underline{q}, 2R_y)$ is an $\epsilon/(8g^\prime\sqrt{PR_x})$-covering net for ${\mathcal{S}}(\underline{q}, 2R_y)$ with $|\bar{\mathcal{S}}(\underline{q}, 2R_y)| \leq (1 +16\sqrt{2}n g^\prime$ $\sqrt{PR_yR_x} / \epsilon)^{2R_y\sum_{i=1}^{n}q_i}$ by Lemma \ref{lemma:covering}(a), $\bar{\mathcal{S}}(\underline{q}, R_x)$ is an $\epsilon/(8g^\prime\sqrt{PR_y})$-covering net for ${\mathcal{S}}(\underline{q}, R_x)$ with $|\bar{\mathcal{S}}(\underline{q}, R_x)| \leq (1 +16n g^\prime$$\sqrt{PR_yR_x} / \epsilon)^{R_x\sum_{i=1}^{n}q_i}$ by Lemma \ref{lemma:covering}(a).
	Moreover, 
	$\bar{ \mathcal{B}}_2^{4PR_yR_x}(\sqrt{2}g^\prime)$ is an $\epsilon / (8\sqrt{PR_yR_x})$-covering net for ${ \mathcal{B}}_2^{4PR_yR_x}(\sqrt{2}g^\prime)$ with $|\bar{ \mathcal{B}}_2^{4PR_yR_x}(\sqrt{2}g^\prime)| \leq (1 +16\sqrt{2}g^\prime\sqrt{PR_yR_x} / \epsilon)^{4PR_yR_x}$.
	
	Then we have for any $\bm{\Delta} \in  \bm{\Xi}^\prime(\underline{q}, P, R_y, R_x)$, there exists $\bar{\bm{\Delta}} \in \bar{\bm{\Xi}}^\prime(\underline{q}, P, R_y, R_x)$ such that 
	\begin{align*}
		\norm{\bm{\Delta} - \bar{\bm{\Delta}}}_\Fr 
		&  = \norm{(\bm{\Lambda}_y - \bar{\bm{\Lambda}}_y) \bm{G} \bm{\Lambda}_x^\top
			+ \bar{\bm{\Lambda}}_y(\bm{G} - \bar{\bm{G}}) \bm{\Lambda}_x^\top
			+ \bar{\bm{\Lambda}}_y \bar{\bm{G}} (\bm{\Lambda}_x^\top - \bar{\bm{\Lambda}}_x^\top) 
		}_\Fr \\
		& \leq \norm{\bm{\Lambda}_y - \bar{\bm{\Lambda}}_y}_\Fr \norm{\bm{G}}_\Fr \norm{\bm{\Lambda}_x}_\Fr + 
		\norm{\bar{\bm{\Lambda}}_y}_\Fr \norm{\bm{G} - \bar{\bm{G}}}_\Fr \norm{\bm{\Lambda}_x}_\Fr +
		\norm{\bar{\bm{\Lambda}}_y}_\Fr \norm{\bar{\bm{G}}}_\Fr \norm{\bm{\Lambda}_x- \bar{\bm{\Lambda}}_x}_\Fr  \\
		& \leq 2g^\prime\sqrt{PR_x} \norm{\bm{\Lambda}_y - \bar{\bm{\Lambda}}_y}_\Fr 
		+ 2\sqrt{PR_yR_x}\norm{\bm{G} - \bar{\bm{G}}}_\Fr + 2g^\prime \sqrt{R_y}  \norm{\bm{\Lambda}_x - \bar{\bm{\Lambda}}_x}_\Fr \\
		& \leq 2g^\prime\sqrt{PR_x} \norm{\bm{\Lambda}_y - \bar{\bm{\Lambda}}_y}_\Fr 
		+ 2\sqrt{PR_yR_x}\norm{\bm{G} - \bar{\bm{G}}}_\Fr + 
		2g^\prime \sqrt{PR_y} \norm{\bm{\Lambda}_{x,1} -\bar{ \bm{\Lambda}}_{x,1}}_\Fr \\
		& \hspace{10mm}+ 2g^\prime \sqrt{PR_y} \norm{\bm{\Lambda}_{x,2} -\bar{ \bm{\Lambda}}_{x,2}}_\Fr \\
		& \leq \epsilon
		,
	\end{align*}
	where the second inequality uses the facts that 
	\[\norm{\bm{\Lambda}_x}_\Fr = \sqrt{ \norm{\bm{I}_P \otimes \bm{\Lambda}_{x,1}}_\Fr^2 + \norm{\bm{I}_P \otimes \bm{\Lambda}_{x,2}}_\Fr^2}  =
	\sqrt{\norm{\bm{I}_P}_\Fr^2 \norm{\bm{\Lambda}_{x,1}}_\Fr^2 + \norm{\bm{I}_P}_\Fr^2 \norm{\bm{\Lambda}_{x,2}}_\Fr^2} = \sqrt{2PR_x},
	\]
	and $\norm{\bm{\Lambda}_y }_\Fr = \sqrt{2R_y}$ since all of its columns are unit-norm as well as $\norm{\bm{G}}_\Fr \leq \sqrt{2}g^\prime$. The third inequality can be implied from the following: 
	\begin{align*}
		\norm{\bm{\Lambda}_x - \bar{\bm{\Lambda}}_x}_\Fr  &= \sqrt{\norm{\bm{I}_P \otimes (\bm{\Lambda}_{x,1} -\bar{ \bm{\Lambda}}_{x,1})}_\Fr^2 +  \norm{\bm{I}_P \otimes (\bm{\Lambda}_{x,2} -\bar{ \bm{\Lambda}}_{x,2})}_\Fr^2} \\
		& \leq \norm{\bm{I}_P \otimes (\bm{\Lambda}_{x,1} -\bar{ \bm{\Lambda}}_{x,1})}_\Fr +  \norm{\bm{I}_P \otimes (\bm{\Lambda}_{x,2} -\bar{ \bm{\Lambda}}_{x,2})}_\Fr  \\
		& \leq \sqrt{P} ( \norm{\bm{\Lambda}_{x,1} -\bar{ \bm{\Lambda}}_{x,1}}_\Fr  + \norm{\bm{\Lambda}_{x,2} -\bar{ \bm{\Lambda}}_{x,2}}_\Fr).
	\end{align*}
	Hence
	\[
	\begin{split}
		\lvert \bar{\bm{\Xi}}(\underline{q}, P, R_y, R_x) \rvert & \leq |\bar{\mathcal{S}}(\underline{q}, 2R_y)|
		\times |\bar{ \mathcal{B}}_2^{4PR_yR_x}(\sqrt{2}g^\prime)| \times |\bar{\mathcal{S}}(\underline{q}, R_x)|^2
		\\ & \leq \left(1 + \frac{16\sqrt{2}ng^\prime\sqrt{PR_yR_x}}{\epsilon} \right)^{4PR_yR_x + 2(R_y+R_x)\sum_{i=1}^{n}q_i}.
	\end{split}
	\]
	
	(c) The proof of this lemma can be easily obtained by modifying that of Lemma \ref{lemma:covering}(b). Specifically, let $P=1$ and since $\bm{\Lambda}_{x,1}, \bm{\Lambda}_{x,2} \in \mathcal{S}(\underline{p}, R_x)$, we therefore alternatively consider $\bar{\bm{\Lambda}}_{x,1}, \bar{\bm{\Lambda}}_{x,2} \in \bar{\mathcal{S}}(\underline{p}, R_x)$ in \eqref{eq:covering-net}, where $\bar{\mathcal{S}}(\underline{p}, R_x)$ is an $\epsilon/(8g^\prime\sqrt{R_y})$-covering net for ${\mathcal{S}}(\underline{p}, R_x)$ with $|\bar{\mathcal{S}}(\underline{p}, R_x)| \leq (1 +16mg^\prime$$\sqrt{R_yR_x} / \epsilon)^{R_x\sum_{j=1}^{m}p_j}$. The rest of the proof is the same as in Lemma \ref{lemma:covering}(b).
	
	(d) The proof is similar to that of Lemma \ref{lemma:covering}(b) and (c). Note that 
	\[
	\bm{\delta} = 
	\underbrace{
		\begin{bmatrix*}
			\bm{\Lambda}_{y,1}^j  & \bm{\Lambda}_{y,2}^j  
	\end{bmatrix*}}_{\bm{\Lambda}_y^j}
	\underbrace{
		\begin{bmatrix*}
			\bm{G}_1 & 0 \\ 0 & -\bm{G}_2
	\end{bmatrix*}}_{\bm{G}}
	\underbrace{
		\begin{bmatrix}
			\bm{\Lambda}_{x,1}^\top \\ \bm{\Lambda}_{x,2}^\top
		\end{bmatrix}
	}_{\bm{\Lambda}_x^\top}.
	\]
	We can construct the covering net to be
	\begin{equation*}
		\begin{split}
			\bar{\bm{\Xi}}_\mathrm{LR}^j(\underline{q}, \underline{p}, R_y, R_x) = \{\bar{\bm{\Lambda}}_y^j \bar{\bm{G}} {\bar{\bm{\Lambda}}_x^\top},
			\bar{\bm{\Lambda}}_y^j \in \bar{\mathcal{B}}_2^{2R_y}(\sqrt{2R_y}), 
			\vectorize(\bar{\bm{G}}) \in\bar{ \mathcal{B}}_2^{4R_yR_x}(\sqrt{2}g^\prime), 
			\bar{\bm{\Lambda}}_{x} \in \bar{\mathcal{S}}(\underline{p}, 2R_x)
			\},
		\end{split}
	\end{equation*}
	where $\bar{\mathcal{B}}_2^{2R_y}(\sqrt{2R_y})$ is an $\epsilon/(6g^\prime\sqrt{R_x})$-covering net for ${\mathcal{B}}_2^{2R_y}(\sqrt{2R_y})$ with $|\bar{\mathcal{B}}_2^{2R_y}(\sqrt{2R_y})| \leq (1 +12\sqrt{2} g^\prime$ $\sqrt{R_yR_x} / \epsilon)^{2R_y}$, $\bar{\mathcal{S}}(\underline{p}, 2R_x)$ is an $\epsilon/(6g^\prime\sqrt{R_y})$-covering net for ${\mathcal{S}}(\underline{p}, 2R_x)$ with $|\bar{\mathcal{S}}(\underline{p}, 2R_x)| \leq (1 +12\sqrt{2}m g^\prime$$\sqrt{R_yR_x} / \epsilon)^{2R_x\sum_{j=1}^{m}p_j}$ by Lemma \ref{lemma:covering}(a),
	$\bar{ \mathcal{B}}_2^{4R_yR_x}(\sqrt{2}g^\prime)$ is an $\epsilon / (6\sqrt{R_yR_x})$-covering net for ${ \mathcal{B}}_2^{4R_yR_x}(\sqrt{2}g^\prime)$ with $|\bar{ \mathcal{B}}_2^{4R_yR_x}(\sqrt{2}g^\prime)| \leq (1 + 12\sqrt{2}g^\prime\sqrt{R_yR_x} / \epsilon)^{4R_yR_x}$.
	
	It is sufficient to show that for any $\bm{\delta} \in {\bm{\Xi}_\mathrm{LR}^j}^\prime$, there exists a $\bar{\bm{\delta}} \in \bar{\bm{\Xi}}_\mathrm{LR}^j$ such that $\norm{\bm{\delta}  - \bar{\bm{\delta}}}_2 \leq \epsilon$. 
	We verify that
	\begin{align*}
		\norm{\bm{\delta}  - \bar{\bm{\delta}}}_2
		& = 
		\norm{ \bm{\Lambda}_{y}^j  \bm{G} \bm{\Lambda}_x^\top -  \bar{\bm{\Lambda}}_{y}^j  \bar{\bm{G}} \bar{\bm{\Lambda}}_x^\top}_2\\ 
		& \leq 
		\norm{ \bm{\Lambda}_{y}^j -  \bar{\bm{\Lambda}}_{y}^j}_2 
		\norm{\bm{G}}_\Fr \norm{ \bm{\Lambda}_x}_\Fr + 
		\norm{\bar{\bm{\Lambda}}_{y}^j}_2 \norm{\bm{G} - \bar{\bm{G}}}_\Fr \norm{ \bm{\Lambda}_x}_\Fr  +
		\norm{\bar{\bm{\Lambda}}_{y}^j}_2 \norm{\bar{\bm{G}}}_\Fr \norm{ \bm{\Lambda}_x - \bar{\bm{\Lambda}}_x}_\Fr \\
		& \leq  
		2g^\prime\sqrt{R_x} 	\norm{ \bm{\Lambda}_{y}^j -  \bar{\bm{\Lambda}}_{y}^j}_2 +
		2\sqrt{R_yR_x}\norm{\bm{G} - \bar{\bm{G}}}_\Fr + 2g^\prime \sqrt{R_y}  \norm{\bm{\Lambda}_x - \bar{\bm{\Lambda}}_x}_\Fr \\
		& \leq \epsilon,
	\end{align*}
	where the second inequality uses the fact that $\norm{\bm{\Lambda}_{y}^j }_2 \leq \norm{\bm{\Lambda}_{y}}_\Fr \leq \sqrt{2R_y}$. 
	Hence
	\[
	\begin{split}
		\lvert \bar{\bm{\Xi}}_\mathrm{LR}^j(\underline{q}, \underline{p}, R_y, R_x) \rvert & \leq |\bar{\mathcal{B}}_2^{2R_y}(\sqrt{2R_y})|
		\times |\bar{ \mathcal{B}}_2^{4R_yR_x}(\sqrt{2}g^\prime)| \times |\bar{\mathcal{S}}(\underline{p}, 2R_x)|
		\\ & \leq \left(1 + \frac{12\sqrt{2}mg^\prime\sqrt{R_yR_x}}{\epsilon} \right)^{4R_yR_x + 2R_y + 2R_x\sum_{j=1}^{m}p_j}.
	\end{split}
	\]

	(e) The covering number follows trivially from (4.10) of \cite{vershynin2019high} by considering all the possible positions of the $s$ nonzero entries. Besides, for any $\bm{\Delta}_a$, $\bm{\Delta}_b \in \mathcal{H}_s(d_1, d_2) $, we have $\norm{\bm{\Delta}_a - \bm{\Delta}_b}_0 \leq \norm{\bm{\Delta}_a}_0 + \norm{\bm{\Delta}_b}_0 \leq 2s$. Let $\bm{\Delta}_k \in \mathbb{R}^{d_1 \times d_2}$ contain $\lfloor{\norm{\bm{\Delta}_a - \bm{\Delta}_b}_0 / 2}\rfloor$ - nonzero entries with $\bm{\Delta}_k(i,j) = (\bm{\Delta}_a - \bm{\Delta}_b)(i,j)$ for any $\bm{\Delta}_k(i,j) \neq 0$, and $\bm{\Delta}_l = \bm{\Delta}_a - \bm{\Delta}_b - \bm{\Delta}_k$. Then one can easily verify that $\left\langle \bm{\Delta}_k, \bm{\Delta}_l \right\rangle = 0$.
	Consequently, $\norm{\bm{\Delta}_k}_\Fr + \norm{\bm{\Delta}_l}_\Fr \leq 
	\sqrt{2} \sqrt{\norm{\bm{\Delta}_k}_\Fr^2 + \norm{\bm{\Delta}_l}_\Fr ^2} =
	\sqrt{2} \norm{\bm{\Delta}_a - \bm{\Delta}_b}_\Fr$, where the first step results from the Cauchy-Schwarz inequality.
\end{proof}

\begin{lemma}[\textbf{Regression RSC}]\label{lemma:RSCregression}
	Suppose that Assumptions \ref{assump:reg-input}, \ref{assump:reg-error}, \ref{assump:core} hold and $T \gtrsim  \sigma^4 (d_\mathrm{LR}^\prime d_c^\prime + \sum_{i=1}^{n}\log q_i)$, we have	
	\[
	\frac{1}{4}\sqrt{c_x}\norm{\bm{\Delta}}_{\mathrm{F}}\leq\norm{\bm{\Delta}}_T\leq2\sqrt{C_x}\norm{\bm{\Delta}}_{\mathrm{F}},
	\]
	for all $\bm{\Delta}\in \bm{\Xi}_\mathrm{LR}(\underline{q}, \underline{p}, R_y, R_x) $ with probability at least 
	$1-\exp\left\{-c_1d_\mathrm{LR}^\prime d_c^\prime  -c_2 \sum_{i=1}^n \log q_i\right\}$,
	where $d_\mathrm{LR}^\prime = R_yR_x + R_y + R_x\sum_{j=1}^{m}p_j$ and $d_c^\prime = \log \left({m(R_y \wedge R_x)R_y^{1/2}R_x^{1/2}}\right)$, and $c$'s are some positive constants.
\end{lemma}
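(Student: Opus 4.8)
The plan is to prove the two-sided bound \emph{row by row}, exactly as flagged in the discussion after Theorem~\ref{thm:reg}, and then aggregate. Write $Q=\prod_{i=1}^n q_i$, and for $\bm\Delta\in\bm\Xi_\mathrm{LR}$ let $\bm\delta^j\in\mathbb{R}^{1\times\prod_j p_j}$ denote its $j$-th row, $j=1,\dots,Q$. Since $\norm{\bm\Delta}_T^2=\frac1T\sum_{t=1}^T\norm{\bm\Delta\bm x_t}_2^2=\sum_{j=1}^Q\frac1T\sum_{t=1}^T(\bm\delta^j\bm x_t)^2$ and, by Assumption~\ref{assump:reg-input}, $c_x\norm{\bm\Delta}_\Fr^2\le\sum_j\bm\delta^j\bm\Sigma_x(\bm\delta^j)^\top\le C_x\norm{\bm\Delta}_\Fr^2$, it suffices to show that for each $j$, uniformly over $\bm\delta$ in the row parameter set $\bm\Xi_\mathrm{LR}^j$ of Lemma~\ref{lemma:covering}(d),
\[
\tfrac12\,\bm\delta\bm\Sigma_x\bm\delta^\top\;\le\;\tfrac1T\textstyle\sum_{t=1}^T(\bm\delta\bm x_t)^2\;\le\;2\,\bm\delta\bm\Sigma_x\bm\delta^\top,
\]
and then union bound over $j=1,\dots,Q$ (recalling $\log Q=\sum_i\log q_i$). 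The $j$-th row of any $\bm\Delta\in\bm\Xi_\mathrm{LR}$ indeed lies in $\bm\Xi_\mathrm{LR}^j$: by the doubled-rank factorization $\bm\Delta=\bm\Lambda_y\bm G\bm\Lambda_x^\top$ exhibited in the proof of Lemma~\ref{lemma:covering}(b)--(c), its $j$-th row is $\bm\Lambda_y^j\bm G\bm\Lambda_x^\top$ with $\bm\Lambda_y^j$ a free vector of norm at most $\sqrt{2R_y}$ and $\norm{\bm G}_\op$ controlled by Assumption~\ref{assump:core}.

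For the per-row statement I would work with the normalized set $K_j=\{\bm\delta\in\bm\Xi_\mathrm{LR}^j:\norm{\bm\delta}_2=1\}$ and control the centered empirical process $\bm\delta\mapsto\frac1T\sum_t[(\bm\delta\bm x_t)^2-\bm\delta\bm\Sigma_x\bm\delta^\top]$. Since $\bm x_t$ is $\sigma^2$-sub-Gaussian, each $\bm\delta\bm x_t$ is sub-Gaussian and $(\bm\delta\bm x_t)^2-\mathbb{E}(\bm\delta\bm x_t)^2$ is sub-exponential, so this is a sub-exponential (chaos) empirical process; its supremum over $K_j$ is bounded, via a generic-chaining / Dudley-type argument, by a quantity of order $\sigma^2\,\omega(K_j)/\sqrt T+\sigma^2\,\omega(K_j)^2/T$ (up to the usual split between the $\gamma_2$ and $\gamma_1$ contributions), where $\omega(K_j)$ is the Gaussian width of $K_j$. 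Lemma~\ref{lemma:covering}(d) gives $\log\mathcal N(K_j,\norm{\cdot}_2,\epsilon)\lesssim d_\mathrm{LR}^\prime\log\!\big(m g^\prime\sqrt{R_yR_x}/\epsilon\big)$ with $d_\mathrm{LR}^\prime=R_yR_x+R_y+R_x\sum_j p_j$, whence Dudley's entropy integral yields $\omega(K_j)\lesssim\sqrt{d_\mathrm{LR}^\prime d_c^\prime}$ with $d_c^\prime=\log\!\big(m(R_y\wedge R_x)R_y^{1/2}R_x^{1/2}\big)$. Feeding this back, and using $\bm\delta\bm\Sigma_x\bm\delta^\top\ge c_x$ on $K_j$, the deviation is at most $\tfrac12 c_x$ once $T\gtrsim\sigma^4(d_\mathrm{LR}^\prime d_c^\prime+\sum_i\log q_i)$, the extra $\sum_i\log q_i$ in the sample-size hypothesis being used to inflate the per-row tail so the subsequent union bound over $Q=e^{\sum_i\log q_i}$ rows costs only an additional $\exp(-c\sum_i\log q_i)$ factor; this gives the displayed two-sided row bound with failure probability $\exp(-c_1 d_\mathrm{LR}^\prime d_c^\prime-c_2^\prime\sum_i\log q_i)$ for that row.

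Aggregating, with probability at least $1-\exp(-c_1 d_\mathrm{LR}^\prime d_c^\prime-c_2\sum_i\log q_i)$ the row bound holds simultaneously for all $j$; summing over $j$ and inserting $c_x\norm{\bm\Delta}_\Fr^2\le\sum_j\bm\delta^j\bm\Sigma_x(\bm\delta^j)^\top\le C_x\norm{\bm\Delta}_\Fr^2$ gives $\tfrac12 c_x\norm{\bm\Delta}_\Fr^2\le\norm{\bm\Delta}_T^2\le 2C_x\norm{\bm\Delta}_\Fr^2$ for all $\bm\Delta\in\bm\Xi_\mathrm{LR}$, which implies the claim after relaxing the constants to $\tfrac14\sqrt{c_x}$ and $2\sqrt{C_x}$. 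The main obstacle is the per-row uniform concentration in the second step: $\bm\Xi_\mathrm{LR}^j$ is non-convex (a difference of bilinear images) and the summands are only sub-exponential, so one cannot invoke off-the-shelf results for convex index sets and must run the chaining carefully, tracking both the sub-Gaussian and sub-exponential regimes and verifying that the $\sum_i\log q_i$ budget suffices to absorb the union bound over the $Q$ rows. The bookkeeping of $c_x,C_x$ through the sample-size threshold is routine given Assumption~\ref{assump:reg-input}.
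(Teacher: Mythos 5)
Your proposal is correct and follows essentially the same route as the paper: decompose $\norm{\bm{\Delta}}_T^2$ into rows, establish a uniform two-sided bound per row over the set $\bm{\Xi}_\mathrm{LR}^j$ via generic chaining with the covering number from Lemma \ref{lemma:covering}(d) and Dudley's bound $\omega(\bm{\Xi}_\mathrm{LR}^{j\prime})\lesssim\sqrt{d_\mathrm{LR}^\prime d_c^\prime}$, then union bound over the $Q=e^{\sum_i\log q_i}$ rows. The one step you flag as the main obstacle (the per-row sub-exponential chaining over a non-convex index set) is handled in the paper by invoking Theorem 10 of \cite{banerjee2015estimation} applied to the normalized functions $\langle\cdot,\bm{u}\rangle/\sqrt{\bm{u}^\top\bm{\Sigma}_x\bm{u}}$, which also keeps the sample-size threshold free of $C_x/c_x$ factors.
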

\begin{proof}
	Note that 
	\[\norm{\bm{\Delta}}_T^2=
	\frac{1}{T}\sum_{t=1}^T\norm{\bm{\Delta}\bm{x}_{t}}_2^2
	= \sum_{j=1}^Q \Big( \frac{1}{T} \sum_{t=1}^T\norm{\bm{\Delta}^j\bm{x}_{t}}_2^2 \Big)
	\defeq \sum_{j=1}^Q \norm{\bm{\Delta}^j}_T^2
	,\]
	where $\bm{\Delta}^j \in \mathbb{R}^{1 \times \prod_{j=1}^{m}p_j}$ denotes the $j$-th row of $\bm{\Delta}$.
	By Lemma \ref{lemma:reg-rsc-general} and Assumption \ref{assump:reg-input}, we have when $T \gtrsim \sigma^4 \omega^2({\bm{\Xi}_\mathrm{LR}^j}^\prime)$,
	\begin{align*}
		\frac{1}{16}c_x\norm{\bm{\Delta}^j}_2^2
		\leq  \inf_{\bm{\Delta}^j\in\bm{\Xi}^j_\mathrm{LR}} \norm{\bm{\Delta}^j}_T^2
		\leq \sup_{\bm{\Delta}^j\in\bm{\Xi}^j_\mathrm{LR}} \norm{\bm{\Delta}^j}_T^2
		\leq 
		4 C_x \norm{\bm{\Delta}^j}_2^2
	\end{align*} 
	holds with probability at least $1 - \exp\left\{ -\eta_1 T / \sigma^4\right\}$, where $\eta_1$ are some positive constants,
	$\bm{\Xi}^j_\mathrm{LR} \defeq \bm{\Xi}_\mathrm{LR}^j(\underline{q}, \underline{p}, R_y, R_x) = \{\bm{\delta} \in \mathbb{R}^{1 \times \prod_{j=1}^{m}p_j}: \bm{\delta}= \bm{\Lambda}_{y,1}^j \bm{G}_1 \bm{\Lambda}_{x,1}^\top - \bm{\Lambda}_{y,2}^j \bm{G}_2 \bm{\Lambda}_{x,2}^\top,   \bm{\Lambda}_{y,k}^j \in \mathbb{R}^{1 \times R_y}, \bm{G}_k \in \mathbb{R}^{R_y \times R_x}, $ $\bm{\Lambda}_{x,k} \in  \mathcal{S}(\underline{p}, R_x), \text{ for } k \in \{1,2\}, \norm{\bm{\delta}}_2 = \norm{\bm{\Delta}^j}_2 \}$,
	and lastly,
	${\bm{\Xi}^j_\mathrm{LR}}^\prime \defeq {\bm{\Xi}_\mathrm{LR}^j}^\prime(\underline{q}, \underline{p}, R_y, R_x) = \{\bm{\delta} \in \bm{\Xi}_\mathrm{LR}^j(\underline{q}, \underline{p}, $ $R_y, R_x): \norm{\bm{\delta}}_2 \leq 1\}$.
	
	Then
	\begin{align}\label{eq:reg-rsc-prob}
		& \mathbb{P}  \left\{ 
		\frac{1}{16}c_x \norm{\bm{\Delta}}_\Fr^2 
		\leq 
		\inf_{\bm{\Delta}\in\bm{\Xi}_\mathrm{LR}} \norm{\bm{\Delta}}_T^2 
		\leq 
		\sup_{\bm{\Delta}\in\bm{\Xi}_\mathrm{LR}} \norm{\bm{\Delta}}_T^2
		\leq 
		4 C_x\norm{\bm{\Delta}}_\Fr^2
		\right\} \nonumber \\ 
		\geq & 	\mathbb{P} \left\{ 
		\frac{1}{16}c_x \cdot \sum_{j=1}^Q \norm{\bm{\Delta}^j}_2^2
		\leq
		\sum_{j=1}^Q  \inf_{\bm{\Delta}^j\in\bm{\Xi}_\mathrm{LR}^j} \norm{\bm{\Delta}^j}_T^2
		\leq 
		\sum_{j=1}^Q  \sup_{\bm{\Delta}^j\in\bm{\Xi}_\mathrm{LR}^j} \norm{\bm{\Delta}^j}_T^2
		\leq 
		4C_x \cdot  \sum_{j=1}^Q \norm{\bm{\Delta}^j}_2^2
		\right\} \nonumber\\
		\geq & \mathbb{P}  \left\{ 
		\bigcap_{j=1}^Q
		\left\{ 
		\frac{1}{16}c_x \norm{\bm{\Delta}^j}_2^2 \leq  \inf_{\bm{\Delta}^j\in\bm{\Xi}_\mathrm{LR}^j} \norm{\bm{\Delta}^j}_T^2
		\leq \sup_{\bm{\Delta}^j\in\bm{\Xi}_\mathrm{LR}^j} \norm{\bm{\Delta}^j}_T^2
		\leq 4C_x \norm{\bm{\Delta}^j}_2^2
		\right\}
		\right\} \nonumber\\
		\geq & 1 - \sum_{j=1}^{Q} \mathbb{P}    \left\{
		\inf_{\bm{\Delta}^j\in\bm{\Xi}_\mathrm{LR}^j} \norm{\bm{\Delta}^j}_T^2 \leq 
		\frac{1}{16}c_x \norm{\bm{\Delta}^j}_2^2,
		\sup_{\bm{\Delta}^j\in\bm{\Xi}_\mathrm{LR}^j} \norm{\bm{\Delta}^j}_T^2
		\geq 4C_x \norm{\bm{\Delta}^j}_2^2
		\right\}
		\nonumber\\
		\geq & 1 - \exp\left\{-\eta_1T / \sigma^4 + \log Q\right\},
	\end{align}
	when $T \gtrsim \sigma^4 \omega^2({\bm{\Xi}^j_\mathrm{LR}}^\prime)$. Note that the sample size requirement here uses the fact that $\omega({\bm{\Xi}_\mathrm{LR}^1}^\prime) = \dots = \omega({\bm{\Xi}_\mathrm{LR}^Q}^\prime)$.
	
	By the Dudley's inequality (see e.g., Theorem 8.1.10 of \cite{vershynin2019high}) and Lemma \ref{lemma:covering}(d), we have
	\begin{align*}
		\omega({\bm{\Xi}^j_\mathrm{LR}}^\prime) 
		& \leq C_1\int_{0}^{2} \sqrt{\log \mathcal{N} \left({\bm{\Xi}^j_\mathrm{LR}}^\prime, \norm{\cdot}_2, \epsilon \right)} \mathrm{d}\epsilon \\
		& \leq C_1 \int_{0}^{2}\left[\left(4R_yR_x + 2R_y + 2R_x\sum_{j=1}^{m}p_j\right)
		\log \left(C_2 \frac{m(R_y \wedge R_x)\sqrt{R_yR_x}}{\epsilon}\right)
		\right]^{1/2} \mathrm{d} \epsilon \\
		& \leq C_3 \left(R_yR_x + R_y + R_x\sum_{j=1}^{m}p_j\right)^{1/2} 
		\left[\log \left({m(R_y \wedge R_x)\sqrt{R_yR_x}}\right)
		\right]^{1/2} .
	\end{align*}
	Denote $d_\mathrm{LR}^\prime = R_yR_x + R_y + R_x\sum_{j=1}^{m}p_j$ and $d_c^\prime = \log \left({m(R_y \wedge R_x)R_y^{1/2}R_x^{1/2}}\right)$.
	Then, when $T \gtrsim \sigma^4 (d_c^\prime d_\mathrm{LR}^\prime + \log Q)$, the following holds:
	\begin{align*}
		\frac{1}{16}c_x \norm{\bm{\Delta}}_\Fr^2 
		\leq 
		\inf_{\bm{\Delta}\in\bm{\Xi}_\mathrm{LR}} \norm{\bm{\Delta}}_T^2 
		\leq 
		\sup_{\bm{\Delta}\in\bm{\Xi}_\mathrm{LR}} \norm{\bm{\Delta}}_T^2
		\leq 
		4 C_x\norm{\bm{\Delta}}_\Fr^2,
	\end{align*}
	with probability at least $1 - \exp\left\{-c_1d_c^\prime d_\mathrm{LR}^\prime - c_2\log Q\right\}$.
\end{proof}

\begin{lemma}[\textbf{Regression deviation bound}]\label{lemma:reg-deviation}
	Suppose that Assumptions \ref{assump:reg-input}, \ref{assump:reg-error}, \ref{assump:core} hold and $T \gtrsim d_cd_{\mathrm{LR}} $, we have
	\[
	\sup_{\bm{\Delta}\in\bm{\Xi}_\mathrm{LR}^\prime}\frac{1}{T}\sum_{t=1}^T\langle\bm{e}_t,\bm{\Delta}\bm{x}_{t}\rangle \leq C\kappa \sigma \sqrt{d_cd_{\mathrm{LR}} /T}	
	\]
	with probability at least $1 - \exp\left\{-cd_cd_{\mathrm{LR}} \right\}$,
	where $d_\mathrm{LR} = R_yR_x + R_y\sum_{i=1}^{n}q_i + R_x\sum_{j=1}^{m}p_j$, $d_c = \log ((m \vee n )(R_y \wedge R_x)R_y^{1/2}R_x^{1/2})$, and $C, c$ are some positive constants.
\end{lemma}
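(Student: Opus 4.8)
The plan is to bound the multiplier process $Z_T(\bm{\Delta})\defeq\frac{1}{T}\sum_{t=1}^{T}\langle\bm{e}_t,\bm{\Delta}\bm{x}_t\rangle$ uniformly over the structured unit ball $\bm{\Xi}_\mathrm{LR}^\prime$ by first identifying the (mixed sub-Gaussian/sub-exponential) tail of a generic increment, then feeding the metric-entropy estimate of Lemma~\ref{lemma:covering}(c) into a Bernstein-type generic-chaining bound, and finally upgrading the resulting expectation bound to the stated high-probability statement by a concentration inequality for suprema of averages of sub-exponential terms.

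For a fixed $\bm{\Delta}$ with $\norm{\bm{\Delta}}_\Fr\le1$: conditionally on $\bm{x}_t$, Assumption~\ref{assump:reg-error} makes $\langle\bm{e}_t,\bm{\Delta}\bm{x}_t\rangle$ centered and $\kappa^2\norm{\bm{\Delta}\bm{x}_t}_2^2$-sub-Gaussian. Writing $\norm{\bm{\Delta}\bm{x}_t}_2=\sup_{\bm{u}\in B_2}\langle\bm{\Delta}^\top\bm{u},\bm{x}_t\rangle$ as a supremum of sub-Gaussian linear forms over a set of Gaussian width $\le\norm{\bm{\Delta}}_\Fr$ and radius $\norm{\bm{\Delta}}_\op$, Assumption~\ref{assump:reg-input} yields that $\norm{\bm{\Delta}\bm{x}_t}_2$ is sub-Gaussian with $\psi_2$-norm $\lesssim\sigma\norm{\bm{\Delta}}_\Fr$, hence $\langle\bm{e}_t,\bm{\Delta}\bm{x}_t\rangle$ is sub-exponential with $\psi_1$-norm $\lesssim\kappa\sigma\norm{\bm{\Delta}}_\Fr$. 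Since the pairs $(\bm{x}_t,\bm{e}_t)$ are i.i.d.\ in the regression setting, the increment $Z_T(\bm{\Delta})-Z_T(\bm{\Delta}')$ is an average of $T$ i.i.d.\ centered variables of $\psi_1$-norm $\lesssim\kappa\sigma\norm{\bm{\Delta}-\bm{\Delta}'}_\Fr$, so it obeys a Bernstein tail with sub-Gaussian scale $\kappa\sigma\norm{\bm{\Delta}-\bm{\Delta}'}_\Fr/\sqrt{T}$ and sub-exponential scale $\kappa\sigma\norm{\bm{\Delta}-\bm{\Delta}'}_\Fr/T$.

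By Lemma~\ref{lemma:covering}(c), $\log\mathcal{N}(\bm{\Xi}_\mathrm{LR}^\prime,\norm{\cdot}_\Fr,\epsilon)\lesssim d_\mathrm{LR}\log\!\bigl((m\vee n)g(R_y\wedge R_x)\sqrt{R_yR_x}/\epsilon\bigr)$, with $d_\mathrm{LR}=R_yR_x+R_y\sum_iq_i+R_x\sum_jp_j$ and $g$ the fixed constant of Assumption~\ref{assump:core}. A Bernstein-type generic-chaining bound matching the two-scale increment above then gives $\mathbb{E}\sup_{\bm{\Delta}\in\bm{\Xi}_\mathrm{LR}^\prime}|Z_T(\bm{\Delta})|\lesssim\frac{\kappa\sigma}{\sqrt{T}}\,\gamma_2(\bm{\Xi}_\mathrm{LR}^\prime,\norm{\cdot}_\Fr)+\frac{\kappa\sigma}{T}\,\gamma_1(\bm{\Xi}_\mathrm{LR}^\prime,\norm{\cdot}_\Fr)$, and the Dudley entropy integral yields $\gamma_2\lesssim\sqrt{d_cd_\mathrm{LR}}$ and $\gamma_1\lesssim d_cd_\mathrm{LR}$ with $d_c=\log\!\bigl((m\vee n)(R_y\wedge R_x)R_y^{1/2}R_x^{1/2}\bigr)$. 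Under the hypothesis $T\gtrsim d_cd_\mathrm{LR}$ the first term dominates, so $\mathbb{E}\sup|Z_T|\lesssim\kappa\sigma\sqrt{d_cd_\mathrm{LR}/T}$; a Talagrand/Adamczak-type concentration inequality for the supremum of this average of sub-exponential terms then upgrades this to $\sup_{\bm{\Delta}\in\bm{\Xi}_\mathrm{LR}^\prime}|Z_T(\bm{\Delta})|\lesssim\kappa\sigma\sqrt{d_cd_\mathrm{LR}/T}$ with probability at least $1-\exp\{-cd_cd_\mathrm{LR}\}$, since at this deviation level (again because $T\gtrsim d_cd_\mathrm{LR}$) the sub-Gaussian part of the tail still governs. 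One could alternatively replace the chaining by a single $\tfrac14$-net plus a union bound, spending only an exponent of order $d_cd_\mathrm{LR}$.

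The main obstacle is the complexity bookkeeping for the non-standard parameter set. The ball $\bm{\Xi}_\mathrm{LR}^\prime$ is non-convex and parametrized non-orthogonally through Khatri--Rao factors and a core, so its covering number, and hence $\gamma_2$, carries the extra logarithmic factor $d_c$ absent in the Tucker case, and one must track the constants through Lemma~\ref{lemma:covering}(c) to obtain exactly $d_c=\log((m\vee n)(R_y\wedge R_x)R_y^{1/2}R_x^{1/2})$. In the $\epsilon$-net alternative there is a further subtlety: the discretization remainder $Z_T(\bm{\Delta}-\bar{\bm{\Delta}})$ cannot be absorbed via the crude bound $\norm{\bm{\Delta}-\bar{\bm{\Delta}}}_\op\,\lVert\frac1T\sum_t\bm{e}_t\bm{x}_t^\top\rVert_\op$, because the unstructured cross-moment operator norm scales like $\sqrt{Q/T}$, far larger than $\sqrt{d_\mathrm{LR}/T}$; it must instead be handled through the CP structure of the residual, which is exactly what the chaining route does automatically.
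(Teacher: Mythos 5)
Your proposal follows essentially the same route as the paper's proof: conditional sub-Gaussianity of $\langle\bm{e}_t,\bm{\Delta}\bm{x}_t\rangle$ yielding a two-scale Bernstein tail for increments in the Frobenius metric, a generic-chaining tail bound (the paper invokes Dirksen's theorem via its Lemma~\ref{lemma:chaining}, which packages your ``expectation bound plus concentration'' step into one), $\gamma_2\lesssim\omega$ and $\gamma_1\lesssim\omega^2$ with $\omega(\bm{\Xi}_\mathrm{LR}^\prime)\lesssim\sqrt{d_cd_\mathrm{LR}}$ from Dudley and Lemma~\ref{lemma:covering}(c), and the condition $T\gtrsim d_cd_\mathrm{LR}$ to let the sub-Gaussian scale dominate. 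The argument is correct and matches the paper's in all essential respects, including your observation that the chaining route is what absorbs the structured discretization remainder.
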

\begin{proof}
	For a fixed matrix $\bm{M} \in \mathbb{R}^{Q \times \prod_{j=1}^{m}p_j}$, one can check that, conditional on $\bm{x}_{t}$, $\langle\bm{e}_t, \bm{M}\bm{x}_t  \rangle$ is a sub-Gaussian random variable with parameter $\kappa^2\norm{\bm{M}\bm{x}_t}_2^2 $.
	Then 
	\begin{align}
		\mathbb{E} \left[ \exp \left(\lambda \sum_{t=1}^{T}\langle \bm{e}_t,\bm{M}\bm{x}_{t}\rangle \right) \right] 
		=& \prod_{t=1}^{T} \mathbb{E} \left[ \exp \left(\lambda \langle \bm{e}_t,\bm{M}\bm{x}_{t}\rangle \right) \right]  
		\label{eq:reg-dev-iid}\\
		=& \prod_{t=1}^{T}
		\mathbb{E}_x \left\{	\mathbb{E}_e \left[ 
		\exp \left(\lambda \langle \bm{e}_t,\bm{M}\bm{x}_{t}\rangle \right) | \bm{x}_t
		\right]\right\}  \nonumber \\
		\leq & 
		\prod_{t=1}^{T}
		\mathbb{E} \left[
		\exp \left( C_1 \lambda^2 \kappa^2 \norm{\bm{M}\bm{x}_t}_2^2 \right)
		\right]  \nonumber \\
		\leq & 
		\exp \left( C_2 \lambda^2 T \kappa^2 \sigma^2 \norm{\bm{M}}_\Fr^2 \right) \label{eq:reg-dev-bound}
	\end{align}
	holds for any $0 < \lambda \leq c / \left(\kappa \sigma \norm{\bm{M}}_\op\right)$,
	where the equality in \eqref{eq:reg-dev-iid} is due to the independence in Assumptions \ref{assump:reg-input} and \ref{assump:reg-error}, and the inequality in \eqref{eq:reg-dev-bound} is a direct consequence of Lemma \ref{lemma:exercise}. 
	For any $\bm{\Delta}_k, \bm{\Delta}_l \in \bm{\Xi}_\mathrm{LR}^\prime$, we thus have \begin{align*}
		& \mathbb{P} \left\{
		\frac{1}{T}\sum_{t=1}^T\langle\bm{e}_t,(\bm{\Delta}_k - \bm{\Delta}_l) \bm{x}_{t}\rangle \geq u 
		\right\} \\ \leq & 
		\frac{\mathbb{E} \left[ \exp \left(\lambda \sum_{t=1}^{T}\langle \bm{e}_t,(\bm{\Delta}_k - \bm{\Delta}_l)\bm{x}_{t} / T\rangle \right) \right] }{\exp(\lambda u)} \\
		\leq & \exp \left(C_2 \lambda^2 \kappa^2 \sigma^2 \norm{\bm{\Delta}_k - \bm{\Delta}_l}_\Fr^2 / T - \lambda u\right) \\
		\leq & \exp \left\{
		- \min \left(
		\frac{u^2}{C_3^2\kappa^2 \sigma^2\norm{\bm{\Delta}_k - \bm{\Delta}_l}_\Fr^2 / T },
		\frac{u}{C_3 \kappa \sigma \norm{\bm{\Delta}_k - \bm{\Delta}_l}_\Fr / T} 
		\right)
		\right\},
	\end{align*}
	where the second inequality uses \eqref{eq:reg-dev-bound} by setting $\bm{M} = (\bm{\Delta}_k - \bm{\Delta}_l) / T$.
	The other direction of the tail bound can be easily verified with the same approach.
	
	By Lemma \ref{lemma:chaining}, we have 
	\begin{equation}
		\begin{split}
			\mathbb{P}\left\{
			\sup_{\bm{\Delta}\in\bm{\Xi}_\mathrm{LR}^\prime}\frac{1}{T}\sum_{t=1}^T\langle\bm{e}_t,\bm{\Delta}\bm{x}_{t}\rangle \geq 
			C_4 \left(
			\gamma_2(\bm{\Xi}_\mathrm{LR}^\prime, d_2) + \gamma_1(\bm{\Xi}_\mathrm{LR}^\prime, d_1) + u \cdot \mathrm{Diam}_2(\bm{\Xi}_\mathrm{LR}^\prime) 
			+ u^2 \cdot \mathrm{Diam}_1(\bm{\Xi}_\mathrm{LR}^\prime)
			\right) 
			\right\} \\
			\leq \exp(-u^2),
		\end{split}
		\label{eq:reg-dev-prob}
	\end{equation}
	where $d_1(\bm{\Delta}_k, \bm{\Delta}_l) = C_3 \kappa \sigma \norm{\bm{\Delta}_k - \bm{\Delta}_l}_\Fr / T$, and $d_2(\bm{\Delta}_k, \bm{\Delta}_l)=C_3 \kappa \sigma \norm{\bm{\Delta}_k - \bm{\Delta}_l}_\Fr / \sqrt{T}$.
	Since by Theorem 8.6.1 of \cite{vershynin2019high} and Eq.(46) and Lemma 2.7 of \cite{melnyk2016estimating}, 
	\begin{align*}
		\gamma_2(\bm{\Xi}_\mathrm{LR}^\prime, d_2) \leq \left(C_3  \kappa \sigma / \sqrt{T}\right) \gamma_2(\bm{\Xi}_\mathrm{LR}^\prime, 	\norm{\cdot}_\Fr) \leq \left(C_5  \kappa \sigma / \sqrt{T}\right) \omega(\bm{\Xi}_\mathrm{LR}^\prime), \\
		\gamma_1(\bm{\Xi}_\mathrm{LR}^\prime, d_1) \leq \left(C_3  \kappa \sigma / T\right) \gamma_1(\bm{\Xi}_\mathrm{LR}^\prime, \norm{\cdot}_\Fr) \leq \left(C_6  \kappa \sigma / T \right) \omega^2(\bm{\Xi}_\mathrm{LR}^\prime), \\
		\mathrm{Diam}_2(\bm{\Xi}_\mathrm{LR}^\prime) =  2 C_3  \kappa \sigma / \sqrt{T}, \quad 
		\mathrm{Diam}_1(\bm{\Xi}_\mathrm{LR}^\prime) = 2C_3  \kappa \sigma / T.
	\end{align*}
	Combined with \eqref{eq:reg-dev-prob}, we have when $T \gtrsim \left( \omega(\bm{\Xi}_\mathrm{LR}^\prime) + u\right)^2 $,
	\begin{align}
		\mathbb{P}\left\{
		\sup_{\bm{\Delta}\in\bm{\Xi}_\mathrm{LR}^\prime}\frac{1}{T}\sum_{t=1}^T\langle\bm{e}_t,\bm{\Delta}\bm{x}_{t}\rangle \geq
		C_7\kappa \sigma \frac{\omega(\bm{\Xi}_\mathrm{LR}^\prime) + u}{\sqrt{T}}
		\right\} \leq \exp(-u^2).
		\label{eq:reg-dev-prob1}
	\end{align}
	
	To bound $\omega(\bm{\Xi}_\mathrm{LR}^\prime)$, we use
	the Dudley's inequality (see e.g., Theorem 8.1.10 of \cite{vershynin2019high}) and Lemma \ref{lemma:covering}(c), which leads to
	\begin{align*}
		& \omega(\bm{\Xi}_\mathrm{LR}^\prime) \\
		\leq &  C_8\int_{0}^{2} \sqrt{\log \	\mathcal{N}(\bm{\Xi}_\mathrm{LR}', \norm{\cdot}_\Fr, \epsilon)}
		\mathrm{d}\epsilon \\
		\leq & C_8 \int_{0}^{2}\left\{\left( 4R_yR_x + 2R_y\sum_{i=1}^{n}q_i + 2R_x\sum_{j=1}^{m}p_j \right)
		\log \left(C_9 \frac{(m \vee n )(R_y \wedge R_x)\sqrt{R_yR_x}}{\epsilon} \right)
		\right\}^{1/2} \mathrm{d} \epsilon \\
		\leq &  C_{10} \left(R_yR_x + R_y\sum_{i=1}^{n}q_i + R_x\sum_{j=1}^{m}p_j\right)^{1/2} 
		\left[ \log \left((m \vee n )(R_y \wedge R_x)\sqrt{R_yR_x} \right) \right]^{1/2}.
	\end{align*}
	Taking $u = \omega(\bm{\Xi}_\mathrm{LR}^\prime)$, $d_\mathrm{LR} = R_yR_x + R_y\sum_{i=1}^{n}q_i + R_x\sum_{j=1}^{m}p_j$ and $d_c = \log ((m \vee n )(R_y \wedge R_x)R_y^{1/2}R_x^{1/2})$, we can infer from \eqref{eq:reg-dev-prob1} that
	\begin{align*}
		\mathbb{P}\left\{
		\sup_{\bm{\Delta}\in\bm{\Xi}_\mathrm{LR}^\prime}\frac{1}{T}\sum_{t=1}^T\langle\bm{e}_t,\bm{\Delta}\bm{x}_{t}\rangle \leq
		C\kappa \sigma \sqrt{d_cd_{\mathrm{LR}} /T}	\right\} \geq 1 - \exp(-cd_cd_{\mathrm{LR}}).
	\end{align*}
\end{proof}

\begin{lemma}[\textbf{Autoregression RSC}]\label{lemma:AR-RSC} 
	Suppose that Assumptions \ref{assump:core}, \ref{assump:stationarity}, \ref{assump:errorauto} hold, and $T-P\gtrsim \kappa^4\kappa_{U,B}^2d_{\mathrm{AR}}d_c /\kappa_{L,A}^2$, we have	
	\[
	\frac{1}{4}\sqrt{\kappa_{L, A}}
	\|\bm{\Delta}\|_{\mathrm{F}}\leq\|\bm{\Delta}\|_T\leq2\sqrt{\kappa_{U, A}}\|\bm{\Delta}\|_{\mathrm{F}},
	\]
	for all {$\bm{\Delta}\in\bm{\Xi}$} with probability at least $1-2\exp\left\{-c d_{\mathrm{AR}}d_c\right\}$, where $d_{\mathrm{AR}} = PR_yR_x + (R_y+R_x)\sum_{i=1}^{n}q_i$, {$d_c = \log (nP^{1/2}R_yR_x^{1/2})$}, $\kappa_{U, A}=C_{e}/\mu_{\mathrm{min}}(\cm{A}^{\ast})$, $\kappa_{L, A}=c_{e}/\mu_{\mathrm{max}}(\cm{A}^\ast)$, $\kappa_{U, B}=C_{e}/\mu_{\mathrm{min}}(\bm{B}^{\ast})$ and $c$ is a positive constant.
\end{lemma}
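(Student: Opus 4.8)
The plan is to adapt the restricted-strong-convexity scheme for stable vector autoregressions of \citet{basu2015regularized} to the CP-structured restricted set $\bm{\Xi}$. Since $\|\cdot\|_T$ and $\|\cdot\|_{\mathrm{F}}$ are both positively homogeneous, and any $\bm{\Delta}\in\bm{\Xi}$ with $\|\bm{\Delta}\|_{\mathrm{F}}>1$ can be rescaled into $\bm{\Xi}'$ (dividing the underlying core matrices $\bm{G}_1,\bm{G}_2$ by a factor larger than one only tightens the operator-norm bound of Assumption \ref{assump:core}), it suffices to prove both inequalities uniformly over the normalized set $\bm{\Xi}'=\bm{\Xi}'(\underline{q},P,R_y,R_x)$ and then extend by homogeneity. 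Recall $\bm{x}_t=\vectorize(\cm{X}_t)$ stacks the $P$ lagged observations, so $\{\bm{x}_t\}$ is the stationary solution of the VAR(1) lift in \eqref{eq:ar1-B} with sub-Gaussian innovations $\bm{e}_t=\bm{\Sigma}_e^{1/2}\bm{\xi}_t$, and $\|\bm{\Delta}\|_T^2=(T-P)^{-1}\sum_{t=P+1}^{T}\|\bm{\Delta}\bm{x}_t\|_2^2$. Writing $\bm{\Sigma}_x=\mathrm{var}(\bm{x}_t)$, the spectral-density bounds for stable VARs yield $\mathbb{E}\|\bm{\Delta}\bm{x}_t\|_2^2=\trace(\bm{\Delta}\bm{\Sigma}_x\bm{\Delta}^\top)$ together with $\kappa_{L,A}\|\bm{\Delta}\|_{\mathrm{F}}^2\leq\mathbb{E}\|\bm{\Delta}\bm{x}_t\|_2^2\leq\kappa_{U,A}\|\bm{\Delta}\|_{\mathrm{F}}^2$, the lower bound coming from $\lambda_{\min}(\bm{\Sigma}_x)\geq c_e/\mu_{\max}(\cm{A}^\ast)$ and the upper one from $\lambda_{\max}(\bm{\Sigma}_x)\leq C_e/\mu_{\min}(\cm{A}^\ast)$.

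It then remains to control $\sup_{\bm{\Delta}\in\bm{\Xi}'}\bigl|\|\bm{\Delta}\|_T^2-\mathbb{E}\|\bm{\Delta}\|_T^2\bigr|$. First I would obtain a pointwise deviation bound: for a fixed $\bm{\Delta}$ with $\|\bm{\Delta}\|_{\mathrm{F}}=1$, $\|\bm{\Delta}\|_T^2$ is a quadratic form in the dependent, jointly sub-Gaussian sequence $\{\bm{x}_t\}$, and a Hanson--Wright-type inequality for stable sub-Gaussian processes \citep{basu2015regularized,zheng2020finite} gives
\[
\mathbb{P}\Bigl\{\bigl|\|\bm{\Delta}\|_T^2-\mathbb{E}\|\bm{\Delta}\|_T^2\bigr|>\eta\Bigr\}\leq 2\exp\Bigl\{-c(T-P)\min\Bigl(\tfrac{\eta^2}{\kappa^4\kappa_{U,B}^2},\ \tfrac{\eta}{\kappa^2\kappa_{U,B}}\Bigr)\Bigr\},
\]
where the scale $\kappa^2\kappa_{U,B}=\kappa^2 C_e/\mu_{\min}(\bm{B}^\ast)$ is the relevant operator-norm proxy of the process, the $\kappa^2$ arising from the sub-Gaussian constant of $\bm{\xi}_t$ and $C_e/\mu_{\min}(\bm{B}^\ast)$ bounding the spectral density of the AR(1) lift.

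Next I would promote this to a uniform statement by a discretization/chaining argument over $\bm{\Xi}'$. Lemma \ref{lemma:covering}(b) gives $\log\mathcal{N}(\bm{\Xi}',\|\cdot\|_{\mathrm{F}},\epsilon)\lesssim d_{\mathrm{AR}}(d_c+\log(1/\epsilon))$ with $d_{\mathrm{AR}}=PR_yR_x+(R_y+R_x)\sum_{i=1}^{n}q_i$ and $d_c=\log(nP^{1/2}R_yR_x^{1/2})$. Combining the pointwise tail with this metric entropy (via Dudley's inequality, Theorem 8.1.10 of \citet{vershynin2019high}, or a single $\epsilon$-net together with a Lipschitz bound that reuses the already-established upper inequality) gives $\sup_{\bm{\Delta}\in\bm{\Xi}'}\bigl|\|\bm{\Delta}\|_T^2-\mathbb{E}\|\bm{\Delta}\|_T^2\bigr|\lesssim\kappa^2\kappa_{U,B}\sqrt{d_{\mathrm{AR}}d_c/(T-P)}$ with probability at least $1-2\exp\{-c\,d_{\mathrm{AR}}d_c\}$. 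Under the stated condition $T-P\gtrsim\kappa^4\kappa_{U,B}^2 d_{\mathrm{AR}}d_c/\kappa_{L,A}^2$ this fluctuation is at most $\tfrac{15}{16}\kappa_{L,A}\|\bm{\Delta}\|_{\mathrm{F}}^2$ uniformly, and combining with the population bounds (and $\kappa_{L,A}\leq\kappa_{U,A}$) yields $\tfrac{1}{16}\kappa_{L,A}\|\bm{\Delta}\|_{\mathrm{F}}^2\leq\|\bm{\Delta}\|_T^2\leq 4\kappa_{U,A}\|\bm{\Delta}\|_{\mathrm{F}}^2$ on the same event; taking square roots and extending to all of $\bm{\Xi}$ by homogeneity is exactly the claim.

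The hard part will be the pointwise step: establishing a Hanson--Wright deviation inequality with the correct spectral-density constants for the \emph{dependent} process $\{\bm{x}_t\}$, and then carrying $\mu_{\min}(\bm{B}^\ast)$, $\mu_{\min}(\cm{A}^\ast)$, $\mu_{\max}(\cm{A}^\ast)$, $C_e$ and $c_e$ faithfully through the chaining over the non-orthogonal, CP-structured set $\bm{\Xi}'$, whose covering number — unlike in the Tucker/HOSVD case — carries the extra logarithmic factor $d_c$. The interplay between $\mu_{\min}(\bm{B}^\ast)$ (which governs the deviation scale) and $\mu_{\min}(\cm{A}^\ast),\mu_{\max}(\cm{A}^\ast)$ (which govern the population curvature), noted already after \eqref{eq:ar1-B}, is the principal source of bookkeeping.
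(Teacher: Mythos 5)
Your plan is correct and follows essentially the same route as the paper's proof: lift the AR($P$) process to a VAR(1)/VMA($\infty$) representation so that $\|\bm{\Delta}\|_T^2$ becomes a quadratic form in the i.i.d.\ sub-Gaussian innovations, apply a Hanson--Wright bound with the spectral constants $C_e/\mu_{\min}(\bm{B}^\ast)$, uniformize over $\bm{\Xi}'$ using the covering number of Lemma \ref{lemma:covering}(b) (so that $\omega(\bm{\Xi}')\lesssim\sqrt{d_{\mathrm{AR}}d_c}$), and sandwich $\mathbb{E}\|\bm{\Delta}\|_T^2$ between $\kappa_{L,A}$ and $\kappa_{U,A}$ via the spectral-density bounds of \cite{basu2015regularized}. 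The only (immaterial) difference is in the uniformization device: the paper establishes a two-point mixed-tail increment bound and invokes the generic-chaining result of Lemma \ref{lemma:chaining} with the $\gamma_1,\gamma_2$ functionals, whereas you propose a pointwise bound plus a net/Lipschitz argument; your second option is sound and yields the same entropy-driven rate.
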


\begin{proof}
	Recall that $\norm{\bm{\Delta}}_{T}^2 := \sum_{t=P+1}^{T}\norm{\bm{\Delta}\bm{x}_{t}}_2^2/{T_0}$ is the empirical norm.
	It is sufficient to show that $\sup_{\bm{\Delta}\in \bm{\Xi}'}\norm{\bm{\Delta}}_{T} \leq2\sqrt{\kappa_{U, A}}$ and $\inf_{\bm{\Delta}\in \bm{\Xi}'}\norm{\bm{\Delta}}_{T} \geq\sqrt{\kappa_{L, A}} / 4$ hold with high probability. 
	
	Recall the VAR($P$) representation of the autoregressive model as $\bm{y}_t = \bm{A} \bm{x}_t + \bm{e}_t$, where we denote $\bm{y}_t = \vectorize(\cm{Y}_t)$, $\bm{A} = [\cm{A}]_n$, and $\bm{x}_t = \vectorize(\cm{X}_t)$ for convenience. Note that $\bm{A} = (\bm{A}_1, \dots, \bm{A}_P)$ where $\bm{A}_k = [\cm{A}_k]_n \in \mathbb{R}^{Q\times Q}$ and $Q=\prod_{i=1}^n q_i$. The VAR($P$) representation can be further written into an VAR(1) form, $\bm{x}_t = \bm{B} \bm{x}_{t-1}+\bm{\zeta}_t$, and hence the VMA($\infty$) representation of $\bm{x}_t = \sum_{j=0}^{\infty}\bm{B}^{j}\bm{\zeta}_{t-j}$ or $\bm{z}=\bm{P}\bm{e}$, where
	\[
	\bm{B}=
	\begin{bmatrix}
		\bm{A}_1 & \bm{A}_2 & \cdots &  \bm{A}_{P-1} &\bm{A}_P \\
		\bm{I}_Q &\bm{O} & \cdots & \bm{O} & \bm{O}\\
		\bm{O} & \bm{I}_Q & \cdots & \bm{O} & \bm{O} \\
		\vdots & \vdots & \ddots & \vdots & \vdots \\
		\bm{O} & \bm{O} & \cdots & \bm{I}_Q & \bm{O} 
	\end{bmatrix},
	\quad
	\bm{P}=
	\begin{bmatrix}
		\bm{I}_{PQ} & \bm{B} & \bm{B}^2 & \cdots & \bm{B}^{{T_0}-1} & \cdots\\
		\bm{O} & \bm{I}_{PQ} & \bm{B} & \cdots & \bm{B}^{{T_0}-2} &\cdots  \\
		\bm{O} & \bm{O} & \bm{I}_{PQ} & \cdots & \bm{B}^{{T_0}-3} & \cdots\\
		\vdots & \vdots & \vdots & \ddots & \vdots &\cdots\\
		\bm{O} & \bm{O} & \bm{O} & \cdots & \bm{I}_{PQ} & \cdots
	\end{bmatrix},
	\]
	$\bm{\zeta}_t = (\bm{e}_{t-1}^\top, \bm{0}, \dots, \bm{0})^\top \in \mathbb{R}^{PQ}$, $\bm{e} = (\bm{\zeta}_{{T}}^\top, \bm{\zeta}_{{T}-1}^\top, \dots, \bm{\zeta}_{P+1}^\top)^\top \in \mathbb{R}^{PQ{T_0}}$ and $\bm{z} = (\bm{x}_{{T}}^\top, \bm{x}_{{T}-1}^\top, \dots, $ $\bm{x}_{P+1}^\top)^\top \in \mathbb{R}^{PQ{T_0}}$. Moreover, by Assumption \ref{assump:errorauto}, $\bm{e}=\bm{D}\bar{\bm{\xi}}$, where $\bm{D}=\bm{I}_{P{T_0}}\otimes\bm{\Sigma}_e^{1/2}$, $\bar{\bm{\xi}}_t = (\bm{\xi}_t^\top, \bm{0}, \dots, \bm{0})^\top \in \mathbb{R}^{PQ}$, $\bar{\bm{\xi}} = (\bar{\bm{\xi}}_{{T}-1}^\top,  \bar{\bm{\xi}}_{{T}-2}^\top, \dots, \bar{\bm{\xi}}_{P}^\top)^\top \in \mathbb{R}^{PQ{T_0}}$.
	
	To use the conclusion of Lemma \ref{lemma:chaining}, we need to verify that the mixed tail condition is satisfied with $\bm{W}(\bm{\Delta}) = \norm{\bm{\Delta}}_T^2 -\mathbb{E}\norm{\bm{\Delta}}_T^2$. For any $\bm{\Delta}_i, \bm{\Delta}_j\in \bm{\Xi}'$, we have 
	\[\left|\left(\norm{\bm{\Delta}_i}_T^2 -\mathbb{E}\norm{\bm{\Delta}_i}_T^2\right) - \left(\norm{\bm{\Delta}_j}_T^2 -\mathbb{E}\norm{\bm{\Delta}_j}_T^2\right)\right| = \left|\left(\norm{\bm{\Delta}_i}_T^2 - \norm{\bm{\Delta}_j}_T^2\right) -\mathbb{E}\left(\norm{\bm{\Delta}_i}_T^2 - \norm{\bm{\Delta}_j}_T^2\right)\right|.\]
	We can rewrite $\norm{\bm{\Delta}_i}_T^2 - \norm{\bm{\Delta}_j}_T^2$ as 
	\begin{equation}\label{lemma4:def}
		\begin{split}
			\norm{\bm{\Delta}_i}_T^2 - \norm{\bm{\Delta}_j}_T^2
			&=\frac{1}{T_0}\sum_{t=P+1}^T\norm{\bm{\Delta}_i\bm{x}_{t}}_2^2 - \frac{1}{T_0}\sum_{t=P+1}^T\norm{\bm{\Delta}_j\bm{x}_{t}}_2^2 =\frac{1}{T_0}\bm{z}^\top(\bm{I}_{T_0}\otimes\bm{M})\bm{z} \\
			&=\frac{1}{T_0}\bm{e}^\top\bm{P}^\top(\bm{I}_{T_0}\otimes\bm{M})\bm{P}\bm{e}
			=\frac{1}{T_0}\bar{\bm{\xi}}^\top\bm{D}^\top\bm{P}^\top(\bm{I}_{T_0}\otimes\bm{M})\bm{P}\bm{D}\bar{\bm{\xi}}\\
			&=\bar{\bm{\xi}}^\top(\frac{1}{T_0}\bm{D}^\top\bm{P}^\top(\bm{I}_{T_0}\otimes\bm{M})\bm{P}\bm{D})\bar{\bm{\xi}}=\bar{\bm{\xi}}^\top\bm{Q}\bar{\bm{\xi}},
		\end{split}
	\end{equation}
	where $\bm{M} = \bm{\Delta}_i^\top \bm{\Delta}_i - \bm{\Delta}_j^\top \bm{\Delta}_j$ and $\bm{Q} = \bm{D}^\top\bm{P}^\top(\bm{I}_{T_0} \otimes \bm{M}) \bm{P}\bm{D} / T_0$.
	Note that
	\begin{equation*}
		\begin{split}
			\norm{\bm{M}}_{\mathrm{F}}
			&=\norm{\bm{\Delta}_i^\top \bm{\Delta}_i - \bm{\Delta}_j^\top \bm{\Delta}_j}_{\mathrm{F}} =\norm{\bm{\Delta}_i^\top \bm{\Delta}_i - \bm{\Delta}_j^\top \bm{\Delta}_i + \bm{\Delta}_j^\top \bm{\Delta}_i - \bm{\Delta}_j^\top \bm{\Delta}_j}_{\mathrm{F}} \\
			&=\norm{(\bm{\Delta}_i  - \bm{\Delta}_j)^\top \bm{\Delta}_i + \bm{\Delta}_j^\top (\bm{\Delta}_i - \bm{\Delta}_j)}_{\mathrm{F}}
			\leq 2\norm{\bm{\Delta}_i  - \bm{\Delta}_j}_{\mathrm{F}}.
		\end{split}
	\end{equation*}
	We would provide upper bound for $\norm{\bm{Q}}_{\op}$ and $\norm{\bm{Q}}_{\mathrm{F}}^2$.
	\begin{equation}\label{eq:ar-rsc-op}
		\begin{split}
			\norm{\bm{Q}}_{\op}
			&=\norm{\frac{1}{{T_0}}\bm{D}^\top\bm{P}^\top(\bm{I}_{T_0} \otimes \bm{M}) \bm{P}\bm{D}}_{\op}
			\leq\frac{1}{{T_0}}\norm{\bm{D}}_{\op}^2\norm{\bm{P}}_{\op}^2\norm{\bm{I}_{T_0}\otimes\bm{M}}_{\op}\\ 
			&\leq\frac{1}{{T_0}}C_{e}\lambda_{\mathrm{max}}(\bm{P}\bm{P}^\top)\norm{\bm{I}_{T_0}}_{\op}\norm{\bm{M}}_{\mathrm{F}}\leq\frac{2}{{T_0}}C_{e}\lambda_{\mathrm{max}}(\bm{P}\bm{P}^\top)\norm{\bm{\Delta}_i  - \bm{\Delta}_j}_{\mathrm{F}}.\\
			\norm{\bm{Q}}_{\mathrm{F}}^2& = \norm{\frac{1}{{T_0}}\bm{D}^\top\bm{P}^\top(\bm{I}_{T_0} \otimes \bm{M}) \bm{P}\bm{D}}_{\mathrm{F}}^2 \leq \frac{1}{{T_0}^2} \norm{\bm{D}}_{\op}^4\norm{\bm{P}}_{\op}^4 \norm{\bm{I}_{T_0}\otimes\bm{M}}_{\mathrm{F}}^2\\
			&= \frac{1}{{T_0}^2} C_{e}^2\lambda_{\mathrm{max}}(\bm{P}\bm{P}^\top)^2
			\norm{\bm{I}_{T_0}}_{\mathrm{F}}^2  \norm{\bm{M}}_{\mathrm{F}}^2 \leq \frac{4}{{T_0}}C_{e}^2\lambda_{\mathrm{max}}(\bm{P}\bm{P}^\top)^2 \norm{\bm{\Delta}_i  - \bm{\Delta}_j}_{\mathrm{F}}^2.
		\end{split}
	\end{equation}
	
	Next we derive a concentration inequality for $\left|\left(\norm{\bm{\Delta}_i}_T^2 - \norm{\bm{\Delta}_j}_T^2\right) -\mathbb{E}\left(\norm{\bm{\Delta}_i}_T^2 - \norm{\bm{\Delta}_j}_T^2\right)\right|$.	By the Hanson-Wright inequality in Theorem 6.2.1 of \cite{vershynin2019high}, together with \eqref{lemma4:def} and \eqref{eq:ar-rsc-op}, we have for any $u>0$,
	\begin{align}
		&\hspace{15pt}\mathbb{P}\left\{\left|\left(\norm{\bm{\Delta}_i}_T^2 - \norm{\bm{\Delta}_j}_T^2\right) -\mathbb{E}\left(\norm{\bm{\Delta}_i}_T^2 - \norm{\bm{\Delta}_j}_T^2\right)\right|\geq u\right\} \nonumber\\
		&= \mathbb{P}\left\{\left|\bar{\bm{\xi}}^\top\bm{Q}\bar{\bm{\xi}} -\mathbb{E}\left(\bar{\bm{\xi}}^\top\bm{Q}\bar{\bm{\xi}}\right)\right|\geq u\right\} \nonumber\\
		&= \mathbb{P}\left\{\left|\widetilde{\bm{\xi}}^\top\widetilde{\bm{Q}}\widetilde{\bm{\xi}} -\mathbb{E}\left(\widetilde{\bm{\xi}}^\top\widetilde{\bm{Q}}\widetilde{\bm{\xi}}\right)\right|\geq u\right\} \nonumber\\
		&\leq 2\exp\left\{-c\min\left(\frac{u}{\kappa^2\norm{\widetilde{\bm{Q}}}}_{\mathrm{op}},\frac{u^2}{\kappa^4\norm{\widetilde{\bm{Q}}}_{\mathrm{F}}^2}\right)\right\}  \nonumber\\
		&\leq 2\exp\left\{-c\min\left(\frac{u}{2\kappa^2\kappa_{U, B}\norm{\bm{\Delta}_i-\bm{\Delta}_j}_{\mathrm{F}}/{T_0}},\frac{u^2}{\left(2\kappa^2\kappa_{U, B}\norm{\bm{\Delta}_i-\bm{\Delta}_j}_{\mathrm{F}}/\sqrt{{T_0}}\right)^2}\right)\right\}, \label{eq:ar-rsc-hansonwright}
	\end{align}
	where $\widetilde{\bm{\xi}}$ is the non-zero part of $\bar{\bm{\xi}}$ with independent sub-gaussian entries, $\widetilde{\bm{Q}}$ is the corresponding submatrix of $\bm{Q}$, and $\kappa_{U,B}=C_{e}\lambda_{\mathrm{max}}(\bm{P}\bm{P}^\top)$. Note that the last inequality uses the fact that $\norm{\widetilde{\bm{Q}}}_{\mathrm{op}} \lesssim \norm{\bm{Q}}_\op$ and $\norm{\widetilde{\bm{Q}}}_{\Fr} \leq \norm{\bm{Q}}_\Fr$.
	
	This concentration inequality satisfies the condition of Lemma \ref{lemma:chaining} with $d_1(\bm{\Delta}_i,\bm{\Delta}_j) = 2c^\prime\kappa^2\kappa_{U, B}$ $\norm{\bm{\Delta}_i-\bm{\Delta}_j}_{\mathrm{F}}/{T_0}$ and $d_2(\bm{\Delta}_i, \bm{\Delta}_j) = 2c^{\prime\prime}\kappa^2\kappa_{U, B}\norm{\bm{\Delta}_i-\bm{\Delta}_j}_{\mathrm{F}}/\sqrt{{T_0}}$. We have the following concentration bound for the empirical norm of $\bm{\Delta}$ by directly applying Lemma \ref{lemma:chaining},
	\begin{equation}\label{AR-concentration}
		\begin{split}
			\mathbb{P}\left\{\sup_{\bm{\Delta}\in\bm{\Xi}'}\left|\norm{\bm{\Delta}}_T^2 -\mathbb{E}\norm{\bm{\Delta}}_T^2\right| > C\left(\gamma_2(\bm{\Xi}',d_2) + \gamma_1(\bm{\Xi}',d_1) + u\mathrm{Diam}_2({\bm{\Xi}'}) + u^2\mathrm{Diam}_1(\bm{\Xi}')\right)\right\} \\ \leq 2\exp(-u^2).
		\end{split}
	\end{equation}
	Note that $\omega(\bm{\Xi}')$ is the gaussian-width of $\bm{\Xi}'$. By Theorem 8.6.1 of \cite{vershynin2019high} and Eq.(46) and Lemma 2.7 of \cite{melnyk2016estimating}, we have
	\begin{equation*}\label{RSC-gamma-fcn}
		\begin{split}
			\gamma_2(\bm{\Xi}',d_2)
			&=\left(2c^{\prime\prime}\kappa^2\kappa_{U,B}/\sqrt{{T_0}}\right)\cdot\gamma_2(\bm{\Xi}',\norm{\cdot}_{\mathrm{F}})
			\leq\left(c_2\kappa^2\kappa_{U,B}/\sqrt{{T_0}}\right)\cdot\omega(\bm{\Xi}'),\\
			\gamma_1(\bm{\Xi}',d_1)
			&=\left(2c^\prime\kappa^2\kappa_{U,B}/{T_0}\right)\cdot\gamma_1(\bm{\Xi}',\norm{\cdot}_{\mathrm{F}}) \leq\left(c_1\kappa^2\kappa_{U,B}/{T_0}\right)\cdot\omega^2(\bm{\Xi}'),\\
			\mathrm{Diam}_2({\bm{\Xi}'})
			&= 4c^{\prime\prime}\kappa^2\kappa_{U, B}/\sqrt{{T_0}},\quad
			\mathrm{Diam}_1(\bm{\Xi}') = 4c^\prime\kappa^2\kappa_{U, B}/{T_0}.
		\end{split}
	\end{equation*}
	Thus we can rewrite \eqref{AR-concentration} as
	\begin{equation*}
		\mathbb{P}\left\{\sup_{\bm{\Delta}\in\bm{\Xi}'}\left|\norm{\bm{\Delta}}_T^2 -\mathbb{E}\norm{\bm{\Delta}}_T^2\right| > c_2^\prime\kappa^2\kappa_{U,B}\frac{u+\omega(\bm{\Xi}')}{\sqrt{{T_0}}} + c_1^\prime\kappa^2\kappa_{U,B}\frac{u^2+\omega^2(\bm{\Xi}')}{{T_0}} \right\}\notag \leq 2\exp(-u^2).
	\end{equation*}
	When ${T_0}\gtrsim \left(u+\omega(\bm{\Xi}')\right)^2$, the inequality can be reduced to
	\begin{equation}\label{AR-RSC-concentration}
		\mathbb{P}\left\{\sup_{\bm{\Delta}\in\bm{\Xi}'}\left|\norm{\bm{\Delta}}_T^2 -\mathbb{E}\norm{\bm{\Delta}}_T^2\right| > c_3\kappa^2\kappa_{U,B}\frac{u+\omega(\bm{\Xi}')}{\sqrt{{T_0}}}\right\} \leq 2\exp(-u^2).
	\end{equation}
	
	We then provide an upper and lower bound for $\mathbb{E}\norm{\bm{\Delta}}_T^2$. By the spectral measure of ARMA process in \cite{basu2015regularized}, we have $\lambda_{\mathrm{min}}\{\mathbb{E}(\bm{x}_t \bm{x}_t^\top)\} \geq \lambda_{\mathrm{min}}(\bm{\Sigma}_{e}) / \mu_{\mathrm{max}}(\cm{A}^\ast)=\kappa_{L, A}$ and $\lambda_{\mathrm{max}}\{\mathbb{E}(\bm{x}_t \bm{x}_t^\top)\} \leq \lambda_{\mathrm{max}}(\bm{\Sigma}_{e}) / \mu_{\mathrm{min}}(\cm{A}^\ast)=\kappa_{U, A}$. Then
	\begin{equation}\label{AR-RSC-expectation}
		\kappa_{L, A} \leq \mathbb{E}\norm{\bm{\Delta}}_T^2 = \mathbb{E}\left(\frac{1}{T_0} \sum_{t=P+1}^{T} \mathrm{vec}(\bm{\Delta}^\top)^\top (\bm{I}_Q \otimes \bm{x}_t \bm{x}_t^\top)\mathrm{vec}(\bm{\Delta}^\top)\right) \leq \kappa_{U, A},
	\end{equation}
	since $\|\bm{\Delta}\|_{\mathrm{F}}=1$.
	
	Combining \eqref{AR-RSC-concentration} and \eqref{AR-RSC-expectation}, we get the concentration inequality for $\norm{\bm{\Delta}}_T^2$ with
	\begin{equation}\label{AR-RSC-conclusion}
		\begin{split}
			&\mathbb{P}\left\{\kappa_{L,A} - c_3\kappa^2\kappa_{U,B}\frac{u+\omega(\bm{\Xi}')}{\sqrt{T_0}} \leq\inf_{\bm{\Delta}\in\bm{\Xi}'}\norm{\bm{\Delta}}_T^2\leq\sup_{\bm{\Delta}\in\bm{\Xi}'}\norm{\bm{\Delta}}_T^2\leq \kappa_{U, A} + c_3\kappa^2\kappa_{U,B}\frac{u+\omega(\bm{\Xi}')}{\sqrt{T_0}}\right\}\\
			&\hspace{350pt}\geq 1-2\exp(-u^2).
		\end{split}
	\end{equation}
	
	By the Dudley's inequality (see e.g., Theorem 8.1.10 of \cite{vershynin2019high}) and Lemma \ref{lemma:covering}(b), we have
	\begin{align}\label{gaussian-width}
		\omega(\bm{\Xi}')
		& \leq C_1\int_{0}^{2}\sqrt{\log\mathcal{N}\left(\bm{\Xi}',\norm{\cdot}_{\mathrm{F}},\epsilon\right)}\mathrm{d}\epsilon\notag\\
		& \leq C_1\int_{0}^{2}\left(4PR_yR_x + 2(R_y+R_x)\sum_{i=1}^{n}q_i\right)^{1/2} 
		\left[\log\left(\frac{C_2n(R_y \wedge PR_x) \sqrt{PR_yR_x}}{\epsilon} \right)\right]^{1/2}\mathrm{d}\epsilon\notag\\
		& \leq C_3\left(4PR_yR_x + 2(R_y+R_x)\sum_{i=1}^{n}q_i\right)^{1/2} 
		\left[\log (nR_y\sqrt{PR_x}) \right]^{1/2} \nonumber\\ 
		& =C_4 \sqrt{d_{\mathrm{AR}}{d_c}},
	\end{align}
	where $d_{\mathrm{AR}} = PR_yR_x + (R_y+R_x)\sum_{i=1}^{n}q_i$ and {$d_c = \log (nP^{1/2}R_yR_x^{1/2})$}.
	By letting $u = \omega(\bm{\Xi}')$ and $T_0\gtrsim \kappa^4\kappa_{U,B}^2d_{\mathrm{AR}}d_c/\kappa_{L,A}^2$, \eqref{AR-RSC-conclusion} can be reduced to
	\begin{equation*}
		\mathbb{P}\left\{\frac{\kappa_{L,A}}{16}\leq\inf_{\bm{\Delta}\in\bm{\Xi}'}\norm{\bm{\Delta}}_T^2\leq\sup_{\bm{\Delta}\in\bm{\Xi}'}\norm{\bm{\Delta}}_T^2\leq 4\kappa_{U, A}\right\}\geq 1-2\exp\left\{-c d_{\mathrm{AR}}d_c\right\}.
	\end{equation*}
	By spectral measure of ARMA process in \cite{basu2015regularized}, we could replace $\lambda_{\mathrm{max}}(\bm{P}\bm{P}^\top)$ by $1/\mu_{\mathrm{min}}(\bm{B}^\ast)$ where $\bm{B}^\ast(z) \vcentcolon= \bm{I}_{PQ} - \bm{B}^\ast z$ and $\mu_{\mathrm{\min}}(\bm{B}^\ast) = \min_{|z|=1}\lambda_{\mathrm{min}}(\bar{\bm{B}^\ast}(z)$ $\bm{B}^\ast(z))$. Thus, the proof is completed.
\end{proof}

\begin{lemma}[\textbf{Autoregression deviation bound}]\label{lemma:AR-DB}
	Suppose that Assumptions \ref{assump:core}, \ref{assump:stationarity}, \ref{assump:errorauto} hold and $T-P\gtrsim d_{\mathrm{AR}}d_{\mathrm{c}}$, we have	
	\[
	\sup_{\bm{\Delta}\in \bm{\Xi}'} \frac{1}{T-P}\sum_{t=P+1}^T\langle \bm{e}_{t},\bm{\Delta}\bm{x}_t\rangle \leq C_1\kappa^2\kappa_{U,B}'\sqrt{d_{\mathrm{AR}}d_c/(T-P)} 
	\]
	with probability at least $1-\exp\left\{-C_2 d_{\mathrm{AR}}d_c\right\}$, where $\kappa_{U, B}^\prime=C_{e}/{\mu^{1/2}_{\mathrm{min}}(\bm{B}^{\ast})}$, $d_{\mathrm{AR}} = PR_yR_x + (R_y+R_x)\sum_{i=1}^{n}q_i$, {$d_c = \log (nP^{1/2}R_yR_x^{1/2})$}, and $C_1, C_2$ are some positive constants.
\end{lemma}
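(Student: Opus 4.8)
The plan is to follow the template of Lemma~\ref{lemma:reg-deviation}, but to absorb the temporal dependence by passing, exactly as in the proof of Lemma~\ref{lemma:AR-RSC}, to the VMA($\infty$) representation that rewrites the bilinear process as a quadratic form in the i.i.d.\ innovations. Write $S(\bm{\Delta}) := \frac{1}{T_0}\sum_{t=P+1}^{T}\langle\bm{e}_t,\bm{\Delta}\bm{x}_t\rangle$ with $T_0 = T-P$, and stack $\bm{w} = (\bm{e}_T^\top,\dots,\bm{e}_{P+1}^\top)^\top$ and $\bm{z} = (\bm{x}_T^\top,\dots,\bm{x}_{P+1}^\top)^\top$, so that $S(\bm{\Delta}) = \frac{1}{T_0}\bm{w}^\top(\bm{I}_{T_0}\otimes\bm{\Delta})\bm{z}$. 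Using $\bm{z} = \bm{P}\bm{D}\bar{\bm{\xi}}$ as in Lemma~\ref{lemma:AR-RSC} and $\bm{w} = \bm{D}'\bar{\bm{\xi}}$ for a selection operator $\bm{D}' = (\bm{I}_{T_0}\otimes\bm{\Sigma}_e^{1/2})\bm{\Pi}$ (with $\bm{\Pi}$ a coordinate projection, so $\|\bm{D}'\|_{\op}\le\sqrt{C_e}$), this becomes $S(\bm{\Delta}) = \bar{\bm{\xi}}^\top\bm{Q}(\bm{\Delta})\bar{\bm{\xi}}$ with $\bm{Q}(\bm{\Delta}) = \frac{1}{T_0}\bm{D}'^\top(\bm{I}_{T_0}\otimes\bm{\Delta})\bm{P}\bm{D}$, which I symmetrize without changing its value. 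Because the innovation block $\bm{\xi}_t$ carried by $\bm{e}_t$ is strictly fresher than every block appearing in the lagged vector $\bm{x}_t$, the matrix $\bm{Q}(\bm{\Delta})$ is strictly lower block-triangular in the innovation ordering; hence $\trace(\bm{Q}(\bm{\Delta})) = 0$, so $\mathbb{E}[S(\bm{\Delta})] = 0$ (which is also immediate from independence of $\bm{e}_t$ and $\bm{x}_t$). Restricting to the genuinely random subvector $\widetilde{\bm{\xi}}$ and the corresponding submatrix $\widetilde{\bm{Q}}(\bm{\Delta})$, I record, exactly as in \eqref{eq:ar-rsc-op}, the bounds $\|\widetilde{\bm{Q}}(\bm{\Delta})\|_{\op}\lesssim\kappa_{U,B}'\|\bm{\Delta}\|_\Fr/T_0$ and $\|\widetilde{\bm{Q}}(\bm{\Delta})\|_\Fr\le\kappa_{U,B}'\|\bm{\Delta}\|_\Fr/\sqrt{T_0}$, which follow from $\|\bm{D}\|_{\op},\|\bm{D}'\|_{\op}\le\sqrt{C_e}$, $\|\bm{P}\|_{\op}^2 = \lambda_{\max}(\bm{P}\bm{P}^\top) = 1/\mu_{\min}(\bm{B}^\ast)$ (the spectral-measure bound of \cite{basu2015regularized}), $\|\bm{I}_{T_0}\otimes\bm{\Delta}\|_{\op} = \|\bm{\Delta}\|_{\op}\le\|\bm{\Delta}\|_\Fr$, and $\|\bm{I}_{T_0}\otimes\bm{\Delta}\|_\Fr = \sqrt{T_0}\|\bm{\Delta}\|_\Fr$; note that exactly one factor of $\bm{P}$ appears here (rather than two as in Lemma~\ref{lemma:AR-RSC}), which is what produces $\mu_{\min}^{1/2}(\bm{B}^\ast)$ in $\kappa_{U,B}' = C_e/\mu_{\min}^{1/2}(\bm{B}^\ast)$.

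Next, since $\bm{Q}$ is linear in $\bm{\Delta}$, the increments are again quadratic forms, $S(\bm{\Delta}_i)-S(\bm{\Delta}_j) = \widetilde{\bm{\xi}}^\top\widetilde{\bm{Q}}(\bm{\Delta}_i-\bm{\Delta}_j)\widetilde{\bm{\xi}}$, and the Hanson--Wright inequality (Theorem~6.2.1 of \cite{vershynin2019high}) together with the norm bounds above gives, for all $\bm{\Delta}_i,\bm{\Delta}_j\in\bm{\Xi}'$,
\[
\mathbb{P}\left(\left|S(\bm{\Delta}_i)-S(\bm{\Delta}_j)\right|\ge u\right)\le 2\exp\left\{-c\min\left(\frac{u^2\,T_0}{\kappa^4(\kappa_{U,B}')^2\|\bm{\Delta}_i-\bm{\Delta}_j\|_\Fr^2},\ \frac{u\,T_0}{\kappa^2\kappa_{U,B}'\|\bm{\Delta}_i-\bm{\Delta}_j\|_\Fr}\right)\right\}.
\]
This is precisely the mixed-tail hypothesis of the generic chaining bound (Lemma~\ref{lemma:chaining}) with $d_2(\bm{\Delta}_i,\bm{\Delta}_j)\asymp\kappa^2\kappa_{U,B}'\|\bm{\Delta}_i-\bm{\Delta}_j\|_\Fr/\sqrt{T_0}$ and $d_1(\bm{\Delta}_i,\bm{\Delta}_j)\asymp\kappa^2\kappa_{U,B}'\|\bm{\Delta}_i-\bm{\Delta}_j\|_\Fr/T_0$. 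Applying that lemma over $\bm{\Xi}'$ (anchored at $\bm{0}\in\bm{\Xi}'$, where $S(\bm{0})=0$) yields
\[
\mathbb{P}\left\{\sup_{\bm{\Delta}\in\bm{\Xi}'}S(\bm{\Delta}) > C\left(\gamma_2(\bm{\Xi}',d_2)+\gamma_1(\bm{\Xi}',d_1)+u\,\mathrm{Diam}_2(\bm{\Xi}')+u^2\,\mathrm{Diam}_1(\bm{\Xi}')\right)\right\}\le\exp(-u^2).
\]

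Finally I would convert the chaining quantities to the Gaussian width: by Theorem~8.6.1 of \cite{vershynin2019high} and Eq.~(46) and Lemma~2.7 of \cite{melnyk2016estimating}, $\gamma_2(\bm{\Xi}',d_2)\lesssim(\kappa^2\kappa_{U,B}'/\sqrt{T_0})\,\omega(\bm{\Xi}')$, $\gamma_1(\bm{\Xi}',d_1)\lesssim(\kappa^2\kappa_{U,B}'/T_0)\,\omega^2(\bm{\Xi}')$, $\mathrm{Diam}_2(\bm{\Xi}')\asymp\kappa^2\kappa_{U,B}'/\sqrt{T_0}$ and $\mathrm{Diam}_1(\bm{\Xi}')\asymp\kappa^2\kappa_{U,B}'/T_0$, while Dudley's inequality combined with Lemma~\ref{lemma:covering}(b) gives $\omega(\bm{\Xi}')\lesssim\sqrt{d_{\mathrm{AR}}d_c}$ with $d_{\mathrm{AR}} = PR_yR_x+(R_y+R_x)\sum_{i=1}^{n}q_i$ and $d_c = \log(nP^{1/2}R_yR_x^{1/2})$, exactly as in \eqref{gaussian-width}. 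Taking $u = \omega(\bm{\Xi}')\asymp\sqrt{d_{\mathrm{AR}}d_c}$ and invoking $T_0\gtrsim d_{\mathrm{AR}}d_c$ makes the sub-exponential terms $\gamma_1$ and $u^2\,\mathrm{Diam}_1$ (which scale as $1/T_0$) negligible against the sub-Gaussian terms $\gamma_2$ and $u\,\mathrm{Diam}_2$ (which scale as $1/\sqrt{T_0}$), leaving $\sup_{\bm{\Delta}\in\bm{\Xi}'}S(\bm{\Delta})\lesssim\kappa^2\kappa_{U,B}'\sqrt{d_{\mathrm{AR}}d_c/T_0}$ on an event of probability at least $1-\exp(-C_2 d_{\mathrm{AR}}d_c)$, which is the claim. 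The main obstacle I anticipate is the first paragraph: setting up the quadratic-form representation with correct bookkeeping so that (i) the mean vanishes, i.e.\ verifying the strict block-triangularity of $\bm{Q}(\bm{\Delta})$, and (ii) only a single factor of the block-Toeplitz operator $\bm{P}$ enters, so the rate carries $\mu_{\min}^{1/2}(\bm{B}^\ast)$ rather than $\mu_{\min}(\bm{B}^\ast)$; keeping these features straight while truncating $\bar{\bm{\xi}}$ to its finite random part and controlling $\|\widetilde{\bm{Q}}\|_{\op},\|\widetilde{\bm{Q}}\|_\Fr$ by $\|\bm{Q}\|_{\op},\|\bm{Q}\|_\Fr$ is the delicate part, while the subsequent chaining and covering-number estimates are essentially identical to those already carried out in Lemmas~\ref{lemma:reg-deviation} and \ref{lemma:AR-RSC}.
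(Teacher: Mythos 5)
Your proposal is correct and arrives at the stated bound, but the core concentration step takes a genuinely different route from the paper. The paper keeps the linear-in-errors structure: it conditions on the filtration and uses the martingale MGF bound of Lemma~\ref{lemma:martingale} to reduce $\sum_{t}\langle\bm{e}_t,\bm{M}\bm{x}_t\rangle$ to the quadratic form $\sum_t\lVert\bm{M}\bm{x}_t\rVert_2^2$ in the covariates, and only then passes to the innovations via $\bm{Q}_1=(\bm{I}_{T-P}\otimes\bm{M})\bm{P}\bm{D}$ together with the sub-Gaussian quadratic MGF bound of Lemma~\ref{lemma:exercise}; the single factor of $\bm{P}$ in $\bm{Q}_1$ is what yields $\mu_{\min}^{1/2}(\bm{B}^\ast)$ in $\kappa_{U,B}'$, exactly as in your norm calculation. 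You instead write the bilinear form directly as a quadratic form $\bar{\bm{\xi}}^\top\bm{Q}(\bm{\Delta})\bar{\bm{\xi}}$ in the i.i.d.\ innovations, observe that causality forces $\bm{Q}(\bm{\Delta})$ to be strictly block lower triangular in the time ordering (hence trace-free, so the form is exactly centered), and apply Hanson--Wright once. Both routes produce the same mixed tail with $d_1\asymp\kappa^2\kappa_{U,B}'\lVert\cdot\rVert_\Fr/(T-P)$ and $d_2\asymp\kappa^2\kappa_{U,B}'\lVert\cdot\rVert_\Fr/\sqrt{T-P}$, after which the chaining, Dudley and covering-number steps are literally identical to the paper's. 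Your route is more economical (one concentration tool instead of two) and makes the vanishing of the mean structurally transparent; the paper's martingale route avoids the joint quadratic-form bookkeeping that you correctly flag as the delicate point, namely extending the innovation vector so that both the fresh innovation carried by $\bm{e}_t$ and the strictly older innovations generating $\bm{x}_t$ live in a single vector to which Hanson--Wright (on the finite non-degenerate part) can be applied. The constants and the final rate agree either way.
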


\begin{proof}
	To use the conclusion of Lemma \ref{lemma:chaining}, we first define $\bm{W}(\bm{\Delta}) = \sum_{t=P+1}^T\langle \bm{e}_{t},\bm{\Delta}\bm{x}_t\rangle/T_0$. For any $\bm{\Delta}_i, \bm{\Delta}_j\in \bm{\Xi}'$, we have
	\begin{equation*}
		\begin{split}
			&\left|\frac{1}{T_0}\sum_{t=P+1}^T\langle \bm{e}_{t},\bm{\Delta}_i\bm{x}_t\rangle - \frac{1}{T_0}\sum_{t=P+1}^T\langle \bm{e}_{t},\bm{\Delta}_j\bm{x}_t\rangle \right|
			=\left|\sum_{t=P+1}^T\langle \bm{e}_{t},\bm{M}\bm{x}_t\rangle \right|.
		\end{split}
	\end{equation*}
	where $\bm{M}=(\bm{\Delta}_i-\bm{\Delta}_j)/T_0$ with $\norm{\bm{M}}_{\mathrm{F}} = \norm{(\bm{\Delta}_i-\bm{\Delta}_j)}_{\mathrm{F}}/T_0$. 
	
	We then provide an upper bound for $\mathbb{E}\left[\exp\left(\lambda\sum_{t=P+1}^T\langle \bm{e}_{t},\bm{M}\bm{x}_t\rangle\right)\right]$, for any positive $\lambda$.
	By Assumption \ref{assump:errorauto} and Lemma \ref{lemma:martingale}, we have
	\begin{align}\label{martingale-inequality}
		\mathbb{E}\left[\exp\left(\lambda\sum_{t=P+1}^T\langle \bm{e}_{t},\bm{M}\bm{x}_t\rangle\right)\right]
		&=\mathbb{E}\left[\exp\left(\lambda\sum_{t=P+1}^T\langle \bm{\xi}_t, \bm{\Sigma}_e^{1/2}\bm{M}\bm{x}_t\rangle\right)\right]\notag\\
		&\leq\mathbb{E}\left[\exp\left(C^\prime\lambda^2\kappa^2\sum_{t=P+1}^{T}\|\bm{\Sigma}_e^{1/2}\bm{M}\bm{x}_t\|_2^{2}\right)\right]\notag\\
		&\leq\mathbb{E}\left[\exp\left(C^\prime\lambda^2\kappa^2C_{e}\sum_{t=P+1}^{T}\|\bm{M}\bm{x}_t\|_2^{2}\right)\right],
	\end{align}
	where every entry of $\bm{\xi}_t$ is mean-zero and i.i.d. $\kappa^2$-sub-Gaussian.
	By a similar argument in Lemma \ref{lemma:AR-RSC}, we can rewrite 
	\begin{equation}\label{from-x-to-xi}
		\sum_{t=P+1}^{T}\|\bm{M}\bm{x}_t\|_2^{2} =\bar{\bm{\xi}}^\top(\bm{D}^\top\bm{P}^\top(\bm{I}_{T_0}\otimes\bm{M}^\top\bm{M})\bm{P}\bm{D})\bar{\bm{\xi}} =\bar{\bm{\xi}}^\top\bm{Q}_1^\top\bm{Q}_1\bar{\bm{\xi}} = \norm{\bm{Q}_1\bar{\bm{\xi}}}_2^2,
	\end{equation}
	where $\bm{Q}_1=(\bm{I}_{T_0}\otimes\bm{M})\bm{P}\bm{D}$. The upper bounds for $\norm{\bm{Q}_1}_{\op}$ and $\norm{\bm{Q}_1}_{\mathrm{F}}^2$ are
	\begin{equation}\label{lemma4:op}
		\begin{split}
			\norm{\bm{Q}_1}_{\op}
			&=\norm{(\bm{I}_{T_0}\otimes\bm{M})\bm{P}\bm{D}}_{\op}
			\leq\norm{\bm{D}}_{\op}\norm{\bm{P}}_{\op}\norm{\bm{I}_{T_0}\otimes\bm{M}}_{\op}\\ 
			&\leq C_{e}^{1/2}\lambda_{\mathrm{max}}^{1/2}(\bm{P}\bm{P}^\top)\norm{\bm{I}_{T_0}}_{\op}\norm{\bm{M}}_{\mathrm{F}}\leq\frac{1}{{T_0}}C_{e}^{1/2}\lambda_{\mathrm{max}}^{1/2}(\bm{P}\bm{P}^\top)\norm{\bm{\Delta}_i  - \bm{\Delta}_j}_{\mathrm{F}}.\\
			\norm{\bm{Q}_1}_{\mathrm{F}}^2& = \norm{(\bm{I}_{T_0}\otimes\bm{M})\bm{P}\bm{D}}_{\mathrm{F}}^2 \leq \norm{\bm{D}}_{\op}^2\norm{\bm{P}}_{\op}^2 \norm{\bm{I}_{T_0}\otimes\bm{M}}_{\mathrm{F}}^2\\
			&= C_{e}\lambda_{\mathrm{max}}(\bm{P}\bm{P}^\top)
			\norm{\bm{I}_{T_0}}_{\mathrm{F}}^2  \norm{\bm{M}}_{\mathrm{F}}^2 \leq \frac{1}{{T_0}}C_{e}\lambda_{\mathrm{max}}(\bm{P}\bm{P}^\top) \norm{\bm{\Delta}_i  - \bm{\Delta}_j}_{\mathrm{F}}^2.
		\end{split}
	\end{equation}
	Let $\widetilde{\bm{\xi}}$ be the non-zero part of $\bar{\bm{\xi}}$ with independent sub-gaussian entries, $\widetilde{\bm{Q}}_1$ be the corresponding submatrix of $\bm{Q}_1$, we have $\norm{\bm{Q}_1\bar{\bm{\xi}}}_2^2 = \norm{\widetilde{\bm{Q}}_1\widetilde{\bm{\xi}}}_2^2$. Combining this equation with \eqref{martingale-inequality} and \eqref{from-x-to-xi}, by Lemma \ref{lemma:exercise}, we have
	\begin{equation}\label{subG-to-subE}
		\begin{split}
			\mathbb{E}\left[\exp\left(\lambda\sum_{t=P+1}^T\langle \bm{e}_{t},\bm{M}\bm{x}_t\rangle\right)\right]
			&\leq\mathbb{E}\left[\exp\left(C^\prime\lambda^2\kappa^2C_{e}\norm{\widetilde{\bm{Q}}_1\widetilde{\bm{\xi}}}_2^2\right)\right]\\
			&\leq \exp\left(C\lambda^2\kappa^4C_{e}\norm{\widetilde{\bm{Q}}_1}_{\mathrm{F}}^2\right)\\
			&\leq\exp\left(C\lambda^2\kappa^4C_{e}\norm{\bm{Q}_1}_{\mathrm{F}}^2\right),
		\end{split}
	\end{equation}
	provided that $0 < \lambda \leq c_1 / (\kappa^2C_{e}^{1/2}\norm{\widetilde{\bm{Q}}_1}_{\op})$. Since $\widetilde{\bm{Q}}_1$ is a submatrix of $\bm{Q}_1$, we have $ \norm{\widetilde{\bm{Q}}_1}_{\op} \lesssim \norm{\bm{Q}_1}_{\op}$. Thus, \eqref{subG-to-subE} still holds for all $0 < \lambda \leq c / (\kappa^2C_{e}^{1/2}\norm{\bm{Q}_1}_{\op})$.
	
	Next, we verify the concentration inequality required by Lemma \ref{lemma:chaining}. By the standard Chernoff's argument and \eqref{subG-to-subE}, we have
	\begin{align*}
		&\mathbb{P}\left\{\left|\sum_{t=P+1}^T\langle \bm{e}_{t},\bm{M}\bm{x}_t\rangle \right|\geq u\right\} \\
		\leq&\inf\limits_{\lambda>0}\frac{\mathbb{E}\left[\exp\left(\lambda\sum_{t=P+1}^T\langle \bm{e}_{t},\bm{M}\bm{x}_t\rangle\right)\right]}{\exp(\lambda u)}\\
		\leq&\inf\limits_{0 < \lambda \leq c / (\kappa^2C_{e}^{1/2}\norm{\bm{Q}_1}_{\op})}\exp\left(C\lambda^2\kappa^4C_{e}\norm{\bm{Q}_1}_{\mathrm{F}}^2-\lambda u\right)\\
		\leq&\exp\left\{-c'\min\left(\frac{u}{\kappa^2C_{e}^{1/2}\norm{\bm{Q}_1}_{\op}}, \frac{u^2}{\kappa^4C_{e}\norm{\bm{Q}_1}_{\mathrm{F}}^2}\right)\right\}\\
		\leq&\exp\left\{-\min\left(\frac{u}{c^{\prime\prime}\kappa^2\kappa_{U,B}'\norm{\bm{\Delta}_i  - \bm{\Delta}_j}_{\mathrm{F}}/{T_0}}, \frac{u^2}{\left(c^{\prime\prime}\kappa^2\kappa_{U,B}' \norm{\bm{\Delta}_i  - \bm{\Delta}_j}_{\mathrm{F}}/\sqrt{{T_0}}\right)^2}\right)\right\}.
	\end{align*}
	where $\kappa_{U,B}'=C_{e}\lambda_{\mathrm{max}}^{1/2}(\bm{P}\bm{P}^\top)$. Let $d_1'(\bm{\Delta}_i, \bm{\Delta}_j) = c^{\prime\prime}\kappa^2\kappa_{U,B}'\norm{\bm{\Delta}_i  - \bm{\Delta}_j}_{\mathrm{F}}/T_0$ and $d_2'(\bm{\Delta}_i, \bm{\Delta}_j) = c^{\prime\prime}\kappa^2\kappa_{U,B}' $ $\norm{\bm{\Delta}_i  - \bm{\Delta}_j}_{\mathrm{F}}/\sqrt{T_0}$. We have the following concentration bound by Lemma \ref{lemma:chaining},
	\begin{equation}\label{DB-AR-concentration}
		\begin{split}
			\mathbb{P}\left\{\sup_{\bm{\Delta}_i\in\bm{\Xi}'}\frac{1}{T_0}\sum_{t=P+1}^T\langle \bm{e}_{t},\bm{\Delta}_i\bm{x}_t\rangle > C\left(\gamma_2(\bm{\Xi}',d_2') + \gamma_1(\bm{\Xi}',d_1') + u\mathrm{Diam}'_2({\bm{\Xi}'}) + u^2\mathrm{Diam}'_1(\bm{\Xi}')\right)\right\} \\ \leq \exp(-u^2),
		\end{split}
	\end{equation}
	where
	\begin{equation}\label{DB-gamma-fcn}
		\begin{split}
			\gamma_2(\bm{\Xi}',d_2')
			&=\left(c^{\prime\prime}\kappa^2\kappa_{U,B}'/\sqrt{{T_0}}\right)\cdot\gamma_2(\bm{\Xi}',\norm{\cdot}_{\mathrm{F}})
			\leq\left(c_2\kappa^2\kappa_{U,B}'/\sqrt{{T_0}}\right)\cdot\omega(\bm{\Xi}'),\\
			\gamma_1(\bm{\Xi}',d_1')
			&=\left(c^{\prime\prime}\kappa^2\kappa_{U,B}'/{T_0}\right)\cdot\gamma_1(\bm{\Xi}',\norm{\cdot}_{\mathrm{F}}) \leq\left(c_1\kappa^2\kappa_{U,B}'/{T_0}\right)\cdot\omega^2(\bm{\Xi}'),\\
			\mathrm{Diam}'_2({\bm{\Xi}'})
			&= 2c^{\prime\prime}\kappa^2\kappa_{U,B}'/\sqrt{{T_0}},\quad
			\mathrm{Diam}_1'(\bm{\Xi}') = 2c^{\prime\prime}\kappa^2\kappa_{U,B}'/{T_0}.
		\end{split}
	\end{equation}
	Substituting \eqref{DB-gamma-fcn} into \eqref{DB-AR-concentration}, we have
	\begin{align}\label{AR-DB-concentration}
		&\mathbb{P}\left\{\sup_{\bm{\Delta}\in\bm{\Xi}'}\frac{1}{{T_0}}\sum_{t=P+1}^T\langle \bm{e}_{t},\bm{\Delta}\bm{x}_t\rangle > c_2^\prime\kappa^2\kappa_{U,B}'\frac{u+\omega(\bm{\Xi}')}{\sqrt{{T_0}}} + c_1^\prime\kappa^2\kappa_{U,B}'\frac{u^2+\omega^2(\bm{\Xi}')}{{T_0}}\right\}\leq \exp(-u^2).
	\end{align}
	
	When $T_0\gtrsim \left(u+\omega(\bm{\Xi}')\right)^2$ with $u = \omega(\bm{\Xi}')$, by \eqref{gaussian-width}, the upper bound \eqref{AR-DB-concentration} can be reduced to
	\[
	\mathbb{P}\left\{\sup_{\bm{\Delta}\in\bm{\Xi}'}\frac{1}{T_0}\sum_{t=P+1}^T\langle \bm{e}_{t},\bm{\Delta}\bm{x}_t\rangle \leq C_1\kappa^2\kappa_{U,B}'\sqrt{d_{\mathrm{AR}}d_c/T_0} \right\}\notag \geq 1-\exp(-C_2 d_{\mathrm{AR}}d_c).
	\]
	Again by spectral measure of ARMA process in \cite{basu2015regularized}, we could replace $\lambda_{\mathrm{max}}(\bm{P}\bm{P}^\top)$ by $1/\mu_{\mathrm{min}}(\bm{B}^\ast)$ where $\bm{B}^\ast(z) \vcentcolon= \bm{I}_{pQ} - \bm{B}^\ast z$ and $\mu_{\mathrm{\min}}(\bm{B}^\ast) = \min_{|z|=1}\lambda_{\mathrm{min}}(\bar{\bm{B}^\ast}(z)$ $\bm{B}^\ast(z))$.
\end{proof}

\begin{lemma}[\textbf{Autoregression RSC - Sparsity}]\label{lemma:AR-RSC-sparse} 
	Suppose that Assumptions \ref{assump:stationarity}, \ref{assump:errorauto} and \ref{assump:sparse} hold, and $T-P \gtrsim \kappa^4 \kappa_{U, B}^2 s\log(PQ^2) / \kappa_{L, A}^2$, we have	
	\[
	\frac{1}{4}\sqrt{\kappa_{L, A}}
	\left\|\bm{\Delta}_\mathrm{S}\right\|_{\mathrm{F}}
	\leq
	\left\|\bm{\Delta}_\mathrm{S} \right\|_T
	\leq2\sqrt{\kappa_{U, A}}\left\|\bm{\Delta}_\mathrm{S} \right\|_{\mathrm{F}},
	\]
	for all $\bm{\Delta}_\mathrm{S} \in \mathcal{H}_{2s}(Q, PQ)$ with probability at least $1-2\exp\left\{-cs\log(PQ^2)\right\}$, where $\kappa_{U, A}=C_{e}/\mu_{\mathrm{min}}(\cm{A}^{\ast})$, $\kappa_{L, A}=c_{e}/\mu_{\mathrm{max}}(\cm{A}^\ast)$, $\kappa_{U, B}=C_{e}/\mu_{\mathrm{min}}(\bm{B}^{\ast})$, $Q=\prod_{i=1}^{n}q_i$, and $c$ is a positive constant.
\end{lemma}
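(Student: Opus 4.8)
The plan is to reproduce the argument of Lemma~\ref{lemma:AR-RSC} almost verbatim, with the low-rank parameter set $\bm{\Xi}'$ replaced everywhere by the normalized sparse set $\mathcal{H}_{2s}' = \mathcal{H}_{2s}'(Q,PQ)$, and to supply the one genuinely new ingredient: a bound on the Gaussian width of $\mathcal{H}_{2s}'$. Writing $T_0 = T-P$, it suffices as before to show $\sup_{\bm{\Delta}_\mathrm{S}\in\mathcal{H}_{2s}'}\|\bm{\Delta}_\mathrm{S}\|_T \le 2\sqrt{\kappa_{U,A}}$ and $\inf_{\bm{\Delta}_\mathrm{S}\in\mathcal{H}_{2s}'}\|\bm{\Delta}_\mathrm{S}\|_T \ge \frac{1}{4}\sqrt{\kappa_{L,A}}$ with the stated probability, and then rescale to unit Frobenius norm. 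First I would recall from the proof of Lemma~\ref{lemma:AR-RSC} the VAR(1)/VMA($\infty$) representation $\bm{z} = \bm{P}\bm{D}\bar{\bm{\xi}}$ of the stacked process and write, for arbitrary $\bm{\Delta}_a,\bm{\Delta}_b\in\mathcal{H}_{2s}'$,
\[
\|\bm{\Delta}_a\|_T^2 - \|\bm{\Delta}_b\|_T^2 \;=\; \bar{\bm{\xi}}^\top\bm{Q}\bar{\bm{\xi}},\qquad \bm{Q} \;=\; \frac{1}{T_0}\bm{D}^\top\bm{P}^\top(\bm{I}_{T_0}\otimes\bm{M})\bm{P}\bm{D},
\]
with $\bm{M}=\bm{\Delta}_a^\top\bm{\Delta}_a-\bm{\Delta}_b^\top\bm{\Delta}_b$. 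The telescoping estimate $\|\bm{M}\|_\Fr\le 2\|\bm{\Delta}_a-\bm{\Delta}_b\|_\Fr$ and the norm bounds $\|\bm{Q}\|_\op\lesssim \kappa_{U,B}\|\bm{\Delta}_a-\bm{\Delta}_b\|_\Fr/T_0$, $\|\bm{Q}\|_\Fr^2\lesssim \kappa_{U,B}^2\|\bm{\Delta}_a-\bm{\Delta}_b\|_\Fr^2/T_0$ used there make no use of any structural assumption on $\bm{\Delta}_a,\bm{\Delta}_b$, so they carry over unchanged; in particular Assumption~\ref{assump:core} is not needed here, which is why it is absent from the hypotheses.

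Next I would apply the Hanson--Wright inequality to the sub-Gaussian vector $\widetilde{\bm{\xi}}$ (the nonzero block of $\bar{\bm{\xi}}$) exactly as in \eqref{eq:ar-rsc-hansonwright}, obtaining the mixed sub-exponential/sub-Gaussian increment bound for $\big|(\|\bm{\Delta}_a\|_T^2-\mathbb{E}\|\bm{\Delta}_a\|_T^2)-(\|\bm{\Delta}_b\|_T^2-\mathbb{E}\|\bm{\Delta}_b\|_T^2)\big|$ with metrics $d_1\asymp\kappa^2\kappa_{U,B}\|\bm{\Delta}_a-\bm{\Delta}_b\|_\Fr/T_0$ and $d_2\asymp\kappa^2\kappa_{U,B}\|\bm{\Delta}_a-\bm{\Delta}_b\|_\Fr/\sqrt{T_0}$. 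The only points to check are that $\mathcal{H}_{2s}'$ is a bounded metric space, which it is, and that the generic-chaining Lemma~\ref{lemma:chaining} requires no convexity. Invoking it yields, verbatim as in \eqref{AR-RSC-concentration},
\[
\mathbb{P}\!\left\{\sup_{\bm{\Delta}_\mathrm{S}\in\mathcal{H}_{2s}'}\big|\|\bm{\Delta}_\mathrm{S}\|_T^2-\mathbb{E}\|\bm{\Delta}_\mathrm{S}\|_T^2\big| > c_3\kappa^2\kappa_{U,B}\,\frac{u+\omega(\mathcal{H}_{2s}')}{\sqrt{T_0}}\right\}\le 2\exp(-u^2),
\]
valid once $T_0\gtrsim (u+\omega(\mathcal{H}_{2s}'))^2$, using $\gamma_2(\mathcal{H}_{2s}',d_2)\lesssim(\kappa^2\kappa_{U,B}/\sqrt{T_0})\,\omega(\mathcal{H}_{2s}')$ and $\gamma_1(\mathcal{H}_{2s}',d_1)\lesssim(\kappa^2\kappa_{U,B}/T_0)\,\omega^2(\mathcal{H}_{2s}')$ from Theorem~8.6.1 of \cite{vershynin2019high} together with Lemma~2.7 of \cite{melnyk2016estimating}, plus the diameter bounds $\mathrm{Diam}_2\lesssim\kappa^2\kappa_{U,B}/\sqrt{T_0}$ and $\mathrm{Diam}_1\lesssim\kappa^2\kappa_{U,B}/T_0$.

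The one new step is bounding $\omega(\mathcal{H}_{2s}')$. From Lemma~\ref{lemma:covering}(e) with $d_1d_2=PQ^2$ and sparsity level $2s$, $\log\mathcal{N}(\mathcal{H}_{2s}',\|\cdot\|_\Fr,\epsilon)\le\log\binom{PQ^2}{2s}+2s\log(3/\epsilon)\lesssim s\log(PQ^2)+s\log(1/\epsilon)$, so Dudley's entropy integral gives $\omega(\mathcal{H}_{2s}')\lesssim\int_0^2\sqrt{s\log(PQ^2)+s\log(1/\epsilon)}\,\mathrm{d}\epsilon\lesssim\sqrt{s\log(PQ^2)}$. Finally, the spectral-density bounds for stationary VAR processes in \cite{basu2015regularized} give $\kappa_{L,A}\le\mathbb{E}\|\bm{\Delta}_\mathrm{S}\|_T^2\le\kappa_{U,A}$ for every $\bm{\Delta}_\mathrm{S}$ with $\|\bm{\Delta}_\mathrm{S}\|_\Fr=1$ (here $\lambda_{\max}(\bm{P}\bm{P}^\top)$ is replaced by $1/\mu_{\min}(\bm{B}^\ast)$, as in Lemma~\ref{lemma:AR-RSC}); taking $u=\omega(\mathcal{H}_{2s}')$ and imposing $T_0\gtrsim\kappa^4\kappa_{U,B}^2 s\log(PQ^2)/\kappa_{L,A}^2$ shrinks the chaining fluctuation below $\kappa_{L,A}/2$, which yields $\kappa_{L,A}/16\le\|\bm{\Delta}_\mathrm{S}\|_T^2\le 4\kappa_{U,A}$ uniformly over $\mathcal{H}_{2s}'$ with probability at least $1-2\exp(-cs\log(PQ^2))$, and rescaling gives the claim. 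I do not anticipate a real obstacle: this is a direct transcription of Lemma~\ref{lemma:AR-RSC}, and the only points requiring care are (i) the sparse Gaussian-width estimate above and (ii) keeping the set $\mathcal{H}_{2s}$ rather than $\mathcal{H}_s$ throughout, so that differences of $s$-sparse perturbations stay inside the set and the result can later be fed into the $\ell_1$-cone argument in the proof of Theorem~\ref{thm: ar-sparse}.
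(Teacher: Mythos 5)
Your proposal is correct and follows essentially the same route as the paper's own proof: transplant the chaining/Hanson--Wright argument of Lemma~\ref{lemma:AR-RSC} to the set $\mathcal{H}_{2s}'$, reuse the spectral-density bounds for $\mathbb{E}\|\bm{\Delta}_\mathrm{S}\|_T^2$, and supply the new Gaussian-width bound $\omega(\mathcal{H}_{2s}')\lesssim\sqrt{s\log(PQ^2)}$ via Dudley's inequality and Lemma~\ref{lemma:covering}(e). Your side remarks (that Assumption~\ref{assump:core} is not needed here, and that the set must be $\mathcal{H}_{2s}$ so that differences of $s$-sparse matrices remain in it) are both accurate and consistent with the paper.
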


\begin{proof}
	The proof of this lemma closely follows Lemma \ref{lemma:AR-RSC}.
	Using the similar arguments as the derivation of \eqref{AR-RSC-concentration} in Lemma \ref{lemma:AR-RSC}, we have
	\begin{equation}\label{eq:ar-rsc-sparse-sup}
		\mathbb{P}\left\{\sup_{\bm{\Delta}_\mathrm{S}\in\mathcal{H}_{2s}^\prime}\left|\norm{\bm{\Delta}_\mathrm{S}}_T^2 -\mathbb{E}\norm{\bm{\Delta}_\mathrm{S}}_T^2\right| > c_1\kappa^2\kappa_{U,B}\frac{u+\omega(\mathcal{H}_{2s}^\prime)}{\sqrt{T_0}}\right\} \leq 2\exp(-u^2).
	\end{equation}
	when $T_0\gtrsim \left(u+\omega(\mathcal{H}_{2s}^\prime)\right)^2$. Since the lower and upper bounds for $\mathbb{E}\norm{\bm{\Delta}}_T^2$ in \eqref{AR-RSC-expectation} of Lemma \ref{lemma:AR-RSC} hold for any $\bm{\Delta}$ with $\norm{\bm{\Delta}}_\Fr = 1$, then 
	\begin{equation}\label{eq:ar-rsc-sparse-expectation}
		\kappa_{L, A} \leq \mathbb{E}\norm{\bm{\Delta}_\mathrm{S}}_T^2 = \mathbb{E}\left(\frac{1}{T_0} \sum_{t=P+1}^{T} \mathrm{vec}(\bm{\Delta}_\mathrm{S}^\top)^\top (\bm{I}_Q \otimes \bm{x}_t \bm{x}_t^\top)\mathrm{vec}(\bm{\Delta}_\mathrm{S}^\top)\right) \leq \kappa_{U, A}.
	\end{equation}
	Combining \eqref{eq:ar-rsc-sparse-sup} and \eqref{eq:ar-rsc-sparse-expectation}, we get the concentration inequality for $\norm{\bm{\Delta}_\mathrm{S}}_T^2$ with
	\begin{equation} \label{eq:ar-rsc-sparse-conclusion}
		\begin{split}
			&\mathbb{P}\left\{\kappa_{L,A} - c_1\kappa^2\kappa_{U,B}\frac{u+\omega(\mathcal{H}_{2s}^\prime)}{\sqrt{T_0}} \leq\inf_{\bm{\Delta}_\mathrm{S}\in\mathcal{H}_{2s}^\prime}\norm{\bm{\Delta}_\mathrm{S}}_T^2\leq\sup_{\bm{\Delta}_\mathrm{S}\in\mathcal{H}_{2s}^\prime}\norm{\bm{\Delta}_\mathrm{S}}_T^2\leq \kappa_{U, A} + c_1\kappa^2\kappa_{U,B}\frac{u+\omega(\mathcal{H}_{2s}^\prime)}{\sqrt{T_0}}\right\}\\
			&\hspace{350pt}\geq 1-2\exp(-u^2).
		\end{split}
	\end{equation}
	By the Dudley's inequality and Lemma \ref{lemma:covering}(e), we have
	\begin{align} \label{eq:ar-rsc-sparse-gs}
		\omega({\mathcal{H}^\prime_{2s}}) \leq
		C_1 \int_{0}^{2} \sqrt{\log \mathcal{N}(\mathcal{H}^\prime_{2s}, \norm{\cdot}_\Fr, \epsilon)} \mathrm{d}\epsilon 
		\leq 
		C_1 \int_{0}^{2} \left\{2s\log\left(\frac{3PQ^2}{\epsilon}\right)\right\}^{1/2} \mathrm{d}\epsilon 
		\leq
		C_2 s^{1/2} (\log (PQ^2))^{1/2}.
	\end{align}
	By letting $u = \omega({\mathcal{H}^\prime_{2s}})$ and $T_0 \gtrsim \kappa^4 \kappa_{U, B}^2 s\log(PQ^2) / \kappa_{L, A}^2$,
	\eqref{eq:ar-rsc-sparse-conclusion} can be reduced to 
	\[
	\mathbb{P}\left\{\frac{\kappa_{L,A}}{16}\leq\inf_{\bm{\Delta}_\mathrm{S}\in\mathcal{H}_{2s}^\prime}\norm{\bm{\Delta}_\mathrm{S}}_T^2\leq\sup_{\bm{\Delta}_\mathrm{S}\in\mathcal{H}_{2s}^\prime}\norm{\bm{\Delta}_\mathrm{S}}_T^2 \leq
	4\kappa_{U, A}\right\}\geq 1-2\exp\left\{-cs\log(PQ^2)\right\}.
	\]
\end{proof}

\begin{lemma}[\textbf{Autoregression deviation bound - Sparsity}]\label{lemma:ar-deviation-sparse}
	Suppose that Assumptions \ref{assump:stationarity}, \ref{assump:errorauto} and \ref{assump:sparse} hold and $T- P\gtrsim s \log(PQ^2)$, we have
	\[
	\sup_{\bm{\Delta}_\mathrm{S}\in\mathcal{H}_{2s}^\prime}\frac{1}{T_0}\sum_{t=P+1}^T\langle \bm{e}_{t},\bm{\Delta}_\mathrm{S}\bm{x}_t\rangle \leq C\kappa^2\kappa_{U,B}'
	\sqrt{s\log(PQ^2)/(T-P)}
	\]
	with probability at least $ 1-\exp\left\{-c s\log(PQ^2)\right\}$,
	where $\kappa_{U, B}^\prime=C_{e}/{\mu^{1/2}_{\mathrm{min}}(\bm{B}^{\ast})}$, and $C,c$'s are some positive constants.
\end{lemma}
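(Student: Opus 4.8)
The plan is to replicate, essentially line for line, the argument used for Lemma~\ref{lemma:AR-DB} (the dense autoregression deviation bound), replacing the low-rank index set $\bm{\Xi}'$ everywhere by the sparse set $\mathcal{H}_{2s}^\prime(Q,PQ)$. First I would set $\bm{W}(\bm{\Delta}_\mathrm{S}) = T_0^{-1}\sum_{t=P+1}^{T}\langle \bm{e}_t,\bm{\Delta}_\mathrm{S}\bm{x}_t\rangle$ and, for any $\bm{\Delta}_a,\bm{\Delta}_b \in \mathcal{H}_{2s}^\prime$, rewrite the increment as $\bm{W}(\bm{\Delta}_a)-\bm{W}(\bm{\Delta}_b) = \sum_{t=P+1}^{T}\langle \bm{e}_t,\bm{M}\bm{x}_t\rangle$ with $\bm{M} = (\bm{\Delta}_a-\bm{\Delta}_b)/T_0$, so that $\norm{\bm{M}}_\Fr = \norm{\bm{\Delta}_a-\bm{\Delta}_b}_\Fr/T_0$. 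The difference $\bm{\Delta}_a-\bm{\Delta}_b$ is now $4s$-sparse instead of low-rank, which affects only a constant in the covering-number step.

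Next I would carry over the VAR(1)/VMA($\infty$) bookkeeping from Lemma~\ref{lemma:AR-RSC}: writing $\bm{e}=\bm{D}\bar{\bm{\xi}}$ and $\bm{z}=\bm{P}\bm{e}$ gives $\sum_{t=P+1}^{T}\norm{\bm{M}\bm{x}_t}_2^2 = \norm{\bm{Q}_1\bar{\bm{\xi}}}_2^2$ with $\bm{Q}_1 = (\bm{I}_{T_0}\otimes\bm{M})\bm{P}\bm{D}$, and the same operator- and Frobenius-norm bounds $\norm{\bm{Q}_1}_\op \le T_0^{-1}C_e^{1/2}\lambda_{\max}^{1/2}(\bm{P}\bm{P}^\top)\norm{\bm{\Delta}_a-\bm{\Delta}_b}_\Fr$ and $\norm{\bm{Q}_1}_\Fr^2 \le T_0^{-1}C_e\lambda_{\max}(\bm{P}\bm{P}^\top)\norm{\bm{\Delta}_a-\bm{\Delta}_b}_\Fr^2$ hold verbatim. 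Then, combining Lemma~\ref{lemma:martingale} (to dominate the conditional moment generating function of $\sum_t\langle\bm{\xi}_t,\bm{\Sigma}_e^{1/2}\bm{M}\bm{x}_t\rangle$ by a quadratic chaos in $\bar{\bm{\xi}}$) with Lemma~\ref{lemma:exercise} (Hanson--Wright-type control of $\norm{\widetilde{\bm{Q}}_1\widetilde{\bm{\xi}}}_2^2$) and a Chernoff bound, I obtain the mixed-tail increment estimate $\mathbb{P}\{|\bm{W}(\bm{\Delta}_a)-\bm{W}(\bm{\Delta}_b)| \ge u\} \le \exp\{-c\min(u/d_1',u^2/(d_2')^2)\}$ with $d_1' = c'\kappa^2\kappa_{U,B}'\norm{\bm{\Delta}_a-\bm{\Delta}_b}_\Fr/T_0$ and $d_2' = c'\kappa^2\kappa_{U,B}'\norm{\bm{\Delta}_a-\bm{\Delta}_b}_\Fr/\sqrt{T_0}$, exactly in the form required by Lemma~\ref{lemma:chaining}.

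Finally I would apply the generic chaining bound (Lemma~\ref{lemma:chaining}) over $(\mathcal{H}_{2s}^\prime,d_1',d_2')$, use Talagrand's majorizing-measure theorem and the $\gamma_\alpha$-to-Gaussian-width comparisons of \cite{vershynin2019high} and \cite{melnyk2016estimating} to reduce $\gamma_2$ and $\gamma_1$ to $\omega(\mathcal{H}_{2s}^\prime)$ and $\omega^2(\mathcal{H}_{2s}^\prime)$, and bound the Gaussian width by Dudley's inequality together with the sparse covering number of Lemma~\ref{lemma:covering}(e), which yields $\omega(\mathcal{H}_{2s}^\prime) \lesssim \sqrt{s\log(PQ^2)}$ (this is precisely \eqref{eq:ar-rsc-sparse-gs}). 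Taking $u = \omega(\mathcal{H}_{2s}^\prime)$ and imposing $T_0 = T-P \gtrsim s\log(PQ^2)$ makes the quadratic term $O((u+\omega)^2/T_0)$ negligible against the linear one, giving $\sup_{\bm{\Delta}_\mathrm{S}\in\mathcal{H}_{2s}^\prime}\bm{W}(\bm{\Delta}_\mathrm{S}) \lesssim \kappa^2\kappa_{U,B}'\sqrt{s\log(PQ^2)/T_0}$ on an event of probability at least $1-\exp\{-cs\log(PQ^2)\}$; then invoking the spectral-measure estimate for ARMA processes from \cite{basu2015regularized} replaces $\lambda_{\max}(\bm{P}\bm{P}^\top)$ by $1/\mu_{\min}(\bm{B}^\ast)$, producing $\kappa_{U,B}' = C_e/\mu_{\min}^{1/2}(\bm{B}^\ast)$ as stated. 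The only input that differs from Lemma~\ref{lemma:AR-DB} is the covering/Gaussian-width bound for the sparse set, which has already been established in Lemma~\ref{lemma:AR-RSC-sparse}; consequently I expect no substantive obstacle — the mildly delicate points are simply applying the \emph{sub-exponential} (rather than sub-Gaussian) increment tail uniformly and noting that the non-convexity of $\mathcal{H}_{2s}^\prime$ is harmless for the chaining argument.
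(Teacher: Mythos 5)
Your proposal is correct and matches the paper's argument: the paper's proof of this lemma simply repeats the derivation of the deviation bound from Lemma \ref{lemma:AR-DB} with $\bm{\Xi}'$ replaced by $\mathcal{H}_{2s}^\prime$, invokes Lemma \ref{lemma:chaining} with the same mixed-tail increments, and bounds $\omega(\mathcal{H}_{2s}^\prime)$ via Dudley's inequality and the sparse covering number of Lemma \ref{lemma:covering}(e), exactly as you describe.
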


\begin{proof}
	The proof of this lemma closely follows Lemma \ref{lemma:AR-DB}. Using the similar argument as the derivation of \eqref{DB-AR-concentration} in Lemma \ref{lemma:AR-DB}, we have 
	\begin{equation} \label{eq:ar-dev-sparse-sup}
		\begin{split}
			\mathbb{P}\left\{\sup_{\bm{\Delta}_\mathrm{S}\in\mathcal{H}_{2s}^\prime}\frac{1}{T_0}\sum_{t=P+1}^T\langle \bm{e}_{t},\bm{\Delta}_\mathrm{S}\bm{x}_t\rangle 
			> C_1 \left(\gamma_2(\mathcal{H}_{2s}^\prime,d_2') + 
			\gamma_1(\mathcal{H}_{2s}^\prime,d_1') + 
			u\mathrm{Diam}'_2({\mathcal{H}_{2s}^\prime}) + 
			u^2\mathrm{Diam}'_1(\mathcal{H}_{2s}^\prime)\right)\right\} \\ 
			\leq \exp(-u^2),
		\end{split}
	\end{equation}
	where $d_1'(\bm{\Delta}_{\mathrm{S},i}, \bm{\Delta}_{\mathrm{S},j}) = c_1\kappa^2\kappa_{U,B}'\norm{\bm{\Delta}_{\mathrm{S},i}  - \bm{\Delta}_{\mathrm{S},j}}_{\mathrm{F}}/T_0$ and $d_2'(\bm{\Delta}_{\mathrm{S},i}, \bm{\Delta}_{\mathrm{S},j}) = c_1\kappa^2\kappa_{U,B}' \norm{\bm{\Delta}_{\mathrm{S},i}  - \bm{\Delta}_{\mathrm{S},j}}_{\mathrm{F}}$ $/\sqrt{T_0}$.
	Since
	\begin{equation} \label{eq:ar-dev-sparse-gamma}
		\begin{split}
			\gamma_2(\mathcal{H}_{2s}^\prime,d_2')
			&=\left(c_1\kappa^2\kappa_{U,B}'/\sqrt{T_0}\right)\cdot\gamma_2(\mathcal{H}_{2s}^\prime,\norm{\cdot}_{\mathrm{F}})
			\leq\left(c_2\kappa^2\kappa_{U,B}'/\sqrt{T_0}\right)\cdot\omega(\mathcal{H}_{2s}^\prime),\\
			\gamma_1(\mathcal{H}_{2s}^\prime,d_1')
			&=\left(c_1\kappa^2\kappa_{U,B}'/T_0\right)\cdot\gamma_1(\mathcal{H}_{2s}^\prime,\norm{\cdot}_{\mathrm{F}}) \leq\left(c_3\kappa^2\kappa_{U,B}'/T_0\right)\cdot\omega^2(\mathcal{H}_{2s}^\prime),\\
			\mathrm{Diam}'_2({\mathcal{H}_{2s}^\prime})
			&= 2c_1\kappa^2\kappa_{U,B}'/\sqrt{T_0},\quad
			\mathrm{Diam}_1'(\mathcal{H}_{2s}^\prime) = 2c_1\kappa^2\kappa_{U,B}'/T_0.
		\end{split}
	\end{equation}
	Substituting \eqref{eq:ar-dev-sparse-gamma} into \eqref{eq:ar-dev-sparse-sup}, we have
	\begin{align}\label{eq:ar-dev-sparse-bound}
		&\mathbb{P}\left\{\sup_{\bm{\Delta}_\mathrm{S}\in\mathcal{H}_{2s}^\prime}\frac{1}{T_0}\sum_{t=P+1}^T\langle \bm{e}_{t},\bm{\Delta}_\mathrm{S}\bm{x}_t\rangle 
		> c_4\kappa^2\kappa_{U,B}'\frac{u+\omega(\mathcal{H}_{2s}^\prime)}{\sqrt{T_0}} + c_5\kappa^2\kappa_{U,B}'\frac{u^2+\omega^2(\mathcal{H}_{2s}^\prime)}{T_0}\right\}\leq \exp(-u^2).
	\end{align}
	
	When $T_0 \gtrsim \left(u+\omega(\mathcal{H}_{2s}^\prime)\right)^2$ with $u = \omega(\mathcal{H}_{2s}^\prime)$, by \eqref{eq:ar-rsc-sparse-gs}, the upper bound \eqref{eq:ar-dev-sparse-bound} can be reduced to
	\[
	\mathbb{P}\left\{\sup_{\bm{\Delta}_\mathrm{S}\in\mathcal{H}_{2s}^\prime}\frac{1}{T_0}\sum_{t=P+1}^T\langle \bm{e}_{t},\bm{\Delta}\bm{x}_t\rangle \leq C\kappa^2\kappa_{U,B}'
	\sqrt{s\log(PQ^2)/T_0} \right\}\notag \geq 1-\exp\left\{-c s\log(PQ^2)\right\}.
	\]
\end{proof}

\begin{lemma} \label{lemma:reg-rsc-general}
	Let the predictors $\{\bm{x}_t\}$ follow Assumption \ref{assump:reg-input} for any $1 \leq t \leq T$. Then for any $\bm{u} \in \bm{\Phi} \subseteq S^{\prod_{j=1}^{m}p_j-1}$ with $\bm{\Phi}$ being an arbitrary parameter space in the unit sphere, we have when $T \gtrsim \sigma^4 \omega^2(\bm{\Phi})$,
	\[
	\frac{1}{16} c_x \leq \inf_{\bm{u} \in \bm{\Phi}} \frac{1}{T} \sum_{t=1}^T \norm{\bm{u}^\top \bm{x}_t}_2^2
	\leq \sup_{\bm{u} \in \bm{\Phi}} \frac{1}{T} \sum_{t=1}^T \norm{\bm{u}^\top \bm{x}_t}_2^2
	\leq 4C_x,
	\]
	with probability at least $1 - \exp\left\{-\eta T / \sigma^4\right\}$, where $\eta$ is some positive constant.
\end{lemma}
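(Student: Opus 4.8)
The plan is to reduce the statement to a uniform deviation bound for an empirical-covariance quadratic form and then control that bound by generic chaining, exactly in the style of Lemmas~\ref{lemma:AR-RSC} and \ref{lemma:AR-DB}. Since $\bm{u}^\top\bm{x}_t$ is scalar, $\frac{1}{T}\sum_{t=1}^T\norm{\bm{u}^\top\bm{x}_t}_2^2 = \bm{u}^\top\widehat{\bm{\Sigma}}_x\bm{u}$ with $\widehat{\bm{\Sigma}}_x \defeq \frac{1}{T}\sum_{t=1}^T\bm{x}_t\bm{x}_t^\top$, and Assumption~\ref{assump:reg-input} gives $\mathbb{E}(\bm{u}^\top\widehat{\bm{\Sigma}}_x\bm{u}) = \bm{u}^\top\bm{\Sigma}_x\bm{u}\in[c_x,C_x]$ for every $\bm{u}$ in the unit sphere. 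So I would first argue that it suffices to show $\sup_{\bm{u}\in\bm{\Phi}}|W(\bm{u})|\leq \tfrac{15}{16}c_x$ with the stated probability, where $W(\bm{u})\defeq \bm{u}^\top(\widehat{\bm{\Sigma}}_x-\bm{\Sigma}_x)\bm{u}$; the two-sided bound then follows from the triangle inequality, since $c_x-\tfrac{15}{16}c_x = \tfrac{1}{16}c_x$ and $C_x+\tfrac{15}{16}c_x\leq 4C_x$ (using $c_x\leq C_x$).

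Next I would establish that $W$ has mixed sub-Gaussian/sub-exponential increments. Writing, for $\bm{u},\bm{v}\in\bm{\Phi}$,
\[
W(\bm{u})-W(\bm{v}) = \frac{1}{T}\sum_{t=1}^T\Big[\langle\bm{u}+\bm{v},\bm{x}_t\rangle\langle\bm{u}-\bm{v},\bm{x}_t\rangle - \mathbb{E}\langle\bm{u}+\bm{v},\bm{x}_t\rangle\langle\bm{u}-\bm{v},\bm{x}_t\rangle\Big],
\]
Assumption~\ref{assump:reg-input} makes $\langle\bm{u}+\bm{v},\bm{x}_t\rangle$ and $\langle\bm{u}-\bm{v},\bm{x}_t\rangle$ sub-Gaussian with $\psi_2$-norms at most $2\sigma$ (as $\norm{\bm{u}+\bm{v}}_2\leq 2$) and $\sigma\norm{\bm{u}-\bm{v}}_2$, so their centered product is sub-exponential with $\psi_1$-norm $\lesssim\sigma^2\norm{\bm{u}-\bm{v}}_2$. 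Bernstein's inequality for i.i.d.\ sub-exponential summands (Theorem~2.8.1 of \cite{vershynin2019high}) then yields
\[
\mathbb{P}\big(|W(\bm{u})-W(\bm{v})|>r\big)\leq 2\exp\!\left\{-c\min\!\left(\frac{r^2 T}{\sigma^4\norm{\bm{u}-\bm{v}}_2^2},\,\frac{rT}{\sigma^2\norm{\bm{u}-\bm{v}}_2}\right)\right\},
\]
which is precisely the hypothesis needed to invoke Lemma~\ref{lemma:chaining} with $d_2(\bm{u},\bm{v}) = c'\sigma^2\norm{\bm{u}-\bm{v}}_2/\sqrt{T}$ and $d_1(\bm{u},\bm{v}) = c'\sigma^2\norm{\bm{u}-\bm{v}}_2/T$.

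Then I would apply Lemma~\ref{lemma:chaining} (handling a base point $\bm{u}_0\in\bm{\Phi}$ by a single-point Bernstein bound on $|W(\bm{u}_0)|$) to get, for any $u>0$,
\[
\sup_{\bm{u}\in\bm{\Phi}}|W(\bm{u})|\leq C\big(\gamma_2(\bm{\Phi},d_2)+\gamma_1(\bm{\Phi},d_1)+u\,\mathrm{Diam}_2(\bm{\Phi})+u^2\,\mathrm{Diam}_1(\bm{\Phi})\big)
\]
with probability at least $1-2\exp(-u^2)$. Using $\gamma_2(\bm{\Phi},\norm{\cdot}_2)\lesssim\omega(\bm{\Phi})$ and $\gamma_1(\bm{\Phi},\norm{\cdot}_2)\lesssim\omega^2(\bm{\Phi})$ (Theorem~8.6.1 of \cite{vershynin2019high} together with Eq.(46) and Lemma~2.7 of \cite{melnyk2016estimating}, as already used in Lemmas~\ref{lemma:AR-RSC}--\ref{lemma:AR-DB}) gives $\gamma_2(\bm{\Phi},d_2)\lesssim\sigma^2\omega(\bm{\Phi})/\sqrt{T}$ and $\gamma_1(\bm{\Phi},d_1)\lesssim\sigma^2\omega^2(\bm{\Phi})/T$, while $\mathrm{Diam}_2(\bm{\Phi})\leq 2c'\sigma^2/\sqrt{T}$ and $\mathrm{Diam}_1(\bm{\Phi})\leq 2c'\sigma^2/T$ because $\bm{\Phi}$ is contained in the unit sphere. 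Taking $u = \theta\sqrt{T}/\sigma^2$ for a small enough constant $\theta$ makes the two diameter terms a small multiple of $\theta$, and under $T\gtrsim\sigma^4\omega^2(\bm{\Phi})$ with the implied constant chosen large enough (allowed to depend on $c_x,C_x$) the two $\gamma$-terms become small multiples of $c_x$; choosing the constants appropriately forces $\sup_{\bm{u}\in\bm{\Phi}}|W(\bm{u})|\leq\tfrac{15}{16}c_x$ on an event of probability at least $1-2\exp(-\theta^2 T/\sigma^4)\geq 1-\exp(-\eta T/\sigma^4)$ for a suitable $\eta>0$, which combined with the first step finishes the proof. The main obstacle is exactly this \emph{uniform} control over an arbitrary subset $\bm{\Phi}$ of the sphere rather than a pointwise bound: it is what forces the generic-chaining route and the bookkeeping of the mixed sub-Gaussian/sub-exponential increments (squares, equivalently products, of sub-Gaussian linear forms being sub-exponential), and it requires the chaining tail parameter $u$ to be tuned so that the diameter terms stay negligible while the exceptional probability remains of order $\exp(-\eta T/\sigma^4)$.
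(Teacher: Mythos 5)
Your proof is essentially correct, but it follows a genuinely different route from the paper's. The paper does not chain on the raw quadratic form $\bm{u}^\top(\widehat{\bm{\Sigma}}_x-\bm{\Sigma}_x)\bm{u}$; instead it normalizes, forming the function class $f_{\bm{u}}(\cdot)=\langle\cdot,\bm{u}\rangle/\sqrt{\bm{u}^\top\bm{\Sigma}_x\bm{u}}$, verifies that this class sits in the unit $L_2$-sphere with $\psi_2$-norms uniformly bounded by $\sigma$, bounds $\gamma_2$ of the class by $\sigma\,\omega(\bm{\Phi})$, and then invokes Theorem 10 of Banerjee et al.\ (2015) as a black box to get the \emph{relative}-error bound $\sup_{\bm{u}}|T^{-1}\sum_t\langle\bm{x}_t,\bm{u}\rangle^2/(\bm{u}^\top\bm{\Sigma}_x\bm{u})-1|\le 1/2$, from which the stated absolute bounds follow by multiplying through by $\bm{u}^\top\bm{\Sigma}_x\bm{u}\in[c_x,C_x]$. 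Your approach is more self-contained (it reuses Lemma \ref{lemma:chaining} plus Bernstein for products of sub-Gaussian marginals, and your base-point fix for $0\notin\bm{\Phi}$ is the right one), but it buys this at a real cost: because you control the \emph{absolute} deviation down to the level $\tfrac{15}{16}c_x$ while the natural scale of your increments is $\sigma^2$, both the implied constant in $T\gtrsim\sigma^4\omega^2(\bm{\Phi})$ and the exponent $\eta$ (through your choice $\theta\lesssim c_x/\sigma^2$, giving $\eta\asymp\theta^2$) inherit a dependence on $c_x$ and $C_x$. You flag this, but note that Assumption \ref{assump:reg-input} explicitly allows $c_x\to 0$ and $C_x\to\infty$ with the dimensions, and the lemma is later combined with a union bound over $Q$ rows in Lemma \ref{lemma:RSCregression}, where an absolute $\eta$ is what lets the $\log Q$ term be absorbed at the stated sample size. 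The paper's normalization trick is precisely what makes the condition $T\gtrsim\sigma^4\omega^2(\bm{\Phi})$ and the constant $\eta$ free of $c_x,C_x$; to match that strength with your argument you would need to chain the normalized process (equivalently, measure increments in the $\bm{\Sigma}_x$-induced metric), which is exactly the step the Banerjee et al.\ result packages for you.
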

\begin{proof}
	This lemma is a modification of Theorem 6 in \cite{banerjee2015estimation}. For simplicity of notations, let $\bm{x}_0$ be i.i.d. as $\bm{x}_t$ for any $1 \leq t \leq T$, and thus $\bm{x}_0$ also satisfies Assumption \ref{assump:reg-input}. To apply Theorem 10 of \cite{banerjee2015estimation}, we consider the following class of functions:
	\begin{equation*}
		F = \left\{
		f_{\bm{u}}, \bm{u} \in \bm{\Phi}: 
		f_{\bm{u}}(\cdot) = \frac{1}{\sqrt{\bm{u}^\top \bm{\Sigma}_x \bm{u}}} 
		\left\langle \cdot, \bm{u} \right\rangle
		\right\}.
	\end{equation*}
	Since for any $f_{\bm{u}} \in F$,
	\[
	\vertiii{f_{\bm{u}}}_{L_2}^2 = \frac{1}{\bm{u}^\top \bm{\Sigma}_x \bm{u}} \mathbb{E}[\bm{u}^\top\bm{x}_0\bm{x}_0^\top\bm{u}] = 1,
	\]
	then $F$ is a subset of the unit sphere, i.e., $F \subseteq S_{L_2}$.
	
	Next, we provide the upper bound for $\sup_{f_{\bm{u}} \in F} \vertiii{f_{\bm{u}}}_{\psi_2} = \sup_{\bm{u} \in \bm{\Phi}} \vertiii{\left\langle \bm{x}_0, \bm{u} \right\rangle / \sqrt{\bm{u}^\top \bm{\Sigma}_x \bm{u}}}_{\psi_2} $. Let $\widetilde{\bm{x}}_0 = \bm{\Sigma}_x^{-1/2}\bm{x_0}$, then $\widetilde{\bm{x}}_0$ is isotropic with $\vertiii{\widetilde{\bm{x}}_0}_{\psi_2} \leq \sigma$. Hence
	\[
	\left\langle \bm{x}_0, \bm{u} \right\rangle = 
	\left\langle \widetilde{\bm{x}}_0 , \bm{\Sigma}_x^{1/2}\bm{u} \right\rangle = \sqrt{\bm{u}^\top \bm{\Sigma}_x \bm{u}} 
	\left\langle \widetilde{\bm{x}}_0 , \frac{\bm{\Sigma}_x^{1/2}\bm{u}}{\norm{\bm{\Sigma}_x^{1/2} \bm{u}}}_2  \right\rangle
	\]
	and
	\begin{equation*}
		\sup_{f_{\bm{u}} \in F} \vertiii{ f_{\bm{u}}}_{\psi_2} = \sup_{\bm{u} \in \bm{\Phi}}
		\vertiii{	\frac{1}{\sqrt{\bm{u}^\top \bm{\Sigma}_x \bm{u}}}
			\left\langle \bm{x}_0, \bm{u} \right\rangle }_{\psi_2}  = 
		\sup_{\bm{u} \in \bm{\Phi}}\vertiii{\left\langle \widetilde{\bm{x}}_0 , \frac{\bm{\Sigma}_x^{1/2}\bm{u}}{\norm{\bm{\Sigma}_x^{1/2} \bm{u}}}_2  \right\rangle}_{\psi_2}
		\leq 
		\vertiii{\widetilde{\bm{x}}_0}_{\psi_2} \leq \sigma.
	\end{equation*}
	This leads to the following upper bounds:
	\[
	\gamma_2(F \cap S_{L_2}, \vertiii{\cdot}_{\psi_2}) \leq 
	\sigma 	\gamma_2(F \cap S_{L_2}, \vertiii{\cdot}_{L_2}) \leq
	c_1\sigma \omega(\bm{\Phi}).
	\]
	In the context of Theorem 10 from \cite{banerjee2015estimation}, let $\sqrt{T} \geq 2c_1c_2\sigma^2 \omega(\bm{\Phi})$ and $\theta = 1/2$, then the condition 
	\[
	\theta \sqrt{T} \geq c_1c_2\sigma^2 \omega(\bm{\Phi}) \geq c_2\sigma \gamma_2(F \cap S_{L_2}, \vertiii{\cdot}_{\psi_2})
	\] 
	is satisfied. Therefore, by Theorem 10 from \cite{banerjee2015estimation}, we have
	\[
	\sup_{\bm{u} \in \bm{\Phi}}
	\left| 
	\frac{1}{T} \frac{1}{\bm{u}^\top \bm{\Sigma}_x \bm{u}}
	\sum_{t=1}^{T} \left\langle \bm{x}_t, \bm{u} \right\rangle^2 - 1 
	\right| \leq \frac{1}{2}
	\]
	with probability at least $1 - \exp\left\{-\eta T / \sigma^4\right\}$, where $\eta$ is some positive constant.
	As a result, we have
	\begin{equation}
		\sup_{\bm{u} \in \bm{\Phi}}
		\left(
		\frac{1}{T} \frac{1}{\bm{u}^\top \bm{\Sigma}_x \bm{u}}
		\sum_{t=1}^{T} \left\langle \bm{x}_t, \bm{u} \right\rangle^2 - 1 
		\right) \leq \frac{1}{2}
		\quad \text{and} \quad
		\sup_{\bm{u} \in \bm{\Phi}}
		\left( 1 - 
		\frac{1}{T} \frac{1}{\bm{u}^\top \bm{\Sigma}_x \bm{u}}
		\sum_{t=1}^{T} \left\langle \bm{x}_t, \bm{u} \right\rangle^2 
		\right) \leq \frac{1}{2}. \label{eq:rsc-general-bounds}
	\end{equation}
	
	By Assumption 1, we have $c_x \leq \bm{u}^\top \bm{\Sigma}_x \bm{u} \leq C_x$ for any $\bm{u} \in S^{\prod_{j=1}^{m}p_j - 1}$. Then from \eqref{eq:rsc-general-bounds}, we can get
	\[
	\frac{1}{C_x} \sup_{\bm{u} \in \bm{\Phi}} \frac{1}{T} \sum_{t=1}^T \norm{\bm{u}^\top \bm{x}_t}_2^2
	\leq  
	\sup_{\bm{u} \in \bm{\Phi}}
	\frac{1}{T} \frac{1}{\bm{u}^\top \bm{\Sigma}_x \bm{u}}
	\sum_{t=1}^{T} \left\langle \bm{x}_t, \bm{u} \right\rangle^2
	\leq \frac{3}{2} < 4
	\]
	and 
	\[
	\frac{1}{c_x} \inf_{\bm{u} \in \bm{\Phi}} \frac{1}{T} \sum_{t=1}^T \norm{\bm{u}^\top \bm{x}_t}_2^2
	\geq
	\inf_{\bm{u} \in \bm{\Phi}}
	\frac{1}{T} \frac{1}{\bm{u}^\top \bm{\Sigma}_x \bm{u}}
	\sum_{t=1}^{T} \left\langle \bm{x}_t, \bm{u} \right\rangle^2
	\geq \frac{1}{2} > \frac{1}{16}.
	\]
	Combining the two, we have when $T \gtrsim \sigma^4 \omega^2(\bm{\Phi})$,
	\[
	\frac{1}{16} c_x \leq \inf_{\bm{u} \in \bm{\Phi}} \frac{1}{T} \sum_{t=1}^T \norm{\bm{u}^\top \bm{x}_t}_2^2
	\leq \sup_{\bm{u} \in \bm{\Phi}} \frac{1}{T} \sum_{t=1}^T \norm{\bm{u}^\top \bm{x}_t}_2^2
	\leq 4C_x,
	\]
	with probability at least $1 - \exp\left\{-\eta T / \sigma^4\right\}$.
\end{proof}

\begin{lemma}[\textbf{Chaining}]\label{lemma:chaining} 
	Suppose the random process $\bm{W}(\bm{\Delta})_{\bm{\Delta}\in\bm{\Xi}'}$ has a mixed tail that for any $\bm{\Delta}_i, \bm{\Delta}_j \in \bm{\Xi}'$ and all $u > 0$,
	\[
	\mathbb{P}\left\{\left|\bm{W}(\bm{\Delta}_i) - \bm{W}(\bm{\Delta}_j)\right|\geq u\right\}\notag \leq 2\exp\left\{-\min\left(\frac{u^2}{d_2(\bm{\Delta}_i, \bm{\Delta}_j )^2}, \frac{u}{d_1(\bm{\Delta}_i, \bm{\Delta}_j )}\right)\right\},
	\]
	then we have
	\[
	\mathbb{P}\left\{\sup_{\bm{\Delta}\in\bm{\Xi}'}\left|\bm{W}(\bm{\Delta})\right| > C\left(\gamma_2(\bm{\Xi}',d_2) + \gamma_1(\bm{\Xi}',d_1) + u\mathrm{Diam}_2({\bm{\Xi}'}) + u^2\mathrm{Diam}_1(\bm{\Xi}')\right) \right\}\notag \leq 2\exp(-u^2),
	\]
	where $\mathrm{Diam}_2({\bm{\Xi}'})$ and $\mathrm{Diam}_1(\bm{\Xi}')$ are the diameters of $\bm{\Xi}'$ with respect to the semi-metric $d_2$ and $d_1$, respectively.
\end{lemma}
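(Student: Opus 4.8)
The plan is to recognize Lemma~\ref{lemma:chaining} as the generic chaining tail bound for stochastic processes with \emph{mixed} increments (sub-Gaussian at small scales, governed by $d_2$, and sub-exponential at large scales, governed by $d_1$), a result of Dirksen type; see also Chapter~8 of \cite{vershynin2019high} for the chaining machinery. Under the substitution $v=u^2$, the asserted high-probability term $u\,\mathrm{Diam}_2(\bm{\Xi}') + u^2\,\mathrm{Diam}_1(\bm{\Xi}')$ reads $\sqrt{v}\,\mathrm{Diam}_2(\bm{\Xi}') + v\,\mathrm{Diam}_1(\bm{\Xi}')$ with tail $e^{-v}$, which is exactly the diameter-level deviation that such a bound delivers. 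One preliminary reduction: $\sup_{\bm{\Delta}}|\bm{W}(\bm{\Delta})| \le \sup_{\bm{\Delta},\bm{\Delta}'}|\bm{W}(\bm{\Delta})-\bm{W}(\bm{\Delta}')| + |\bm{W}(\bm{\Delta}_0)|$ for a fixed reference $\bm{\Delta}_0 \in \bm{\Xi}'$, and $|\bm{W}(\bm{\Delta}_0)|$ is handled by one application of the increment hypothesis (and is identically zero wherever the lemma is invoked, e.g. the deviation-bound lemmas, where $\bm{W}(\bm{0})=0$); so it suffices to bound $\sup_{\bm{\Delta}}|\bm{W}(\bm{\Delta})-\bm{W}(\bm{\Delta}_0)|$.

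For a self-contained proof I would run generic chaining directly. First assume $\bm{\Xi}'$ finite (pass to a finite subset and take limits, using a.s.\ continuity of $\bm{W}$). Choose admissible sequences $(A_k)$ and $(B_k)$ of $\bm{\Xi}'$ realizing $\gamma_2(\bm{\Xi}',d_2)$ and $\gamma_1(\bm{\Xi}',d_1)$ up to a factor of two, and merge them into a single admissible sequence via $T_0=\{\bm{\Delta}_0\}$ and $T_k = A_{k-1}\cup B_{k-1}$ for $k\ge1$, which keeps $|T_k|\le 2\cdot2^{2^{k-1}}\le 2^{2^k}$. Let $\pi_k:\bm{\Xi}'\to T_k$ denote chaining maps and telescope $\bm{W}(\bm{\Delta})-\bm{W}(\bm{\Delta}_0) = \sum_{k\ge1}\big(\bm{W}(\pi_k(\bm{\Delta}))-\bm{W}(\pi_{k-1}(\bm{\Delta}))\big)$. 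In all of the paper's applications $d_1$ and $d_2$ are fixed scalar multiples of a common metric (namely $\|\cdot\|_\Fr$ on $\bm{\Xi}'$, up to the powers of $T_0$ and the constants appearing in Lemmas~\ref{lemma:AR-RSC}--\ref{lemma:ar-deviation-sparse}), so a single $\pi_k$ taken to be the nearest point of $T_k$ in that common metric controls both the $d_2$- and the $d_1$-increments simultaneously; the fully general two-metric bookkeeping is where most of the care is spent.

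The probabilistic core is a union bound over chain links. At link $k$ there are at most $|T_k|\,|T_{k-1}|\le 2^{2^{k+1}}$ endpoint pairs $(s,s')$ with $s\in T_k$, $s'\in T_{k-1}$; apply the increment hypothesis to each at the threshold $v_k(s,s') = L\big[(2^{k/2}+u)\,d_2(s,s') + (2^k+u^2)\,d_1(s,s')\big]$. Because $(2^{k/2}+u)^2\ge 2^k+u^2$, one checks $\min\{v_k(s,s')^2/d_2(s,s')^2,\ v_k(s,s')/d_1(s,s')\}\ge L(2^k+u^2)$, so each pair violates its threshold with probability at most $2\exp\{-L(2^k+u^2)\}$, uniformly in the pair. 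Since $2^{2^{k+1}}=\exp\{(2\ln2)2^k\}$, choosing $L$ an absolute constant exceeding $4\ln2$ makes $\sum_{k\ge1}2^{2^{k+1}}\cdot 2\exp\{-L(2^k+u^2)\}$ a convergent geometric series bounded by $2e^{-u^2}$. On the complementary event every link respects its threshold, hence $|\bm{W}(\bm{\Delta})-\bm{W}(\bm{\Delta}_0)|\le L\sum_{k\ge1}\big[(2^{k/2}+u)\,d_2(\pi_k,\pi_{k-1}) + (2^k+u^2)\,d_1(\pi_k,\pi_{k-1})\big]$ for every $\bm{\Delta}$.

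It remains to collapse the right-hand side. Using $d_i(\pi_k,\pi_{k-1})\le d_i(\pi_k,\bm{\Delta})+d_i(\bm{\Delta},\pi_{k-1})$ together with near-optimality of $(A_k)$ and $(B_k)$, the scale-indexed parts $\sum_k 2^{k/2}d_2(\pi_k,\pi_{k-1})$ and $\sum_k 2^k d_1(\pi_k,\pi_{k-1})$ are $O(\gamma_2(\bm{\Xi}',d_2)+\gamma_1(\bm{\Xi}',d_1))$. The $u$-dependent residual $u\sum_k d_2(\pi_k,\pi_{k-1}) + u^2\sum_k d_1(\pi_k,\pi_{k-1})$ must be forced down to $u\,\mathrm{Diam}_2(\bm{\Xi}') + u^2\,\mathrm{Diam}_1(\bm{\Xi}')$ rather than the cheap-but-too-large $u\,\gamma_2 + u^2\,\gamma_1$; this is achieved by restricting the $u$-slack to the coarse scales where $2^{k/2}\lesssim u$ (there the $u$-term is not already dominated by the $2^{k/2}$-term) and bounding $d_i(\pi_k,\pi_{k-1})\le \mathrm{Diam}_i(\bm{\Xi}')$ on those finitely many scales, then summing dyadically. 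Taking the supremum over $\bm{\Delta}$ and adding back $|\bm{W}(\bm{\Delta}_0)|$ yields the claim. The hard part is precisely this last step: keeping the deterministic portion proportional to $\gamma_2+\gamma_1$ (one cannot replace $\sup_{\bm{\Delta}}\sum_k$ by $\sum_k\sup_{\bm{\Delta}}$ without loss) while simultaneously routing the high-probability slack onto the diameters — and, in the general two-metric setting, arranging a single admissible sequence that is near-optimal for both $\gamma$-functionals at once.
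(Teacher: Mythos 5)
The paper's entire proof of this lemma is a one-line citation: it is ``a direct conclusion from Theorem 3.5 of Dirksen (2015)'' combined with the observation that $\bm{0}\in\bm{\Xi}'$ and $\bm{W}(\bm{0})=\bm{0}$, which removes the reference-point term. You identify exactly this — the mixed-tail generic chaining bound of Dirksen type, plus the reduction $\sup_{\bm{\Delta}}|\bm{W}(\bm{\Delta})|\le\sup_{\bm{\Delta}}|\bm{W}(\bm{\Delta})-\bm{W}(\bm{\Delta}_0)|+|\bm{W}(\bm{\Delta}_0)|$ with $\bm{W}(\bm{\Delta}_0)=0$ — so in substance your proposal matches the paper; the difference is that you go on to reconstruct the proof of the cited theorem rather than invoking it.

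Your reconstruction is the standard argument and is correct through the union bound over chain links, but the final step as you describe it has a quantitative slip. If you keep the threshold $v_k(s,s')=L[(2^{k/2}+u)\,d_2(s,s')+(2^k+u^2)\,d_1(s,s')]$ at \emph{every} scale and then, on the good event, bound $d_2(\pi_k,\pi_{k-1})\le\mathrm{Diam}_2(\bm{\Xi}')$ on the roughly $2\log_2 u$ coarse scales with $2^{k/2}\lesssim u$, the $u$-residual sums to order $u\log(e+u)\,\mathrm{Diam}_2(\bm{\Xi}')$, not $u\,\mathrm{Diam}_2(\bm{\Xi}')$; ``summing dyadically'' does not help because the per-link $u$-coefficients do not decay in $k$. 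The fix in Dirksen's proof is to collapse all coarse scales into a \emph{single} link: let $k_u$ be the largest $k$ with $2^{k}\le u^2$, split $\bm{W}(\bm{\Delta})-\bm{W}(\bm{\Delta}_0)$ into the fine-scale chain from $\pi_{k_u}(\bm{\Delta})$ upward (thresholds $L\,2^{k/2}d_2+L\,2^k d_1$, whose failure probability already decays like $e^{-cu^2}$ since $2^k>u^2$ there) plus the one jump $\bm{W}(\pi_{k_u}(\bm{\Delta}))-\bm{W}(\bm{\Delta}_0)$, to which you apply the increment hypothesis once at threshold $L(u\,\mathrm{Diam}_2+u^2\,\mathrm{Diam}_1)$ and union-bound over the $|T_{k_u}|\le 2^{2^{k_u}}\le e^{u^2\ln 2}$ possible endpoints. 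With that replacement your argument closes; alternatively, simply citing Dirksen (2015, Theorem 3.5) as the paper does is sufficient.
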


\begin{proof}
	This is a direct conclusion from Theorem 3.5 of \cite{dirksen2015tail}, and further use the fact that when $\bm{\Delta} = 0$, $\bm{W}(\bm{\Delta})=\bm{0}$ because $\bm{0}$ is an element in the parameter space. 
\end{proof}

\begin{lemma}[\textbf{Martingale inequality}]\label{lemma:martingale} 
	Let $\{\mathcal{F}_t, t\in\mathbb{Z}\}$ be a filtration. Suppose that $\{\bm{w}_t\}$ and $\{\bm{e}_t\}$ are processes taking values in $\mathbb{R}^n$, and for each integer $t$, $\bm{w}_t$ is $\mathcal{F}_{t-1}$-measurable, $\bm{e}_t$ is $\mathcal{F}_{t}$-measurable, and every entry of $\bm{e}_t\mid\mathcal{F}_{t-1}$ is mean-zero and i.i.d. $\kappa^2$-sub-Gaussian distributed. Then for any $\lambda>0$, we have 
	\begin{equation*}
		\mathbb{E}\left[\exp\left(\lambda\sum_{t=1}^{T}\langle\bm{w}_t,\bm{e}_t \rangle\right)\right] \leq \mathbb{E}\left[\exp\left(C\lambda^2\kappa^2\sum_{t=1}^{T}\|\bm{w}_{t}\|_2^{2}\right)\right],
	\end{equation*}
	where $C$ is some positive constants.
\end{lemma}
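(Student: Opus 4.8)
The plan is to establish this moment generating function (MGF) bound through a supermartingale argument followed by a single Cauchy--Schwarz decoupling. Write $S_k=\sum_{t=1}^{k}\langle\bm{w}_t,\bm{e}_t\rangle$ and $V_k=\sum_{t=1}^{k}\|\bm{w}_t\|_2^2$, with $S_0=V_0=0$, so that the claim reads $\mathbb{E}[\exp(\lambda S_T)]\le\mathbb{E}[\exp(C\lambda^2\kappa^2 V_T)]$. The elementary ingredient is a conditional sub-Gaussian MGF inequality: since $\bm{w}_t$ is $\mathcal{F}_{t-1}$-measurable and, given $\mathcal{F}_{t-1}$, the coordinates $(\bm{e}_{tj})_j$ are independent, mean zero and $\kappa^2$-sub-Gaussian, the scalar $\langle\bm{w}_t,\bm{e}_t\rangle=\sum_j \bm{w}_{tj}\bm{e}_{tj}$ is, conditionally on $\mathcal{F}_{t-1}$, a sum of independent mean-zero sub-Gaussians and hence $\kappa^2\|\bm{w}_t\|_2^2$-sub-Gaussian; consequently $\mathbb{E}[\exp(\mu\langle\bm{w}_t,\bm{e}_t\rangle)\mid\mathcal{F}_{t-1}]\le\exp(C_0\mu^2\kappa^2\|\bm{w}_t\|_2^2)$ for every $\mu$, with $C_0$ the absolute constant attached to the sub-Gaussian definition.

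Applying this with $\mu=2\lambda$, I would show that $\widetilde{M}_k:=\exp(2\lambda S_k-4C_0\lambda^2\kappa^2 V_k)$ is a nonnegative $\{\mathcal{F}_k\}$-supermartingale. Indeed $S_{k-1}$, $V_{k-1}$ and $\bm{w}_k$ are all $\mathcal{F}_{k-1}$-measurable, so
\[
\mathbb{E}[\widetilde{M}_k\mid\mathcal{F}_{k-1}]=\widetilde{M}_{k-1}\exp(-4C_0\lambda^2\kappa^2\|\bm{w}_k\|_2^2)\,\mathbb{E}[\exp(2\lambda\langle\bm{w}_k,\bm{e}_k\rangle)\mid\mathcal{F}_{k-1}]\le\widetilde{M}_{k-1},
\]
where the inequality applies the conditional MGF bound with $\mu=2\lambda$ and cancels the deterministic factor $\exp(-4C_0\lambda^2\kappa^2\|\bm{w}_k\|_2^2)$. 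Iterating over $k=T,\dots,1$ and using $\widetilde{M}_0=1$ yields $\mathbb{E}[\widetilde{M}_T]\le1$, a chain of inequalities that remains valid even if some expectations are infinite, since all quantities are nonnegative.

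To recover the stated form I would write $\exp(\lambda S_T)=\widetilde{M}_T^{1/2}\exp(2C_0\lambda^2\kappa^2 V_T)$ and apply Cauchy--Schwarz:
\[
\mathbb{E}[\exp(\lambda S_T)]\le(\mathbb{E}[\widetilde{M}_T])^{1/2}(\mathbb{E}[\exp(4C_0\lambda^2\kappa^2 V_T)])^{1/2}\le(\mathbb{E}[\exp(4C_0\lambda^2\kappa^2 V_T)])^{1/2}.
\]
Since $V_T\ge0$, we have $\exp(4C_0\lambda^2\kappa^2 V_T)\ge1$ pointwise, hence $\mathbb{E}[\exp(4C_0\lambda^2\kappa^2 V_T)]\ge1$, so the square root is dominated by the quantity itself; this gives the conclusion with $C=4C_0$, for all $\lambda>0$, the case of an infinite right-hand side being trivial. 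The one genuinely delicate point this route is designed to bypass is that a naive term-by-term peeling fails: each future $\bm{w}_s$ with $s>t$ is a function of $\bm{e}_t$, so the leftover $\|\bm{w}_s\|_2^2$ contributions stay coupled to the very noise one is trying to integrate out; the supermartingale together with the single Cauchy--Schwarz split sidesteps this coupling at the cost of only an absolute constant factor.
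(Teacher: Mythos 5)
Your proof is correct, and it takes a genuinely different route from the paper's. The paper argues by the tower rule: condition on $\mathcal{F}_{T-1}$, pull out the $\mathcal{F}_{T-1}$-measurable partial sum, replace $\mathbb{E}[\exp(\lambda\langle\bm{w}_T,\bm{e}_T\rangle)\mid\mathcal{F}_{T-1}]$ by $\exp(c'\lambda^2\kappa^2\|\bm{w}_T\|_2^2)$ using the same conditional sub-Gaussian MGF bound you use, and then states that "induction on $t$ gives the conclusion." You instead build the compensated exponential supermartingale $\widetilde{M}_k=\exp(2\lambda S_k-4C_0\lambda^2\kappa^2V_k)$ (which works because $V_k$ is $\mathcal{F}_{k-1}$-measurable), obtain $\mathbb{E}[\widetilde{M}_T]\leq1$, and decouple with a single Cauchy--Schwarz, at the price of a factor of $4$ in the constant. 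The coupling obstruction you flag at the end is real and is precisely what the paper's one-line induction glosses over: after the first peeling step the accumulated term $\exp(c'\lambda^2\kappa^2\|\bm{w}_T\|_2^2)$ is only $\mathcal{F}_{T-1}$-measurable, so it cannot be pulled out of the next conditional expectation given $\mathcal{F}_{T-2}$, and the subsequent induction steps do not factor cleanly as written. Your supermartingale-plus-decoupling argument is the standard way to make this rigorous, so your version is, if anything, more careful than the paper's; the paper's approach, when it can be justified, yields the bound without the extra constant from Cauchy--Schwarz.
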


\begin{proof}
	By the tower rule, we have
	\begin{align*}
		\mathbb{E}\left[\exp\left(\lambda\sum_{t=1}^{T}\langle\bm{w}_t,\bm{e}_t \rangle\right)\right] 
		&= \mathbb{E}\left[\mathbb{E}\left[ \exp\left(\lambda\sum_{t=1}^{T} \langle\bm{w}_t, \bm{e}_t\rangle \right)\mid\mathcal{F}_{T-1}\right]\right]\\
		&= \mathbb{E}\left[\exp\left(\lambda\sum_{t=1}^{T-1}\langle\bm{w}_t, \bm{e}_t \rangle\right) \mathbb{E}\left[\exp\left(\lambda \langle\bm{w}_{T}, \bm{e}_T \rangle\right) \mid\mathcal{F}_{T-1}\right]\right]\\
		&\leq \mathbb{E}\left[\exp\left(c^\prime\lambda^2\kappa^2\|\bm{w}_{T}\|_2^{2}\right) \mathbb{E}\left[\exp\left(\lambda\sum_{t=1}^{T-1}\langle\bm{w}_t, \bm{e}_t \rangle\right)\right] \right],
	\end{align*}
	where the above inequality follows from the the fact that $\langle \bm{w}_{T},\bm{e}_T\rangle \mid \mathcal{F}_{T-1}$ is mean-zero and $\kappa^2\|\bm{w}_{T}\|^2_2$-sub-Gaussian, since the entries in $\bm{e}_T\mid \mathcal{F}_{t-1}$ is mean-zero and i.i.d. $\kappa^2$-sub-Gaussian. Thus we have $\mathbb{E}\left[ \exp\left(\lambda\langle\bm{w}_{T}, \bm{e}_T \rangle \right) \mid\mathcal{F}_{T-1}\right] \leq \exp\left(c^\prime\lambda^2\kappa^2\|\bm{w}_{T}\|_2^{2}\right)$.
	Then induction on $t$ gives the conclusion.
\end{proof}

\begin{lemma}\label{lemma:exercise} 
	Suppose that $\bm{\xi}$ is a mean zero, sub-gaussian random vector in $\mathbb{R}^n$
	with $\norm{\bm{\xi}}_{\psi_2}\leq \kappa$. Let $\bm{M}$ be an $m\times n$ matrix. We have
	\begin{equation*}
		\mathbb{E}\left[\exp\left(\lambda^2\norm{\bm{M}\bm{\xi}}_2^2\right)\right] \leq \exp\left(C\kappa^2\lambda^2\norm{\bm{M}}_{\mathrm{F}}^2\right)\hspace{10pt}\text{provided } |\lambda|\leq\frac{c}{\kappa\norm{\bm{M}}_{\op}},
	\end{equation*}
	where $C$ and $c$ are some positive constants.
\end{lemma}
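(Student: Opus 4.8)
The plan is to bound the moment generating function of $\norm{\bm{M}\bm{\xi}}_2^2$ by first \emph{linearizing} the quadratic form with an auxiliary Gaussian vector, then invoking only the sub-Gaussianity of $\bm{\xi}$ along a fixed direction, and finally evaluating a Gaussian chaos in closed form.

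First I would introduce $\bm{g}\sim N(\bm{0},\bm{I}_m)$ independent of $\bm{\xi}$ and use the identity $\mathbb{E}_{\bm{g}}\exp(\sqrt{2}\lambda\langle \bm{g},\bm{v}\rangle)=\exp(\lambda^2\norm{\bm{v}}_2^2)$ with $\bm{v}=\bm{M}\bm{\xi}$. Fubini then gives
\[
\mathbb{E}\left[\exp(\lambda^2\norm{\bm{M}\bm{\xi}}_2^2)\right]=\mathbb{E}_{\bm{g}}\,\mathbb{E}_{\bm{\xi}}\exp\!\left(\sqrt{2}\lambda\langle \bm{M}^\top\bm{g},\bm{\xi}\rangle\right).
\]
Because $\norm{\bm{\xi}}_{\psi_2}\le\kappa$ means $\norm{\langle\bm{w},\bm{\xi}\rangle}_{\psi_2}\le\kappa\norm{\bm{w}}_2$ for every fixed $\bm{w}$ and $\bm{\xi}$ is centered, the standard sub-Gaussian MGF bound yields $\mathbb{E}_{\bm{\xi}}\exp(\langle\bm{w},\bm{\xi}\rangle)\le\exp(c_1\kappa^2\norm{\bm{w}}_2^2)$; conditioning on $\bm{g}$ and taking $\bm{w}=\sqrt{2}\lambda\bm{M}^\top\bm{g}$ gives
\[
\mathbb{E}\left[\exp(\lambda^2\norm{\bm{M}\bm{\xi}}_2^2)\right]\le\mathbb{E}_{\bm{g}}\exp\!\left(2c_1\kappa^2\lambda^2\norm{\bm{M}^\top\bm{g}}_2^2\right).
\]

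Next I would diagonalize $\bm{M}\bm{M}^\top$: letting $\sigma_1^2\ge\cdots\ge\sigma_m^2\ge 0$ be its eigenvalues, $\norm{\bm{M}^\top\bm{g}}_2^2$ is distributed as $\sum_{i=1}^m\sigma_i^2 g_i^2$ with $g_i$ i.i.d.\ $N(0,1)$. Writing $s=2c_1\kappa^2\lambda^2$, the chi-square MGF gives $\mathbb{E}_{\bm{g}}\exp(s\norm{\bm{M}^\top\bm{g}}_2^2)=\prod_{i=1}^m(1-2s\sigma_i^2)^{-1/2}$, which is finite once $2s\norm{\bm{M}}_{\op}^2<1$. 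Using $(1-x)^{-1/2}\le e^x$ for $x\in[0,1/2]$ together with $\sum_i\sigma_i^2=\norm{\bm{M}}_{\mathrm{F}}^2$, whenever $4s\norm{\bm{M}}_{\op}^2\le 1$ the product is at most $\exp(2s\norm{\bm{M}}_{\mathrm{F}}^2)$. Unwinding, this reads $\mathbb{E}\exp(\lambda^2\norm{\bm{M}\bm{\xi}}_2^2)\le\exp(4c_1\kappa^2\lambda^2\norm{\bm{M}}_{\mathrm{F}}^2)$ under $8c_1\kappa^2\lambda^2\norm{\bm{M}}_{\op}^2\le 1$, so $C=4c_1$ and $c=(8c_1)^{-1/2}$ complete the proof.

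The hard part here is essentially bookkeeping rather than anything conceptual: one has to make sure the same universal constant $c_1$ propagates consistently so that the admissible range $|\lambda|\le c/(\kappa\norm{\bm{M}}_{\op})$ and the bound $\exp(C\kappa^2\lambda^2\norm{\bm{M}}_{\mathrm{F}}^2)$ use a compatible pair $(C,c)$. An alternative would be to skip the Gaussian device and argue directly that $\norm{\bm{M}\bm{\xi}}_2^2$ is sub-exponential via rotation invariance of the sub-Gaussian norm, but the linearization route above is what gives the clean, simultaneous dependence on $\norm{\bm{M}}_{\mathrm{F}}$ (in the exponent) and $\norm{\bm{M}}_{\op}$ (in the range of $\lambda$), which is exactly what the later lemmas need.
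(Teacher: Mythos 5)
Your proof is correct and follows essentially the same route as the paper's: linearize the quadratic form via an independent Gaussian vector $\bm{g}$, apply Fubini and the sub-Gaussian MGF bound conditionally on $\bm{g}$, and then control the resulting Gaussian quadratic form $\norm{\bm{M}^\top\bm{g}}_2^2$ after diagonalizing $\bm{M}\bm{M}^\top$. The only cosmetic difference is that you evaluate the chi-square MGF explicitly with the bound $(1-x)^{-1/2}\leq e^{x}$ on $[0,1/2]$, whereas the paper invokes the sub-exponential MGF estimate for $g_i^{\prime 2}$ from \cite{vershynin2019high}; both yield the same pair of conditions on $\norm{\bm{M}}_{\op}$ and $\norm{\bm{M}}_{\mathrm{F}}$.
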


\begin{proof}This proof is similar to the proof of Hanson-Wright inequality in Section 6.2 of \cite{vershynin2019high}. 
	Let $\bm{g} \in \mathbb{R}^m$ be a standard Gaussian random vector, i.e. $\bm{g} \sim N(\bm{0}, \bm{I}_m)$,  and independent with $\bm{\xi}$. Note that
	\begin{align*}
		\mathbb{E}\left[\exp\left(\lambda\bm{g}^\top\bm{M}\bm{\xi}\right)\right] &=\mathbb{E}_{\bm{\xi}}\left[\mathbb{E}_{\bm{g}}\left[\exp\left(\lambda\bm{g}^\top\bm{M}\bm{\xi}\right)\mid\bm{\xi}\right]\right] =\mathbb{E}\left[\exp\left(\lambda^2\norm{\bm{M}\bm{\xi}}_2^2/2\right)\right],\\
		\mathbb{E}\left[\exp\left(\lambda\bm{g}^\top\bm{M}\bm{\xi}\right)\right] &=\mathbb{E}_{\bm{g}}\left[\mathbb{E}_{\bm{\xi}}\left[\exp\left(\lambda\bm{g}^\top\bm{M}\bm{\xi}\right)\mid\bm{g}\right]\right] \leq\mathbb{E}\left[\exp\left(C\kappa^2\lambda^2\norm{\bm{M}^\top\bm{g}}_2^2\right)\right],
	\end{align*}
	where $C$ is a positive constant. Comparing these two lines, we get
	\begin{equation*}
		\mathbb{E}\left[\exp\left(\lambda^2\norm{\bm{M}\bm{\xi}}_2^2/2\right)\right] \leq\mathbb{E}\left[\exp\left(C\kappa^2\lambda^2\norm{\bm{M}^\top\bm{g}}_2^2\right)\right].
	\end{equation*}
	Expressing $\bm{M}\bm{M}^\top \in \mathbb{R}^{m \times m}$ through its singular value decomposition $\bm{M}\bm{M}^\top=\sum_{i=1}^{m}s_i^2\bm{u}_i\bm{u}_i^\top$, where $s_i$ are the singular values of $\bm{M}$. We can write
	\[\norm{\bm{M}^\top\bm{g}}_2^2 = \bm{g}^\top\bm{M}\bm{M}^\top\bm{g} = \sum_{i=1}^{m}s_i^2\langle\bm{u}_i,\bm{g}\rangle\langle\bm{u}_i,\bm{g}\rangle.\]
	By rotation invariance of the normal distribution, $\left(\langle\bm{u}_i,\bm{g}\rangle\right)_{i=1}^m$ is still a standard normal random vector, which is denoted by $\bm{g}^\prime$. By independence we have
	\begin{equation*}
		\begin{split}
			\mathbb{E}\left[\exp\left(C\kappa^2\lambda^2\norm{\bm{M}^\top\bm{g}}_2^2\right)\right] &=\prod_{i=1}^{m}\mathbb{E}\left[\exp\left(C\kappa^2\lambda^2s_i^2g_i^{\prime2}\right)\right]\\ &\leq\prod_{i=1}^{m}\exp\left(C'\kappa^2\lambda^2s_i^2\right)\\ &=\exp\left(C'\kappa^2\lambda^2\sum_{i=1}^{n}s_i^2\right)\\ &=\exp\left(C'\kappa^2\lambda^2\norm{\bm{M}}_{\mathrm{F}}^2\right)
		\end{split}
	\end{equation*}
	provided that $\lambda^2\kappa^2 s_i^2  \leq c'$ for all $i=1,2,\ldots,m$, where the inequality follows from Proposition 2.7.1 in \cite{vershynin2019high} for the sub-exponential random variable ${g_i^\prime}^2$. Note that $\max_is_i=\norm{\bm{M}}_{\op}$, so the lemma is proved.
\end{proof}

\stopcontents

\bibliography{CP}

\begin{thebibliography}{}

\bibitem[Agarwal et~al., 2012]{agarwal2012noisy}
Agarwal, A., Negahban, S., and Wainwright, M.~J. (2012).
\newblock Noisy matrix decomposition via convex relaxation: Optimal rates in
  high dimensions.
\newblock {\em The Annals of statistics}, 40(2):1171--1197.

\bibitem[Bahadori et~al., 2014]{bahadori2014fast}
Bahadori, M.~T., Yu, Q.~R., and Liu, Y. (2014).
\newblock Fast multivariate spatio-temporal analysis via low rank tensor
  learning.
\newblock In Ghahramani, Z., Welling, M., Cortes, C., Lawrence, N., and
  Weinberger, K., editors, {\em Advances in Neural Information Processing
  Systems}, volume~27. Curran Associates, Inc.

\bibitem[Bai and Wang, 2016]{bai2016econometric}
Bai, J. and Wang, P. (2016).
\newblock {Econometric Analysis of Large Factor Models}.
\newblock {\em Annual Review of Economics}, 8(1):53--80.

\bibitem[Bai et~al., 2023]{bai2023multiple}
Bai, P., Safikhani, A., and Michailidis, G. (2023).
\newblock Multiple change point detection in reduced rank high dimensional
  vector autoregressive models.
\newblock {\em Journal of the American Statistical Association},
  118(544):2776--2792.

\bibitem[Banerjee et~al., 2015]{banerjee2015estimation}
Banerjee, A., Chen, S., Fazayeli, F., and Sivakumar, V. (2015).
\newblock Estimation with norm regularization.

\bibitem[Basu et~al., 2019]{basu2019low}
Basu, S., Li, X., and Michailidis, G. (2019).
\newblock Low rank and structured modeling of high-dimensional vector
  autoregressions.
\newblock {\em IEEE Transactions on Signal Processing}, 67(5):1207--1222.

\bibitem[Basu and Michailidis, 2015]{basu2015regularized}
Basu, S. and Michailidis, G. (2015).
\newblock Regularized estimation in sparse high-dimensional time series models.
\newblock {\em The Annals of Statistics}, 43:1535--1567.

\bibitem[Bertsimas and Parys, 2020]{bertsimas2020sparse}
Bertsimas, D. and Parys, B.~V. (2020).
\newblock Sparse high-dimensional regression: Exact scalable algorithms and
  phase transitions.
\newblock {\em The Annals of Statistics}, 48(1):pp. 300--323.

\bibitem[Bi et~al., 2018]{bi2018multilayer}
Bi, X., Qu, A., and Shen, X. (2018).
\newblock Multilayer tensor factorization with applications to recommender
  systems.
\newblock {\em The Annals of Statistics}, 46(6B):3308--3333.

\bibitem[Billio et~al., 2024]{billio2024bayesian}
Billio, M., Casarin, R., and Iacopini, M. (2024).
\newblock Bayesian markov-switching tensor regression for time-varying
  networks.
\newblock {\em Journal of the American Statistical Association},
  119(545):109--121.

\bibitem[Cai et~al., 2020]{cai2020uncertainty}
Cai, C., Poor, H.~V., and Chen, Y. (2020).
\newblock Uncertainty quantification for nonconvex tensor completion:
  Confidence intervals, heteroscedasticity and optimality.
\newblock In III, H.~D. and Singh, A., editors, {\em Proceedings of the 37th
  International Conference on Machine Learning}, volume 119 of {\em Proceedings
  of Machine Learning Research}, pages 1271--1282. PMLR.

\bibitem[Cai et~al., 2022]{cai2022provable}
Cai, J.-F., Li, J., and Xia, D. (2022).
\newblock Provable tensor-train format tensor completion by riemannian
  optimization.
\newblock {\em Journal of Machine Learning Research}, 23(123):1--77.

\bibitem[Cai et~al., 2023]{cai2023generalized}
Cai, J.-F., Li, J., and Xia, D. (2023).
\newblock Generalized low-rank plus sparse tensor estimation by fast riemannian
  optimization.
\newblock {\em Journal of the American Statistical Association},
  118(544):2588--2604.

\bibitem[Chen et~al., 2022]{chen2022factor}
Chen, R., Yang, D., and Zhang, C.-H. (2022).
\newblock Factor models for high-dimensional tensor time series.
\newblock {\em Journal of the American Statistical Association},
  117(537):94--116.

\bibitem[Clarke and Van~Gorder, 2003]{clarke2003improving}
Clarke, A.~J. and Van~Gorder, S. (2003).
\newblock Improving el niño prediction using a space-time integration of
  indo-pacific winds and equatorial pacific upper ocean heat content.
\newblock {\em Geophysical Research Letters}, 30(7).

\bibitem[Davis et~al., 2016]{davis2016sparse}
Davis, R.~A., Zang, P., and Zheng, T. (2016).
\newblock Sparse vector autoregressive modeling.
\newblock {\em Journal of Computational and Graphical Statistics},
  25(4):1077--1096.

\bibitem[de~Leeuw et~al., 1976]{de1976additive}
de~Leeuw, J., Young, F.~W., and Takane, Y. (1976).
\newblock Additive structure in qualitative data: An alternating least squares
  method with optimal scaling features.
\newblock {\em Psychometrika}, 41(4):471--503.

\bibitem[Dirksen, 2015]{dirksen2015tail}
Dirksen, S. (2015).
\newblock {Tail bounds via generic chaining}.
\newblock {\em Electronic Journal of Probability}, 20:1 -- 29.

\bibitem[Fan and Li, 2001]{fan2001scad}
Fan, J. and Li, R. (2001).
\newblock Variable selection via nonconcave penalized likelihood and its oracle
  properties.
\newblock {\em Journal of the American Statistical Association},
  96(456):1348--1360.

\bibitem[Feng et~al., 2021]{feng2021brain}
Feng, L., Bi, X., and Zhang, H. (2021).
\newblock Brain regions identified as being associated with verbal reasoning
  through the use of imaging regression via internal variation.
\newblock {\em Journal of the American Statistical Association},
  116(533):144--158.
\newblock PMID: 34955572.

\bibitem[Gahrooei et~al., 2021]{gahrooei2021multiple}
Gahrooei, M.~R., Yan, H., Paynabar, K., and Shi, J. (2021).
\newblock Multiple tensor-on-tensor regression: An approach for modeling
  processes with heterogeneous sources of data.
\newblock {\em Technometrics}, 63(2):147--159.

\bibitem[Gozolchiani et~al., 2011]{gozo2011emergence}
Gozolchiani, A., Havlin, S., and Yamasaki, K. (2011).
\newblock Emergence of el ni\~no as an autonomous component in the climate
  network.
\newblock {\em Phys. Rev. Lett.}, 107:148501.

\bibitem[Guhaniyogi et~al., 2017]{guhaniyogi2017bayesian}
Guhaniyogi, R., Qamar, S., and Dunson, D.~B. (2017).
\newblock Bayesian tensor regression.
\newblock {\em Journal of Machine Learning Research}, 18(79):1--31.

\bibitem[Guo et~al., 2012]{guo2012tensor}
Guo, W., Kotsia, I., and Patras, I. (2012).
\newblock Tensor learning for regression.
\newblock {\em IEEE Transactions on Image Processing}, 21(2):816--827.

\bibitem[Han et~al., 2022]{han2022an}
Han, R., Willett, R., and Zhang, A.~R. (2022).
\newblock {An optimal statistical and computational framework for generalized
  tensor estimation}.
\newblock {\em The Annals of Statistics}, 50(1):1 -- 29.

\bibitem[Han et~al., 2024]{han2024cp}
Han, Y., Yang, D., Zhang, C.-H., and Chen, R. (2024).
\newblock Cp factor model for dynamic tensors.

\bibitem[Hao et~al., 2021]{hao2021sparse}
Hao, B., Wang, B., Wang, P., Zhang, J., Yang, J., and Sun, W.~W. (2021).
\newblock Sparse tensor additive regression.
\newblock {\em Journal of Machine Learning Research}, 22(64):1--43.

\bibitem[Harshman, 1970]{harshman1970foundation}
Harshman, R.~A. (1970).
\newblock {F}oundations of the {P}{A}{R}{A}{F}{A}{C} procedure: {M}odels and
  conditions for an "explanatory" multi-modal factor analysis.
\newblock {\em UCLA Working Papers in Phonetics}, 16:1--84.

\bibitem[Hastie et~al., 2015]{hastie2015statistical}
Hastie, T., Tibshirani, R., and Wainwright, M. (2015).
\newblock {\em Statistical Learning with Sparsity: The Lasso and
  Generalizations}.
\newblock Chapman \& Hall/CRC.

\bibitem[Huang et~al., 2025]{huang2023supervised}
Huang, F., Lu, K., Zheng, Y., and Li, G. (2025).
\newblock Supervised factor modeling for high-dimensional linear time series.

\bibitem[Kolda and Bader, 2009]{kolda2009tensor}
Kolda, T.~G. and Bader, B.~W. (2009).
\newblock Tensor decompositions and applications.
\newblock {\em SIAM Review}, 51(3):455--500.

\bibitem[Li and Zhang, 2021]{li2021tensor}
Li, C. and Zhang, H. (2021).
\newblock {Tensor quantile regression with application to association between
  neuroimages and human intelligence}.
\newblock {\em The Annals of Applied Statistics}, 15(3):1455 -- 1477.

\bibitem[Li and Zhang, 2017]{li2017parsimonious}
Li, L. and Zhang, X. (2017).
\newblock {Parsimonious Tensor Response Regression}.
\newblock {\em Journal of the American Statistical Association},
  112(519):1131--1146.

\bibitem[Li et~al., 2018]{li2018tucker}
Li, X., Xu, D., Zhou, H., and Li, L. (2018).
\newblock Tucker tensor regression and neuroimaging analysis.
\newblock {\em Statistics in biosciences}, 10(3):520--545.

\bibitem[Li and Xiao, 2021]{li2021multi}
Li, Z. and Xiao, H. (2021).
\newblock Multi-linear tensor autoregressive models.

\bibitem[Llosa-Vite and Maitra, 2023]{llosa2023reduced}
Llosa-Vite, C. and Maitra, R. (2023).
\newblock Reduced-rank tensor-on-tensor regression and tensor-variate analysis
  of variance.
\newblock {\em IEEE Transactions on Pattern Analysis and Machine Intelligence},
  45(2):2282--2296.

\bibitem[Lock, 2018]{lock2018tensor}
Lock, E.~F. (2018).
\newblock Tensor-on-tensor regression.
\newblock {\em Journal of Computational and Graphical Statistics},
  27(3):638--647.
\newblock PMID: 30337798.

\bibitem[Lozano et~al., 2009]{lozano2009grouped}
Lozano, A.~C., Abe, N., Liu, Y., and Rosset, S. (2009).
\newblock Grouped graphical granger modeling for gene expression regulatory
  networks discovery.
\newblock {\em Bioinformatics}, 25(12):i110--i118.

\bibitem[Ludescher et~al., 2013]{josef2013improved}
Ludescher, J., Gozolchiani, A., Bogachev, M.~I., Bunde, A., Havlin, S., and
  Schellnhuber, H.~J. (2013).
\newblock Improved el niño forecasting by cooperativity detection.
\newblock {\em Proceedings of the National Academy of Sciences},
  110(29):11742--11745.

\bibitem[McPhaden, 2003]{mc2003tropical}
McPhaden, M.~J. (2003).
\newblock Tropical pacific ocean heat content variations and enso persistence
  barriers.
\newblock {\em Geophysical Research Letters}, 30(9).

\bibitem[Melnyk and Banerjee, 2016]{melnyk2016estimating}
Melnyk, I. and Banerjee, A. (2016).
\newblock Estimating structured vector autoregressive models.
\newblock In {\em Proceedings of the 33rd International Conference on
  International Conference on Machine Learning - Volume 48}, ICML'16, page
  830–839. JMLR.org.

\bibitem[Oseledets, 2011]{oseledets2011tensor}
Oseledets, I.~V. (2011).
\newblock Tensor-train decomposition.
\newblock {\em SIAM Journal on Scientific Computing}, 33(5):2295--2317.

\bibitem[Raskutti et~al., 2019]{raskutti2019convex}
Raskutti, G., Yuan, M., and Chen, H. (2019).
\newblock Convex regularization for high-dimensional multiresponse tensor
  regression.
\newblock {\em The Annals of Statistics}, 47(3):pp. 1554--1584.

\bibitem[Sarachik and Cane, 2010]{sarachik2010the}
Sarachik, E.~S. and Cane, M.~A. (2010).
\newblock {\em The El Niño-Southern Oscillation Phenomenon}.
\newblock Cambridge University Press.

\bibitem[Si et~al., 2024]{si2024efficient}
Si, Y., Zhang, Y., Cai, Y., Liu, C., and Li, G. (2024).
\newblock An efficient tensor regression for high-dimensional data.

\bibitem[Stock and Watson, 2011]{stock2011dynamic}
Stock, J. and Watson, M. (2011).
\newblock {\em Dynamic Factor Models}.
\newblock Oxford University Press, Oxford.

\bibitem[Tibshirani, 1996]{tibshirani1996lasso}
Tibshirani, R. (1996).
\newblock Regression shrinkage and selection via the lasso.
\newblock {\em Journal of the Royal Statistical Society. Series B
  (Methodological)}, 58(1):267--288.

\bibitem[Tucker, 1966]{tucker1966some}
Tucker, L. (1966).
\newblock Some mathematical notes on three-mode factor analysis.
\newblock {\em Psychometrika}, 31(3):279--311.

\bibitem[Vershynin, 2018]{vershynin2019high}
Vershynin, R. (2018).
\newblock {\em High-Dimensional Probability: An Introduction with Applications
  in Data Science}.
\newblock Cambridge Series in Statistical and Probabilistic Mathematics.
  Cambridge University Press.

\bibitem[Wainwright, 2019]{wainwright2019high}
Wainwright, M.~J. (2019).
\newblock {\em High-Dimensional Statistics: A Non-Asymptotic Viewpoint}.
\newblock Cambridge Series in Statistical and Probabilistic Mathematics.
  Cambridge University Press.

\bibitem[Wang et~al., 2024]{wang2024high}
Wang, D., Zheng, Y., and Li, G. (2024).
\newblock High-dimensional low-rank tensor autoregressive time series modeling.
\newblock {\em Journal of Econometrics}, 238(1):105544.

\bibitem[Wang et~al., 2022]{wang2022high}
Wang, D., Zheng, Y., Lian, H., and Li, G. (2022).
\newblock High-dimensional vector autoregressive time series modeling via
  tensor decomposition.
\newblock {\em Journal of the American Statistical Association},
  117(539):1338--1356.

\bibitem[Wilms et~al., 2023]{Wilms2023}
Wilms, I., Basu, S., Bien, J., and Matteson, D.~S. (2023).
\newblock Sparse identification and estimation of large-scale vector
  autoregressive moving averages.
\newblock {\em Journal of the American Statistical Association}, 118:571--582.

\bibitem[Xia et~al., 2022]{xia2022inference}
Xia, D., Zhang, A.~R., and Zhou, Y. (2022).
\newblock {Inference for low-rank tensors—no need to debias}.
\newblock {\em The Annals of Statistics}, 50(2):1220 -- 1245.

\bibitem[Xu et~al., 2018]{xu2018muscat}
Xu, J., Liu, X., Wilson, T., Tan, P.-N., Hatami, P., and Luo, L. (2018).
\newblock Muscat: Multi-scale spatio-temporal learning with application to
  climate modeling.
\newblock In {\em IJCAI}, pages 2912--2918.

\bibitem[Yu and Liu, 2016]{yu2016learn}
Yu, R. and Liu, Y. (2016).
\newblock Learning from multiway data: Simple and efficient tensor regression.
\newblock In Balcan, M.~F. and Weinberger, K.~Q., editors, {\em Proceedings of
  The 33rd International Conference on Machine Learning}, volume~48 of {\em
  Proceedings of Machine Learning Research}, pages 373--381, New York, New
  York, USA. PMLR.

\bibitem[Zhang et~al., 2019]{zhang2019tensor}
Zhang, X., Li, L., Zhou, H., Zhou, Y., and Shen, D. (2019).
\newblock Tensor generalized estimating equations for longitudinal imaging
  analysis.
\newblock {\em Statistica Sinica}, 29(4):1977--2005.

\bibitem[Zhao et~al., 2016]{zhao2016tensorring}
Zhao, Q., Zhou, G., Xie, S., Zhang, L., and Cichocki, A. (2016).
\newblock Tensor ring decomposition.
\newblock {\em CoRR}, abs/1606.05535.

\bibitem[Zheng, 2024]{Zheng2024}
Zheng, Y. (2024).
\newblock An interpretable and efficient infinite-order vector autoregressive
  model for high-dimensional time series.
\newblock {\em Journal of the American Statistical Association}, page To
  appear.

\bibitem[Zheng and Cheng, 2020]{zheng2020finite}
Zheng, Y. and Cheng, G. (2020).
\newblock {Finite-time analysis of vector autoregressive models under linear
  restrictions}.
\newblock {\em Biometrika}, 108(2):469--489.

\bibitem[Zhou et~al., 2013]{zhou2013tensor}
Zhou, H., Li, L., and Zhu, H. (2013).
\newblock Tensor regression with applications in neuroimaging data analysis.
\newblock {\em Journal of the American Statistical Association},
  108(502):540--552.

\bibitem[Zhou and Zhang, 2023]{zhou2023a}
Zhou, L. and Zhang, R.-H. (2023).
\newblock A self-attention–based neural network for three-dimensional
  multivariate modeling and its skillful enso predictions.
\newblock {\em Science Advances}, 9(10):eadf2827.

\end{thebibliography}

\end{document}